\documentclass[%
    reprint,
    superscriptaddress,
    longbibliography,
    nofootinbib,
    amsmath,
    amssymb,
    aps,
    floatfix,
]{revtex4-2}
\usepackage{graphicx}
\usepackage{dcolumn}
\usepackage{bm}
\usepackage[utf8]{inputenc}
\usepackage{acronym}

\usepackage{svg}
\usepackage{dsfont}
\usepackage{amsmath,amsfonts,amssymb,amsthm}
\usepackage{braket}
\usepackage{physics}
\usepackage{tikz}
\usetikzlibrary{quantikz}
\usepackage{acronym}
\usepackage{xspace}
\usepackage{siunitx}
\usepackage{booktabs}
\usepackage[hidelinks]{hyperref}
\usepackage{nicefrac}       
\usepackage{microtype}      
\usepackage{mathtools}
\usepackage{youngtab}
\usepackage{comment}

\usepackage{listings}
\usepackage{todonotes} 
\usepackage{siunitx}
\usepackage{multirow}
\usepackage{tabularx}

\newtheorem{claim}{Claim}
\newtheorem{theorem}{Theorem}
\newtheorem{definition}{Definition}
\newtheorem{proposition}{Proposition}

\newtheorem{lemma}{Lemma}
\newtheorem{example}{Example}

\newcommand{\irrep}{{\rho_\lambda}}
\newcommand{\qhat}{\hat{q}_{\lambda}}
\newcommand{\hhat}{\hat{h}_{\lambda}}
\newcommand{\sn}{\mathbb{S}_n}

\begin{document}

\title{Probabilistic modeling over permutations using quantum computers}
\author{Vasilis Belis$^{*\dagger}$}
\affiliation{Xanadu Quantum Technologies, Toronto, ON, M5G 2C8, Canada}
\thanks{Corresponding author: \href{mailto:vasilis.belis@xanadu.ai}{vasilis.belis@xanadu.ai}}
\author{Giulio Crognaletti$^{\dagger}$}
\affiliation{Department of Physics, University of Trieste, Strada Costiera 11, 34151 Trieste, Italy}
\affiliation{European Organization for Nuclear Research (CERN), CH-1211 Geneva, Switzerland}
\author{Matteo Argenton$^{\dagger}$}
\affiliation{Istituto Nazionale di Fisica Nucleare, Sezione di Ferrara
Via Saragat 1, Ferrara, Italy}
\affiliation{Department of Physics and Earth Science, University of Ferrara, Via Saragat 1, Ferrara, 44122, Italy}
\altaffiliation{Contributed equally.}
\affiliation{European Organization for Nuclear Research (CERN), CH-1211 Geneva, Switzerland}
\author{Michele Grossi} 
\affiliation{European Organization for Nuclear Research (CERN), CH-1211 Geneva, Switzerland}
\author{Maria Schuld}  
\affiliation{Xanadu Quantum Technologies, Toronto, ON, M5G 2C8, Canada}

\date{\today}
\begin{abstract}
    Quantum computers provide a super-exponential speedup for performing a Fourier transform over the symmetric group, an ability for which practical use cases have remained elusive so far.  In this work, we leverage this ability to unlock spectral methods for machine learning over permutation-structured data, which appear in applications such as multi-object tracking and recommendation systems. It has been shown previously that a powerful way of building probabilistic models over permutations is to use the framework of non-Abelian harmonic analysis, as the model's group Fourier spectrum captures the interaction complexity: ``low frequencies'' correspond to low order correlations, and ``high frequencies'' to more complex ones. This can be used to construct a Markov chain model driven by alternating steps of \textit{diffusion} (a group-equivariant convolution) and \textit{conditioning} (a Bayesian update). However, this approach is computationally challenging and hence limited to simple approximations. Here we construct a quantum algorithm that encodes the exact probabilistic model---a classically intractable object---into the amplitudes of a quantum state by making use of the Quantum Fourier Transform (QFT) over the symmetric group. We discuss the scaling, limitations, and practical use of such an approach, which we envision to be a first step towards useful applications of non-Abelian QFTs. 
\end{abstract}

\maketitle

\section{Introduction}\label{sec:intro}
Probabilistic modeling over permutations presents a significant computational challenge central to problems such as identity management in multi-object tracking and preference learning in recommender systems. Mathematically, these problems can be naturally formulated within the framework of non-commutative harmonic analysis, where the statistical structure of the data is intimately linked to the properties of the symmetric group $\sn$~\cite{diaconis1988group}---the group containing all possible orderings of $n$ objects (refer to ~App.~\ref{app:representation_theory} and~App.~\ref{app:irrep_sn} for a primer on group representation theory). The power of the harmonic analysis framework lies in the interpretation of the \textit{Fourier spectrum} of the model, which enables the design of models with favorable learning properties. Here, a ``frequency'' corresponds to the complexity of interactions: low-frequency components encode single-object statistics (first-order marginals) and simple pairwise relationships, while high-frequency components encode intricate multi-body dependencies~\cite{diaconis1988group}. 
The group Fourier-theoretic angle therefore helps to build probabilistic models, as pioneered by Risi Kondor~\cite{pmlr-v2-kondor07a, kondor2007skew, kondor2010fourier, kondor2012multiresolution} and Jonathan Huang~\cite{JMLR:v10:huang09a, huang2007efficient}. 

However, the computational complexity of working with permutations forces truncating the spectrum and retaining only the low-frequency components---a technique known as \textit{band-limiting}---which means that the model only captures lower-order correlations. Even with these approximations, the resource requirements remain severe: to model a probability distribution up to $k$-th order marginals in Fourier space---i.e., capturing correlations among subsets of $k$ objects---requires classical memory and compute time scaling as $\mathcal{O}(n^{2k})$ (cf.~App.~\ref{app:spectrum_stat_interp}). In addition to these limitations, model building and inference has to be performed in Fourier space, as moving between Fourier and direct space with the Fast Fourier Transform (FFT) over $\sn$ would scale with $\mathcal{O}(n!\,n^2)$, and hence super-exponentially~\cite{maslen:1998}. 

\begin{figure*}[t]
    \includegraphics[width=0.8\textwidth]{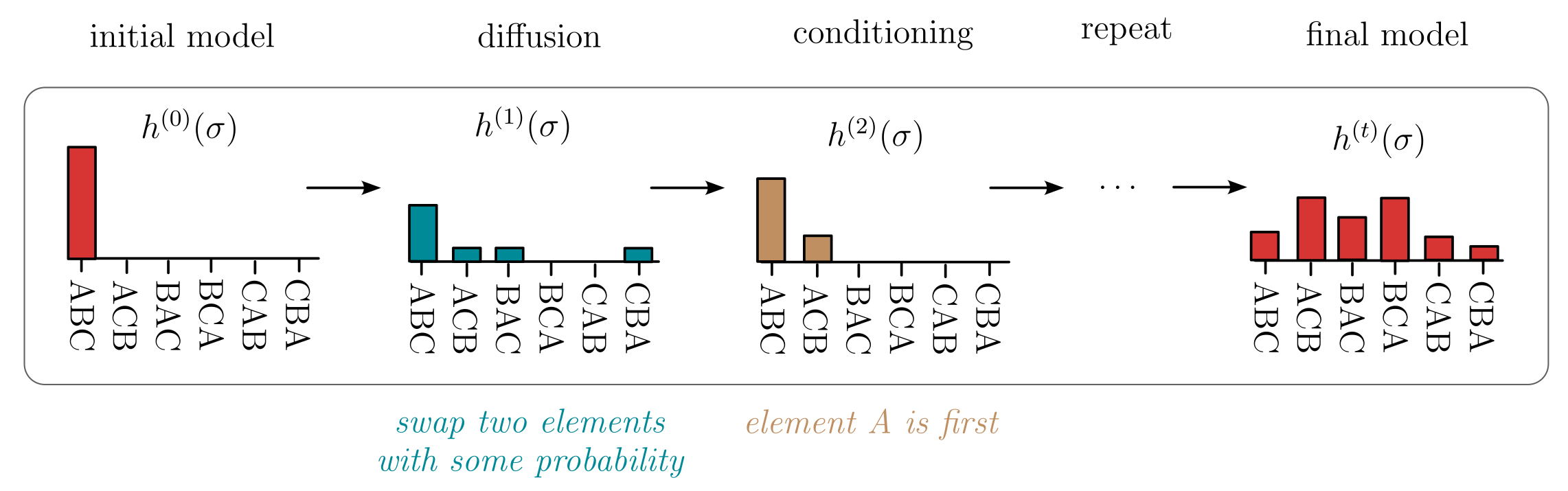}
       \caption{ \textbf{The probabilistic modeling process:} The evolution of the belief state over permutations of $n$ objects is captured by the probability distribution $h^{(t)}(\sigma)$. Initialization ($t=0$) begins with a deterministic assignment (canonical configuration), visualized here for $n=3$ as $ABC \to[1,2,3]$. The process alternates between two operations: \textit{diffusion}, which spreads probability mass (increasing entropy) to model the uncertainty in the system between steps; and \textit{conditioning}, a Bayesian update that refines the distribution based on partial observations that are the training data. The process repeats for each observation in the training data.}
    \label{fig:markov}
\end{figure*}
\begin{figure*}[t]
    \includegraphics[width=0.8\textwidth]{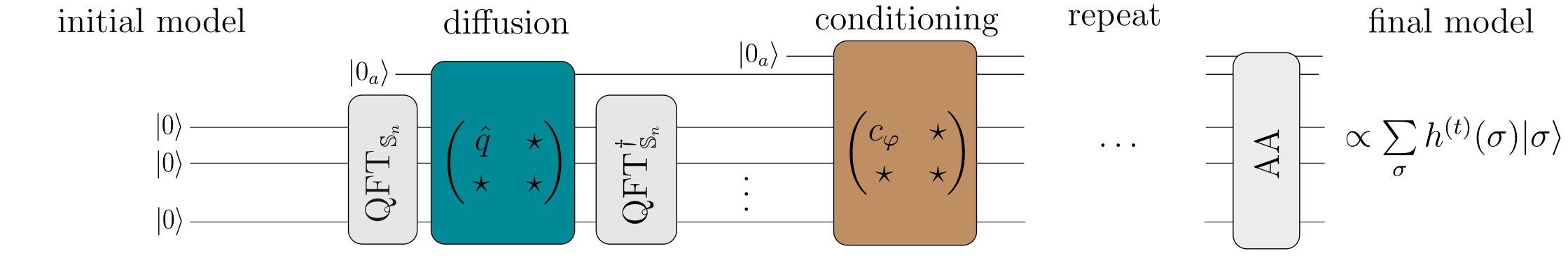}
   \caption{\textbf{Our quantum implementation}: The probability distribution is encoded in the amplitudes of the quantum state $\ket{\psi^{(t)}}\propto\sum_\sigma h^{(t)}(\sigma)\ket{\sigma}$. We use Lehmer's code to map permutations $\sigma\in\sn$ to $n!$ integers, which are then encoded to binary represented by computational basis states $\ket{\sigma}$.
   We implement diffusion in Fourier space by block-encoding an operator $\hat{q}$ (see Sec.~\ref{sec:diffusion:block_encoding}). The conditioning step is implemented by a block-encoding of the operator $c_{\varphi}$ in direct space that exploits Lehmer's code. The Quantum Fourier Transform (QFT) over $\sn$ efficiently toggles between Fourier and direct space. Each of these block-encodings requires a new ancilla. At the end, Amplitude Amplification boosts the probability that the output state encodes the full posterior probability distribution over permutations.}
    \label{fig:qalgo}
\end{figure*}

These two issues, limiting the model's capacity and working only in Fourier space, are elegantly solved on a quantum computer. A quantum state can store and manipulate both the $n!$-dimensional probabilistic model and its Fourier spectrum, and one can move between the two representations efficiently: the Quantum Fourier Transform (QFT) over the symmetric group~\cite{Beals:1997,Moore:2003icv,Kawano:2016} has a runtime of only $\mathcal{O}(n^3\log n)$~\cite{Kawano:2016}. In this work, we aim to understand if this remarkable ability of quantum computers can unlock the group-theoretical framework for probabilistic modeling over permutations. 
To this end, we follow a particular recurrent (i.e., Markovian) learning framework suggested by Kondor and Huang~\cite{JMLR:v10:huang09a, pmlr-v2-kondor07a}, and formulate an equivalent quantum algorithm that, under certain assumptions that we argue are realistic in learning problems, prepares a quantum state encoding a probabilistic model over $\sn$, but without any need for strong band-limiting and approximations. 

Specifically, we encode permutations in computational basis states using $\mathcal{O}(n\log n)$ qubits and propose quantum implementations for \textit{diffusion} and \textit{conditioning}---the two critical operations that drive the evolution of the belief state over object configurations in this Markovian setting (Fig.~\ref{fig:markov}). Using tools such as block-encoding and amplitude amplification (Fig.~\ref{fig:qalgo}) we show that, at least for a small number of diffusion and conditioning steps, and under reasonable assumption with regards to the ground truth distribution, the quantum algorithm runs in time dominated by the runtime of the QFT $\mathcal{O}(n^3\log n)$. While further investigations are needed, in principle this has the potential to yield a super-exponential speedup over classical exact methods for a \emph{useful} task. Beyond algorithm design, we investigate the conditions under which our method remains scalable with $n$ (Sec.~\ref{sec:model}). We discuss how the resulting quantum states can be utilized for downstream tasks (Sec.~\ref{sec:model_use}), such as maximum-a-posteriori estimates, and generative modeling---depending on the specific data encoding strategy---and pinpoint remaining open questions (Sec.~\ref{sec:discussion}). 

We emphasize that our proposal is a first step, or ``feasibility study'', towards practically useful quantum machine learning with spectral methods for $\sn$. Moving forward, numerical experiments will be needed to provide compelling evidence that the resulting machine learning model yields strong performance on real-world datasets, and that data-dependent assumptions are met both in practice and in near-term fault-tolerant quantum computers. Such simulations, however, require efficient classical implementations of the Fast Fourier Transform over $\sn$, which exhibits super-exponential scaling, as well as thoughtfully designed benchmarks to generate meaningful empirical data from necessarily small-scale settings. 

Ultimately, we envision the methods developed here will inspire further algorithmic primitives that harness the efficiency of the $\sn$-QFT. While this transform represents a capability of quantum computers that has seen renewed theoretical interest~\cite{Larocca:2024waq,Bravyi:2023veg}, its connection to practical tasks has so far remained elusive~\cite{shehab2017quantum, Childs:2008shx}.

\section{Fundamentals}
Before introducing our quantum algorithm shown in Figure~\ref{fig:qalgo}, we start off with some background on the (Quantum) Fourier Transform over the symmetric group. While the matrix-valued Fourier coefficients can look intimidating at first, the intuition is analogous to standard harmonic analysis, which can be understood as Fourier transforms of functions over Abelian groups like $(\mathbb{R}, +)$ or $\mathbb{Z}_n$.

\subsection{Harmonic Analysis on the Symmetric Group}
\label{sec:harmonic_analysis}

A function defined on a finite group $h:G\to \mathbb{C}$ can be analyzed using the Group Fourier Transform (GFT). The Fourier coefficients are matrix-valued and given by:
\begin{equation}
    \hat{h}_\rho = \sum_{x\in G} h(x)\rho(x),
    \label{eq:group_fourier_transform}
\end{equation}
where $\rho$ is an irreducible representation (irrep) of $G$ with dimension $d_\rho$. 
The original function is recovered via the \textit{inverse Fourier transform}:
\begin{equation}
    h(x) = \frac{1}{|G|} \sum_{\rho\in\hat{G}} d_\rho \Tr\left(\rho(x)^\dagger \cdot \hat{h}_{\rho}\right),
    \label{eq:inverse_group_fourier_transform}
\end{equation}
where $\hat{G}$ denotes the set of all irreps (up to isomorphism) and we use the unitary property $\rho(x)^\dagger = \rho(x^{-1})$. Unlike classical harmonic analysis on Abelian groups—such as the integers or reals where coefficients are scalars indexed by frequencies—here the coefficients are matrices indexed by representations. In this work we focus on probabilistic modeling over the symmetric group $G=\sn$ (of order $n!$). The irreps of $\sn$ are indexed by partitions of $n$, i.e., tuples of a non-increasing sequence $\lambda = (\lambda_1,\lambda_2,\dots,\lambda_\ell)$ where $|\lambda|\coloneq\sum_i \lambda_i=n$, which are denoted $\lambda \vdash n$. We write the Fourier coefficients as $\hhat$ and assume without loss of generality that the irreps $\rho_\lambda$ are unitary.
For more details on group theory and the representation theory of $\sn$ see App.~\ref{app:representation_theory}.

\paragraph*{Statistical Interpretation.}
Beyond its utility as an algebraic tool, the GFT offers rich statistical insights, a perspective pioneered by Diaconis for the symmetric group~\cite{diaconis1988group}. If $h$ is a probability distribution, the Fourier coefficient $\hhat$ is interpreted as the \textit{expectation value} of the representation matrix: $\mathbb{E}_{\sigma\sim h}[\rho_\lambda(\sigma)]$. Thus, $\hhat$ measures how much of the structure encoded by the specific symmetry $\irrep$ is preserved in the distribution. We explain in App.~\ref{app:spectrum_stat_interp} why this relates to expected patterns such as \textit{item $A$ is in position $3$} (which is a first-order frequency/marginal), or \textit{items $A, C, D$ are in positions $1,3,5$ while $B, E$ are in $2,4$} (which is a higher-order one).

\subsection{The Quantum Fourier Transform}
\label{sec:qft_operator}
A function $h:\sn\to\mathbb{C}$ can be represented as a vector $\ket{h}\in\mathcal{H}$ in a suitably large Hilbert space, and expanded as a superposition state in the computational basis:
\begin{equation}
    \ket{h} = \frac{1}{\sqrt{|\sn|}}\sum_{\sigma\in \sn} h(\sigma)\ket{\sigma}.
\end{equation}
The computational basis $\{\ket{\sigma}\}$ is said to span the \textit{regular representation} of the group. The group action on this space is represented by unitary matrices $R:G\to\mathbb{U}(\mathcal{H})$, which permute basis elements via left-multiplication $R(\tau)\ket{\sigma} = \ket{\tau\sigma}$. 

Because the regular representation is \textit{reducible}, the Hilbert space $\mathcal{H}$ decomposes into a direct sum of invariant subspaces ($G$-modules). A key property of the regular representation is that it contains \textit{every} irrep $\rho_\lambda \in \hat{G}$, appearing with multiplicity equal to its dimension $d_\lambda$. We can explicitly write this decomposition using a tensor product of the representation space $V_\lambda$ and a multiplicity space $\mathbb{C}^{d_\lambda}$:
\begin{equation}
    \mathcal{H} \cong \bigoplus_{\lambda\vdash n} (V_\lambda \otimes \mathbb{C}^{d_\lambda}).
\end{equation}
In this basis, the group action $R(\sigma)$ is block-diagonalized:
\begin{equation} 
    R(\sigma) \cong \bigoplus_{\lambda\vdash n} \left( \rho_\lambda(\sigma) \otimes \mathbb{I}_{d_\lambda} \right). 
\end{equation}
The \textit{Quantum Fourier Transform} is the unitary change-of-basis operator $\mathcal{F}$ that maps the computational basis to this Fourier basis $\ket{\lambda, i, j}$:
\begin{equation}
    \mathcal{F} = \sum_{\sigma\in \sn}\sum_{\lambda\vdash n}\sum_{i,j=1}^{d_\lambda}\sqrt{\frac{d_\lambda}{n!}}\left[\rho_\lambda(\sigma)\right]_{ij}\ket{\lambda ij}\bra{\sigma},
\end{equation}
where the basis states are grouped in a tuple that indexes the irreps $\lambda$ and the dimensions of the corresponding space $V_\lambda$ ($G$-module) indexed by $i$ and $j$.

\paragraph*{Permutation encoding.} To encode the $n!$ possible permutations into computational basis states, we employ the \textit{Lehmer code}, which maps each permutation to a unique integer index (see below). This encoding requires a register of size $\lceil \log_2(n!) \rceil \in \mathcal{O}(n\log n)$ qubits. Crucially, we adopt this specific scheme because it aligns with the factorial number system structure required to implement the efficient QFT over the symmetric group~\cite{Beals:1997,Moore:2003icv,Kawano:2016}.

\paragraph*{QFT Implementation.} Analogous to the construction of the standard QFT over $\mathbb{Z}_N$, efficient quantum circuits for the QFT over $\sn$ utilize the recursive structure of generalized FFT algorithms~\cite{Beals:1997,Moore:2003icv}.
This yields an \textit{exponential quantum speedup}: the classical FFT over the symmetric group scales with $\mathcal{O}(n!\,n^2)$~\cite{maslen:1998}, while the most efficient quantum implementation known to us has a runtime of $\mathcal{O}(n^3\log n)$~\cite{Kawano:2016}.

\section{Constructing the quantum model}\label{sec:model}

In this section, we detail the construction of the quantum model. 
The model can be viewed as a Markov-chain driven by two alternating operations: 
\begin{enumerate} 
\item \textit{Diffusion}: Models the growth of uncertainty in the belief state between conditioning steps. This is implemented as a group-equivariant convolution, akin to operations in Geometric Deep Learning~\cite{bronstein2021geometric}, equivalent to a random walk on the Cayley graph of $\sn$ generated by transpositions---permutations that swap two elements. 
\item \textit{Conditioning}: Updates the belief state via Bayes' rule upon receiving new classical data. This acts as a symmetry-breaking operation---breaking the group equivariance of diffusion---which concentrates probability mass based on new observations (e.g., a sensor detecting object tracks-locations, or a user submitting a new preference ranking). Note that ``observations'' here refer to these classical data inputs that are sequentially loaded, \textit{not} quantum measurements of the state.
\end{enumerate}

The core subroutine enabling the computational efficiency of our algorithm is the Quantum Fourier Transform (QFT) over $\sn$. In particular, it allows us to exploit a fundamental duality: diffusion acts diagonally in Fourier space, while conditioning acts diagonally in direct space. The QFT enables efficient toggling between these two representations.

\paragraph*{Classical limitations.} Classically, exact inference is bottlenecked not only by memory requirements but also by the prohibitive runtime of the Fast Fourier Transform (FFT) required to alternate between these bases~\cite{pmlr-v2-kondor07a,huang2007efficient}. 
Attempting to bypass this toggling bottleneck by performing both diffusion and conditioning entirely in Fourier space is equally intractable. Conditioning in Fourier space requires computing Kronecker (Clebsch-Gordan) coefficients for the symmetric group, a problem\footnote{Computing Kronecker coefficients as well as other representation-theoretic multiplicities is hard in general also for quantum computers~\cite{Bravyi:2023veg,Larocca:2024waq,Ikenmeyer:2023cgl,Christandl:2026clo}.} known to be \#P-hard~\cite{fischer2020computationalcomplexityplethysmcoefficients,Ikenmeyer_2017}---and even determining their positivity is NP-hard~\cite{Ikenmeyer_2017}.
To cope with these limitations, practitioners are forced to resort to coarse, band-limited approximations in Fourier space, sacrificing exactness~\cite{pmlr-v2-kondor07a,huang2007efficient}.

By leveraging the QFT, our algorithm circumvents these classical limitations. Under realistic assumptions, detailed in Sec.~\ref{sec:diffusion} and~\ref{sec:conditioning}, this approach renders the \textit{exact} Markov chain dynamics computationally tractable.

Following the construction of the model, we discuss how the resulting quantum state can be utilized in practice.
Our approach deviates from the options outlined in~\cite{kondor2011non, JMLR:v10:huang09a}, whose inference tasks rely on, and is restricted to, the computation of low-frequency Fourier coefficients. We outline methods for using the state as a generative model for sampling, as an input for downstream classical heuristics, or for tasks requiring the identification of the most likely permutation. 

\subsection{Quantum Modeling of Distributions}
\label{sec:modelling_setup}

Let $h^{(t)}(\sigma)$ denote the probability distribution representing the belief state over permutations at time $t$. We define the corresponding \textit{quantum model} as the state:
\begin{equation}
    \ket{\psi^{(t)}} = \frac{1}{\sqrt{N^{(t)}}}\sum_{\sigma \in \sn} h^{(t)}(\sigma)\ket{\sigma},
    \label{eq:def_quantum_model}
\end{equation}
where $N^{(t)}=\norm{h^{(t)}}_2^2\coloneqq\sum_\sigma |h^{(t)}(\sigma)|^2$ ensures $\ell^2$-normalization of the state.
Applying the QFT operator $\mathcal{F}$ yields the model in Fourier space:
\begin{equation}
    \ket{\hat{\psi}^{(t)}} \coloneqq \mathcal{F}\ket{\psi^{(t)}} 
    = \sum_{\lambda\vdash n}\sum_{i,j=1}^{d_\lambda} \left[\hat{h}^{(t)}_{\lambda}\right]_{ij}\ket{\lambda ij},
    \label{eq:quantum_model_fourier_time_t}
\end{equation}
where the coefficients are given by the unitary-normalized GFT: $\hat{h}^{(t)}_{\lambda} \coloneqq \sqrt{\frac{d_\lambda}{n!}} \sum_\sigma h^{(t)}(\sigma)\rho_\lambda(\sigma)$.

\paragraph*{Initial State.}
One natural choice for an initial state ($t=0$) is a delta-peak distribution concentrated on the identity element $e$, $h^{(0)}(\sigma) = \delta_{\sigma e}$, representing the canonical initial assignment of objects to identities for multi-object tracking or ranking tasks.
In the computational basis, this is simply:
\begin{equation}
    \ket{\psi^{(0)}} = \ket{e} \equiv \ket{0}.
\end{equation}
In Fourier space, this state becomes a superposition\footnote{Akin to the intuition from abelian Fourier analysis and the uncertainty principle, where a localization in one space corresponds to a ``spreading'' in dual space.} of maximally entangled states in each irrep block:
\begin{align}
    \ket{\hat{\psi}^{(0)}} &= \sum_{\lambda\vdash n}\sum_{i,j=1}^{d_\lambda}\sqrt{\frac{d_\lambda}{n!}} \underbrace{\left[\rho_\lambda(e)\right]_{ij}}_{\delta_{ij}}\ket{\lambda ij} \\
    &= \sum_{\lambda\vdash n}\sqrt{\frac{d_\lambda}{n!}} \sum_{i=1}^{d_\lambda} \ket{\lambda ii}.
    \label{eq:init_state}
\end{align}
Note that in Sec.~\ref{sec:model_use}, we present an alternative initial state and show that the main operations of our algorithm (diffusion, conditioning, and QFT) remain the same.

\subsection{Diffusion Operation}\label{sec:diffusion}
We consider a diffusion process induced by a random walk on the Cayley graph of $\sn$ generated by transpositions. 
In this context, the configuration $\sigma^{(t+1)}$ is generated from $\sigma^{(t)}$ by applying a random permutation $\tau^{(t)}$ drawn from a probability distribution $q$: 
\begin{equation}
    \sigma^{(t+1)} = \tau^{(t)}\sigma^{(t)}.
\end{equation}
We call $q$ the \textit{diffusion kernel}.
For a single diffusion step, the time evolution of the probability distribution that models our system is given by convolving the diffusion kernel $q$ with the probability distribution (see derivation in App.~\ref{app:diffusion}):
\begin{align}
    h^{(t+1)}(\sigma^{})&=\left(q\,\star\,h^{(t)}\right)(\sigma)\\
    & \coloneqq \sum_{\tau\in\sn} q(\sigma \,\tau^{-1})\,h^{(t)}(\tau)
\end{align}
Analogously to the abelian case, the convolution operation is a point-wise product in Fourier space; here, a matrix multiplication (see Convolution theorem in App.~\ref{app:representation_theory}):
\begin{equation}
    \hhat^{(t+1)} = \qhat\cdot \hhat^{(t)},
    \label{eq:diffusion_step_fourier_space}
\end{equation}
where $\qhat$ is the Fourier coefficients of the diffusion kernel for the irrep $\lambda$.
Therefore, the updates of the Fourier coefficients of $h$ are local: they depend only on the value of the coefficients of $q$ and $h$ at irrep $\lambda$.
The unitary normalization enforced by quantum computation is given replacing
$
    \hat{h}_{\lambda} \mapsto \sqrt{\frac{d_\lambda}{n!}}\hhat.
$
The convolution theorem is rescaled correspondingly:
\begin{equation}
    \widehat{\left(q\star h\right)}_{\lambda} = \sqrt{\frac{n!}{d_\lambda}} \hat{q}_{\lambda}\cdot \hat{h}_{\lambda},
    \label{eq:conv_fourier_space_unitary}
\end{equation}
where $\hat{q}_{\lambda}$ and $\hat{h}_{\lambda}$ are defined here with their unitary normalization prefactor absorbed. 

We consider the following diffusion kernel,
\begin{equation}
    q(\tau)=
    \begin{cases}
        p\;&\tau=e\\
        (1-p)/\binom{n}{2}\; &\tau\in (2,1^{n-2})\\
        0\;& \text{otherwise}
    \end{cases}
    \label{eq:diffusion_kernel}
\end{equation}
where $p$ is the probability of staying applying the identity permutation, $\binom{n}{2}=n(n-1)/2$ is the number of transpositions, and $(2,1^{n-2})$ denotes the conjugacy class of transpositions\footnote{A fundamental result in the representation theory of finite groups is that the number of conjugacy classes equals the number of irreps. For $\sn$, both are indexed by integer partitions of $n$, allowing us to use the same notation for cycle typees and irreps.}. 
Although simple, this diffusion kernel is ergodic: because transpositions form a generating set of $\sn$, a random walk driven by $q$ will, given sufficient steps, reach every permutation~\cite{diaconis1988group} (equivalently, every node of the Cayley graph), and is therefore able to build complex probability distributions.

Furthermore, $q$ is a \textit{class function}, meaning its value is constant over the conjugacy classes of $\sn$: $q(\sigma \tau \sigma^{-1})=q(\tau)$ for all $\tau,\sigma\in\sn$. By Schur's lemma (cf.~App.~\ref{app:representation_theory}), the Fourier transform of any class function is strictly block-diagonal, evaluating to a scalar multiple of the identity matrix for each irrep $\lambda$:
\begin{equation}
    \qhat=c_\lambda\mathbb{I}_\lambda,
    \label{eq:diffusion_kernel_fourier_transform}
\end{equation}
where $\mathbb{I}_\lambda$ is the $d_\lambda\times d_\lambda$ identity matrix and $c_\lambda\in\mathbb{R}$. 

Consequently, the diffusion process we construct does not mix different irreps. 
Because the kernel commutes with the group action, this diffusion step acts as an isotropic, \textit{group-equivariant convolution}. This perfectly mirrors the equivariant layers foundational to Geometric Deep Learning. 
However, rather than operating on tractable, low-dimensional data structures like graphs or images, the convolution here is applied to a factorially large space, where probability distribution and its Fourier spectrum resides.

Applying the Fourier transform and computing the trace on both sides of Eq.~\eqref{eq:diffusion_kernel_fourier_transform} yields
\begin{equation}
    c_\lambda=p+(1-p)\frac{\chi_\lambda((2,1^{n-2}))}{d_\lambda},
    \label{eq:c_lambdas}
\end{equation}
where $\chi_\lambda((2,1^{n-2}))\coloneq \Tr[\irrep(\sigma)]$ for $\sigma\in(2,1^{n-2})$, is the \textit{character} of irrep $\irrep$ evaluated on transpositions.

To implement multiple consecutive diffusion steps $d$, we convolve $q$ with $h$ exactly $d$ times, denoted $h^{(t+d)}(\sigma)=(q^{\star\, d}\,\star\, h^{(t)})(\sigma)$, or equivalently in Fourier space:
\begin{equation}
    \hhat^{(t+d)} = c_\lambda^{d}\cdot \hhat^{(t)}.
\end{equation}
In practice, the number of diffusion steps $d$ between each conditioning operation can be treated as a hyperparameter that is empirically tuned. 

\subsubsection{Block-encoding}
\label{sec:diffusion:block_encoding}
To efficiently implement the diffusion operation on a quantum computer we utilize the QFT and the Fourier space representation of the convolution operation.
Specifically, we embed the non-unitary diffusion kernel for $d\geq 1$ diffusion steps, expressed in Fourier space (regular representation) as
\begin{equation}
    \hat{q}^d \coloneq \bigoplus_{\lambda\vdash n} c^d_\lambda\mathbb{I}_\lambda,
\end{equation}
into a larger unitary operator $D$, which we call the \textit{diffusion operator}, via block-encoding:
\begin{equation}
    D = \begin{pmatrix}
    \hat{q}^d & \star \\
    \star & \star 
    \end{pmatrix}.
    \label{eq:diffusion_operator_fourier_space}
\end{equation}
No rescaling is required to ensure $D$ is unitary, because the singular values of $\qhat$ satisfy $|c_\lambda|\leq 1$ ( App.~\ref{app:diffusion}).

Applying this unitary to the system (or model) register $S$ and an ancillary register $A$ yields:
\begin{equation}
    D \ket{\hat{\psi}^{(t)}}_S\ket{0}_A = \hat{q}^d\ket{\hat{\psi}^{(t)}}_S\ket{0}_A+\ket{\perp},
\end{equation}
where $\ket{\hat{\psi}^{(t)}}$ is the quantum model in Fourier space at time $t$, $\ket{0}_A$ marks the successful subspace, and $\ket{\perp}$ represents the garbage state accumulated in the orthogonal subspace.

The overall efficiency of block-encoding hinges on the efficient construction of an oracle $O_c$ that loads the diffusion coefficients $c_\lambda^d$ conditioned on the irrep label $\lambda$ stored in the Fourier register. Depending on the problem size $n$, we employ two implementation strategies  (detailed in App.~\ref{app:block_encoding_oracle_implementation}):
\begin{itemize}
    \item \textit{Quantum Read-Only Memory (QROM):} For small to moderate $n$ (e.g. 15-30), we can classically pre-compute coefficients and load them via a lookup table:
    \begin{equation}
        \ket{\lambda}\ket{0}_c \xrightarrow{\text{QROM}} \ket{\lambda}\ket{c_\lambda^d}_c.
    \end{equation}
    \item \textit{Coherent Arithmetic:} For large $n$, where the above method becomes cost-prohibitive, we compute $c_\lambda$ on-the-fly, since closed-form formulas exist (App.~\ref{app:block_encoding_oracle_implementation}), using quantum arithmetic circuits:
    \begin{equation}
        \ket{\lambda}\ket{0}_c\ket{0}_{w} \xrightarrow{\text{Arith.}} \ket{\lambda}\ket{c_\lambda^d}_c\ket{w_\lambda}_{w}.
    \end{equation}
\end{itemize}
Here, we omit the $\ket{ij}$ component of the Fourier basis for brevity; $\ket{0}_c$ is a fixed-point precision register for $c_\lambda^d$, and $\ket{w}$ is a temporary workspace that is subsequently uncomputed.

\paragraph*{Amplitude Encoding via Controlled Rotation.}
Once $c_\lambda^d$ is loaded into the ancillary register $\ket{c_\lambda^d}_c$ with either of the two methods, we encode it into the state's amplitude. We introduce a single-qubit ancilla $\ket{0}_A$ and apply a rotation $R_y(\theta_\lambda)$ controlled by the register $\ket{c_\lambda^d}_c$, where the angle is chosen such that $\cos(\theta_\lambda/2) = c_\lambda^d$:
\begin{equation}
    \ket{\lambda}\ket{c_\lambda^d}_c \ket{0}_A \xrightarrow{\text{C-}R_y} \ket{\lambda}\ket{c_\lambda^d}_c \left( c_\lambda^d \ket{0}_A + \sqrt{1-|c_\lambda^d|^2}\ket{1}_A \right).
\end{equation}
Finally, we uncompute the coefficient register $\ket{c_\lambda^d}_c$ by applying the adjoint of the QROM or arithmetic unitary. 
Conditioned on measuring $\ket{0}_A$, the effective action on the system subspace  is the non-unitary multiplication by the diffusion kernel:
\begin{equation}
    (\bra{0}_A \otimes \mathbb{I}) D (\ket{0}_A \otimes \mathbb{I}) = \sum_\lambda c_\lambda^d \ket{\lambda}\bra{\lambda} \equiv \hat{q}^d.
\end{equation}
This completes the block-encoding of the diffusion operator.

\begin{figure}[t]
    \centering
    \includegraphics[width=0.8\linewidth]{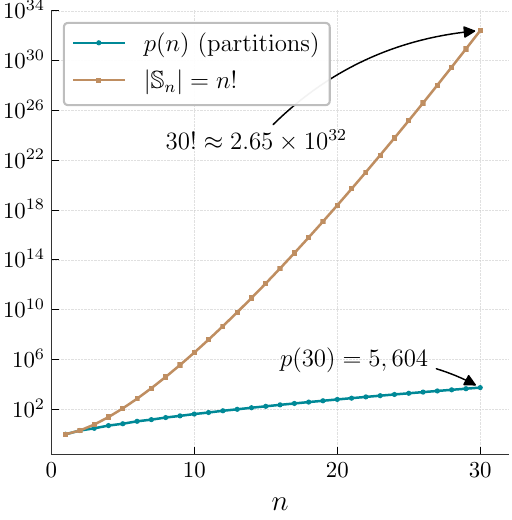}
    \caption{\textit{Scaling of the partitions of $n$ and size of $\sn$}. The number of partitions $p(n)$ is equal to the number of irreps (and conjugacy classes) of $\sn$. While for big $n$, $p(n)$ scales super-polynomially according to the asymptotic formula in Eq.~\eqref{app:eq:ramanujan}, for moderate values of $n$ $p(n)$ is significantly smaller than $n!$. As illustrated, it is possible to classically pre-compute all $c_\lambda$ ($p(n)$ values in total) for moderate $n$ since we have closed formulas, allowing for an efficient block-encoding through QROM.}
    \label{fig:partitions_of_n_vs_n}
\end{figure}

The target state, marked by $\ket{0}_A$, after $d$ consecutive diffusion steps is 
\begin{equation}
    \ket{\hat{\psi}^{(t+d)}} \coloneqq \frac{1}{\norm{\hat{q}^d \ket{\hat{\psi}^{(t)}}}}\sum_{\lambda\vdash n}\sum_{i,j=1}^{d_\lambda} c_\lambda^d\left [\hat{h}^{(t)}_{\lambda}\right]_{ij}\ket{\lambda ij}
\end{equation}
In the following Sec.~\ref{subsec:diffusion_success_prob}, we investigate the scaling of the overlap, or success probability $p_s^{(t)} \coloneq \norm{\hat{q}^d\ket{\hat{\psi}^{(t)}}}^2$ in order to show that our algorithm has a bounded failure probability, and is therefore efficient.

\subsubsection{Scalability: success probability}\label{subsec:diffusion_success_prob}
In addition to the efficiency of the block-encoding implementation which we established above, we investigate the probability of success of each diffusion operation to ensure the scalability of the algorithm. 
Our results are summarized in the following claims.

\begin{claim}[Success probability at $t=0$]\label{claim:ps_t0}
    Let $p_s^{(0)}$ denote the probability of successfully applying the diffusion operator $D$ defined in eq.~\eqref{eq:diffusion_operator_fourier_space} (i.e., measuring $\ket{0}_A$ on the ancillary register).
    For the diffusion kernel defined in eq.~\eqref{eq:diffusion_kernel} the success probability for the initial step is:
    \begin{equation}
        p_s^{(0)} \coloneqq \left\| \hat{q} \ket{\hat{\psi}^{(0)}} \right\|^2 = p^2 + \frac{2(1-p)^2}{n(n-1)}.
        \label{eq:prob_success_init}
    \end{equation}
    (See App.~\ref{app:ps_proof_t0} for the derivation).
\end{claim}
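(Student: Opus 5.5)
The plan is to avoid summing $c_\lambda^2$ over irreps with explicit characters. Instead I would recognise $\hat q=\bigoplus_\lambda c_\lambda\mathbb{I}_\lambda$ as the Fourier-space image of a direct-space operator, and then use unitarity of $\mathcal{F}$ (Parseval) to compute the norm in direct space, where $\ket{\psi^{(0)}}=\ket{e}$ is trivial to handle.

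\textbf{Step 1 (identify the operator).} Recall from Eq.~\eqref{eq:diffusion_kernel_fourier_transform} that $c_\lambda\mathbb{I}_\lambda=\sum_{\tau}q(\tau)\rho_\lambda(\tau)$, which is the non-normalized GFT of the class function $q$. Using the block decomposition $\mathcal{F}R(\tau)\mathcal{F}^\dagger=\bigoplus_\lambda\rho_\lambda(\tau)\otimes\mathbb{I}_{d_\lambda}$ of the regular representation, I would write
\begin{equation}
\mathcal{F}\Big(\sum_{\tau\in\sn}q(\tau)R(\tau)\Big)\mathcal{F}^\dagger=\bigoplus_{\lambda\vdash n}\Big(\sum_\tau q(\tau)\rho_\lambda(\tau)\Big)\otimes\mathbb{I}_{d_\lambda}=\bigoplus_{\lambda\vdash n}c_\lambda\,\mathbb{I}_{d_\lambda^2}=\hat q .
\end{equation}
Here I must check that the factor acting on the multiplicity index is the identity. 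This holds because $\hat q$ is a scalar on each block, so the orientation convention ($i$ versus $j$, left versus right action) does not matter.

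\textbf{Step 2 (move to direct space).} Since $\ket{\hat\psi^{(0)}}=\mathcal{F}\ket{e}$ and $\mathcal{F}$ is unitary,
\begin{equation}
\big\|\hat q\ket{\hat\psi^{(0)}}\big\|=\Big\|\sum_\tau q(\tau)R(\tau)\ket{e}\Big\|=\Big\|\sum_\tau q(\tau)\ket{\tau}\Big\| .
\end{equation}
Thus $p_s^{(0)}=\sum_\tau q(\tau)^2$.

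\textbf{Step 3 (evaluate).} Using Eq.~\eqref{eq:diffusion_kernel}, the identity contributes $p^2$. Each of the $\binom n2$ transpositions contributes $(1-p)^2/\binom n2^2$. Summing gives $p^2+(1-p)^2/\binom n2=p^2+2(1-p)^2/(n(n-1))$, as claimed.

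As a cross-check, one can compute in Fourier space directly. From Eq.~\eqref{eq:init_state}, $p_s^{(0)}=\sum_\lambda c_\lambda^2 d_\lambda^2/n!$. Inserting Eq.~\eqref{eq:c_lambdas} and expanding, the cross term $\sum_\lambda d_\lambda\chi_\lambda((2,1^{n-2}))$ vanishes, since it equals $n!$ times the regular character at a non-identity element. The last term $\sum_\lambda\chi_\lambda((2,1^{n-2}))^2=n!/\binom n2$ follows from column orthogonality of characters, i.e.\ the order of the centralizer of a transposition. These give the same result.

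The main obstacle is Step 1, specifically bookkeeping of normalizations. The paper uses both the unnormalized $\hat q_\lambda$ of Eq.~\eqref{eq:diffusion_kernel_fourier_transform} and the unitary-normalized coefficients of Eq.~\eqref{eq:conv_fourier_space_unitary}. I would therefore make sure that the block-encoded $\hat q$ is the one with entries exactly $c_\lambda$ (eigenvalues bounded by $1$), which is the matrix of $\sum_\tau q(\tau)R(\tau)$ and carries no $\sqrt{d_\lambda/n!}$ factor.
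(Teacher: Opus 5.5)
Your proof is correct, and your main argument takes a different route from the paper's. The paper stays in Fourier space throughout. It writes $p_s^{(0)}=\sum_{\lambda}c_\lambda^2 d_\lambda^2/n!$, expands $c_\lambda^2$ via the character formula, and evaluates three irrep sums separately:
\begin{itemize}
\item $\sum_\lambda d_\lambda^2=n!$;
\item the vanishing of the regular character at a transposition;
\item column orthogonality plus orbit--stabilizer, giving $|C_{\sn}(\tau)|=2(n-2)!$ for a transposition $\tau$.
\end{itemize}
This is exactly your cross-check.

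You instead pull $\hat q$ back to direct space as the convolution operator $\sum_\tau q(\tau)R(\tau)$. Unitarity of $\mathcal{F}$ on the delta state then gives $p_s^{(0)}=\|q\|_2^2$, which reduces to counting. The two routes are tied together by Plancherel applied to $q$ itself, $\sum_\tau q(\tau)^2=\frac{1}{n!}\sum_\lambda d_\lambda^2c_\lambda^2$. The paper's three character identities are what this identity unpacks into for a kernel supported on $e$ and the transpositions.

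Your route needs no character theory, and the direct-space identity $\|\hat q\,\mathcal{F}\ket{e}\|^2=\|q\|_2^2$ does not use that $q$ is a class function. It also extends at once to $d$ steps: $p_s^{(0)}=\|q^{\star d}\|_2^2=q^{\star 2d}(e)$, using $q(\tau^{-1})=q(\tau)$. This is the return probability of the walk after $2d$ steps. It is simply the identity $p_s^{(t)}=\|q^{\star d}\star h^{(t)}\|_2^2/\|h^{(t)}\|_2^2$ from App.~\ref{app:ps_proof_arbitrary}, evaluated at $h^{(0)}=\delta_e$.

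The paper's form instead exhibits $p_s^{(0)}$ as a Plancherel-measure average of $|c_\lambda|^2$, i.e.\ in terms of the diffusion spectrum. That is the viewpoint its later lower bounds via $\min_\lambda|c_\lambda|$ build on.

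Your care with normalization in Step 1 is warranted and you resolve it correctly. The block-encoded operator has entries $c_\lambda$ with no $\sqrt{d_\lambda/n!}$ factor, consistent with Eq.~\eqref{eq:conv_fourier_space_unitary}.
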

Since for the diffusion process at $t=0$ we can compute $p_s$ exactly, we use standard (Grover) amplitude amplification to efficiently amplify the success to nearly 1 using $\mathcal{O}(1/\sqrt{p_s})$ iterations. 
While we cannot compute exactly the success probability at an arbitrary time $t$, where multiple diffusion and conditioning steps have occurred, we prove the following lower bounds.

\begin{claim}[Success probability lower bounds]\label{claim:ps_arbitrary}
    Let $p_s^{(t)} = \norm{\hat{q}^d \ket{\hat{\psi}^{(t)}} }^2$ be the probability of successfully applying $d$ steps of diffusion (eq.~\eqref{eq:diffusion_operator_fourier_space}) at time $t$ to a quantum model of the form defined in eq.~\eqref{eq:quantum_model_fourier_time_t} that has undergone multiple diffusion and conditioning steps. Then we have the following lower bounds.
    \begin{enumerate}
        \item Lazy Walk Regime ($p > 1/2$): The success probability is lower-bounded by a constant independent of $n$:
        \begin{equation}
            p_s^{(t)} \ge (2p - 1)^{2d}.
        \end{equation}
        \item General Rational Walk: For any rational $p = a/b \in (0,1)$ (where $c_\lambda \neq 0$), the success probability is lower-bounded by an inverse polynomial in $n$:
        \begin{equation}
            p_s^{(t)} \ge \left(\frac{4}{b^2 n^4}\right)^d.
        \end{equation}
    \end{enumerate}
    (See App.~\ref{app:ps_proof_arbitrary} for the derivation).
\end{claim}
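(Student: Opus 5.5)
The plan is to reduce both bounds to a uniform lower bound on $\min_\lambda |c_\lambda|$. Since $\hat{q}^d=\bigoplus_\lambda c_\lambda^d\mathbb{I}_\lambda$ is diagonal in the Fourier basis and $\ket{\hat{\psi}^{(t)}}$ is a unit vector, we can write
\begin{equation*}
p_s^{(t)}=\sum_{\lambda\vdash n}|c_\lambda|^{2d}\sum_{i,j}\bigl|[\hat{h}^{(t)}_\lambda]_{ij}\bigr|^2 \ge \Bigl(\min_{\lambda\vdash n}|c_\lambda|\Bigr)^{2d}.
\end{equation*}
This holds no matter which earlier diffusion and conditioning steps produced the state, because the weights $\sum_{ij}|[\hat{h}^{(t)}_\lambda]_{ij}|^2$ are nonnegative and sum to one. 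The claim is therefore purely about the spectrum of the transposition walk, given by Eq.~\eqref{eq:c_lambdas}. We write $r_\lambda\coloneqq\chi_\lambda((2,1^{n-2}))/d_\lambda$ for the normalized character.

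For the lazy regime, we use $|r_\lambda|\le 1$. This holds because $\rho_\lambda(\tau)$ is unitary, so $|\chi_\lambda(\tau)|\le d_\lambda$. It follows that $c_\lambda=p+(1-p)r_\lambda\ge p-(1-p)=2p-1>0$, and also $c_\lambda\le 1$. Hence $|c_\lambda|\ge 2p-1$, and the displayed inequality gives $(2p-1)^{2d}$.

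For the rational case, we invoke Frobenius' closed formula
\begin{equation*}
r_\lambda=\frac{\sum_i\bigl[\lambda_i^2-(2i-1)\lambda_i\bigr]}{n(n-1)}=\frac{2\,(\mathrm{content\ sum\ of\ }\lambda)}{n(n-1)}.
\end{equation*}
This makes $r_\lambda=2k/(n(n-1))$ with $k\in\mathbb{Z}$. Writing $p=a/b$, we get
\begin{equation*}
c_\lambda=\frac{a\,n(n-1)+2k(b-a)}{b\,n(n-1)}.
\end{equation*}
The numerator is an integer, so whenever $c_\lambda\neq0$ it has absolute value at least $1$. This gives $|c_\lambda|\ge 1/(b\,n(n-1))\ge 1/(b n^2)$.

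Squaring yields $|c_\lambda|^2\ge 1/(b^2n^4)$. The stated constant $4/(b^2n^4)$ should follow from a sharper parity argument. The content sum $k$ is an integer, and $n(n-1)$ is even, so the numerator $a\,n(n-1)+2k(b-a)$ is even. Its absolute value is therefore at least $2$ when nonzero. This gives $|c_\lambda|\ge 2/(b n(n-1))\ge 2/(bn^2)$, and hence $|c_\lambda|^{2d}\ge (4/(b^2n^4))^d$.

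The main obstacle is making this integrality and parity argument rigorous. We need to justify the Frobenius/content-sum formula for the normalized character on transpositions, and in particular that $\chi_\lambda(\tau)\,n(n-1)/(2d_\lambda)$ is an integer, namely the content sum. We also need to handle the hypothesis $c_\lambda\ne 0$. If some $c_\lambda=0$, the uniform bound fails for states supported on that block, so the claim must assume that no $c_\lambda$ vanishes. We would cite the standard derivation, for example via Jucys--Murphy elements, whose eigenvalues on Young's basis are contents, to establish the formula.
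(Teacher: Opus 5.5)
Your proposal is correct and follows the paper's proof essentially step for step. You bound $p_s^{(t)}$ below by $\min_\lambda |c_\lambda|^{2d}$ via the convex-combination (Parseval) argument, use $|\chi_\lambda/d_\lambda|\le 1$ in the lazy regime, and use integrality of the content sum in the rational regime. Your parity step is equivalent to the paper writing $c_\lambda = \bigl(a\binom{n}{2} + (b-a)s_\lambda\bigr)/\bigl(b\binom{n}{2}\bigr)$: there $a\binom{n}{2}$ is already an integer, so a nonzero numerator gives $|c_\lambda|\ge 2/(bn(n-1)) > 2/(bn^2)$ directly.
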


\subsection{Conditioning Operation}
\label{sec:conditioning}
We consider conditioning operations that update the quantum model upon data observation by implementing Bayes' rule. In particular, the update of the probability distribution $h^{(t)}(\sigma)$ is performed in \textit{direct space}. Indeed, conditioning in Fourier space requires computing Kronecker (Clebsch-Gordan) coefficients for the symmetric group (App.~\ref{app:sec:conditioning_in_Fourier}), a problem known to be \#P-hard~\cite{fischer2020computationalcomplexityplethysmcoefficients,Ikenmeyer_2017}. Instead in direct space, up to normalization, the Bayes update step is as simple as a point-wise product: 
\begin{equation}
\label{eq:bayes}
    h^{(t+1)}(\sigma) = h^{(t)}(\sigma | \varphi) \propto h^{(t)}(\sigma)\, h(\varphi | \sigma),
\end{equation}
where $\varphi:\sn\to\{0,1\}$ is a data-encoding function flagging with $\varphi(\sigma)=1$ permutations $\sigma$ \emph{consistent} with observations, and $h(\varphi | \sigma)$ is a likelihood function, where we account for uncertainty and/or trust in the data acquisition process. This is modeled by introducing a parameter $s \in (0.5,1]$, quantifying the probability of correctly flagging permutations as ``consistent''. More precisely, we consider
\begin{equation}
\label{eq:def_likelihood}
h(\varphi| \sigma) = \begin{cases}
        s &\text{if}\;\; \varphi(\sigma) = 1\\
        1-s &\text{otherwise}.\\
    \end{cases}
\end{equation}
In this setting, uncertainty is reflected in \emph{keeping} inconsistent permutations in the model, and only reducing their probability by a factor proportional to $1-s$ in the posterior. 
Full trust on the data can be obtained by setting $s=1$, in which case permutations $\sigma_{nc}$ not fully consistent with empirical observations will be removed from the support of the model, i.e. $h^{(t+1)}(\sigma_{nc}) = 0$. Such a condition is often referred to as a \emph{hard} likelihood.

\subsubsection{Block-encoding}

Given a data-encoding function $\varphi$, we implement Bayes' rule on a quantum computer by embedding the operation into a larger, data-dependent unitary operator $C(\varphi)$, which we call the \emph{conditioning operator}, via block-encoding:
\begin{equation}
    C(\varphi) = \begin{pmatrix}
     \,c_\varphi & \star \\
    \star & \star 
    \end{pmatrix},
    \label{eq:bayes_operator_block_encoding}
\end{equation}
where $c_\varphi$ is a linear operator, diagonal in the computational basis, and is defined as 
\begin{equation}
    c_\varphi = \sum_{\sigma \in\sn}h(\varphi|\sigma)\, \ket{\sigma}\bra{\sigma}\,.
\label{eq:bayes_operator_non_unitary}
\end{equation}
The unitary operator $C(\varphi)$ acts now on both the model/system register $S$ and the ancillary register $A$:
\begin{equation}
   C(\varphi) \ket{\psi^{(t)}}_S\ket{0}_A = c_\varphi\ket{\psi^{(t)}}_S\ket{0}_A+\ket{\perp},
\end{equation}
with notation analogous to the previous section.

Note that, the operator $C(\varphi)$ is univocally determined by the function $\varphi$, and does not depend on the current quantum model $\ket{\psi^{(t)}}$. Indeed, as long as the function $\varphi$ can be efficiently computed, the coherent arithmetic strategy outlined in Section \ref{sec:diffusion:block_encoding} can be analogously applied here to block-encode $c_\varphi$. The workflow unfolds as follows: first, we compute coherently $\varphi(\sigma)$ using quantum arithmetic
\begin{equation}
    \ket{\sigma}\ket{0}_c\ket{0}_{w} \xrightarrow{\text{Arith.}} \ket{\sigma}\ket{\varphi(\sigma)}_c\ket{w_\sigma}_{w},
\end{equation}
where $\ket{w}$ is a temporary workspace that is subsequently uncomputed. Then we introduce a single qubit ancilla $\ket{0}_A$, and apply the rotation $R_y(\theta_{s})$ controlled by the state $\ket{\varphi(\sigma)}_c = \ket{1}$, such that $\cos(\theta_s/2) = s$ and subsequently $R_y(\theta_{1-s})$ controlled by the state $\ket{\varphi(\sigma)}_c = \ket{0}$, with analogous definition of $\theta_{1-s}$. This yields the overall transformation
\begin{equation}
\begin{aligned}
    \ket{\sigma}&\ket{\varphi(\sigma)}_c\ket{0}_A \xrightarrow{\text{C-}R_y} \\&\ket{\sigma}\ket{\varphi(\sigma)}_c \left( h(\varphi|\sigma)\, \ket{0}_A + \sqrt{1-h(\varphi|\sigma)^2}\,\ket{1}_A \right).
\end{aligned}
\end{equation}
Finally, we uncompute the registers $\ket{\varphi(\sigma)}_c\ket{w_\sigma}_{w}$ to disentangle them from the system. The effective action on the system subspace, conditioned on measuring $\ket{0}_A$, is the non-unitary application of Bayes rule:
\begin{equation}
    (\bra{0}_A \otimes \mathbb{I}) C(\varphi) (\ket{0}_A \otimes \mathbb{I}) = \sum_{\sigma\in \sn} h(\varphi|\sigma) \ket{\sigma}\bra{\sigma} \equiv c_\varphi.
\end{equation}

While implementing $C(\varphi)$ in this way is already efficient, as we move away from the broad generality of the previous discussion, and tailor our analysis to \emph{common types} of empirical observations, it is possible to significantly reduce the algorithmic cost of implementing $C(\varphi)$. We outline this in the following section. 

\subsubsection{The reorder-update approach}

\begin{figure}[t]
    \centering
    \includegraphics[width=\linewidth]{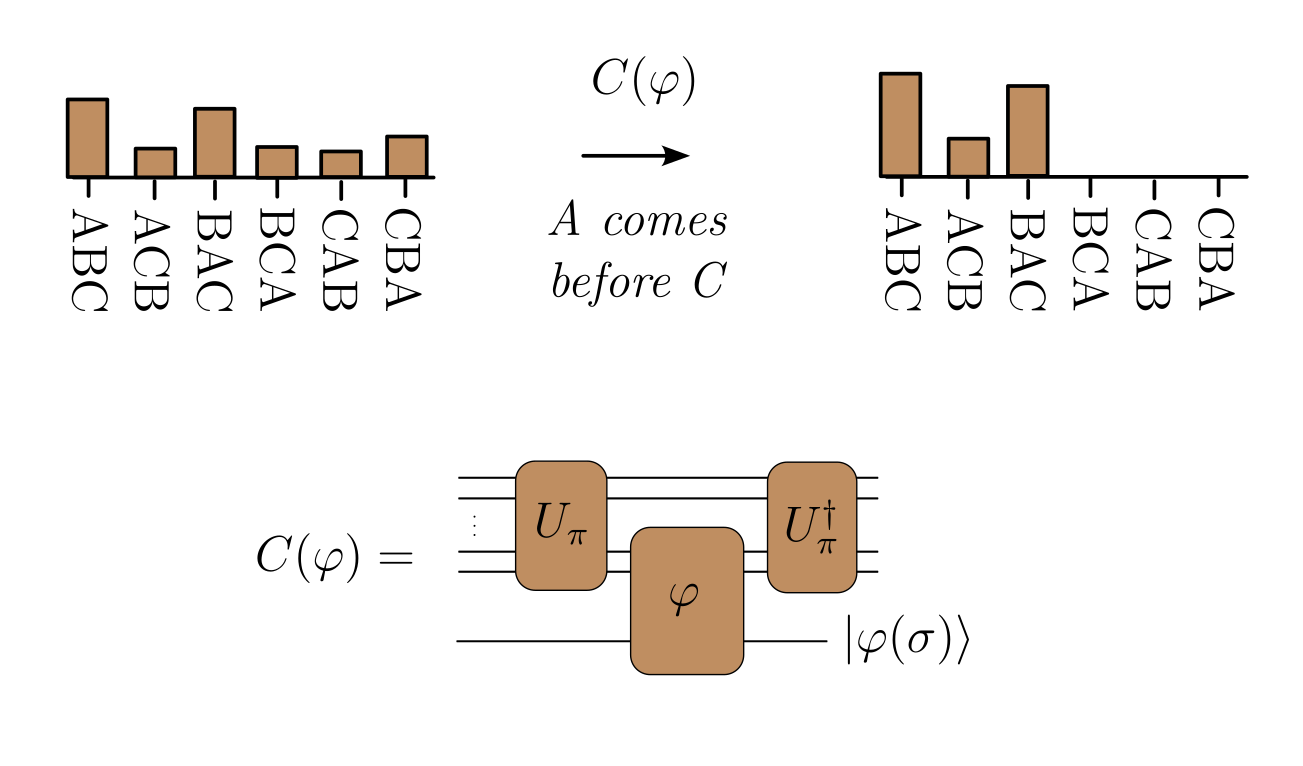}
    \caption{\textbf{Reorder-update approach:} The figure summarizes the reorder-update approach focusing on the example of partial rankings with a hard likelihood. Using the same canonical ordering of Fig.~\ref{fig:markov}, the observation \emph{A comes before C} is an order $k=1$ ranking, comparing element $i_1=1$ and $i_2=3$. Conditioning is achieved reordering the elements by mapping $i_1\to 2$, i.e. choosing $\pi$ s.t. $c(\pi)_1 = 2$, computing $\varphi$ coherently, and finally restoring the canonical ordering. If a hard likelihood is used, the qubit storing $\ket{\varphi(\sigma)}$ can be directly used in the post-selection phase to implement $c_\varphi$, and no additional ancillary registers are required.}
    \label{fig:partial_assigments_conditioning}
\end{figure}

Here, we consider two major types of empirical observations, each fixing different constraints on the structure of the permutations $\sigma$ of $n$ objects.  

On the first type, we observe the position of a subset of $k \ll n$ objects. 
In the language of the example in Fig. \ref{fig:markov}, the observation \emph{element $A$ is first}, is of this kind with $k=1$, as only one element is observed to be in a determined position, regardless of the others. We call this class of observations \emph{partial assignments}, as we partially assign the position of objects.
More generally, if we let $c(\sigma)_i$ refer to the new position of the object ranked $i$ in the canonical ordering after the permutation $\sigma$, and let $j$ be the empirically observed position, such constraint can be expressed as
\begin{equation}
\label{eq:partial_assignment}
    c(\sigma)_{i_1}=j_1 \;\wedge \; c(\sigma)_{i_2}=j_2 \;\wedge \; \dots\; \wedge c(\sigma)_{i_k}=j_k\,,
\end{equation}
so that the corresponding data-encoding function $\varphi$ will evaluate to 1 when Eq.~\eqref{eq:partial_assignment} is satisfied and 0 otherwise. 

On the second class instead, we observe the relative order of a subset of $k+1 \ll n$ objects.
In the language of Fig. \ref{fig:markov}, the observation \emph{element $A$ comes before element $C$}, is a partial ranking with $k=1$, as only the relative order of two elements, $A$ and $C$, is observed, regardless of their position or other elements.
More generally, this can be expressed as 
\begin{equation}
\label{eq:partial_ranking}
    c(\sigma)_{i_1} > c(\sigma)_{i_2} \; \dots\; > c(\sigma)_{i_k} > c(\sigma)_{i_{k+1}}\,,
\end{equation}
with analogous definition of $\varphi$.

While the constraints posed by Eq.~\eqref{eq:partial_assignment} and Eq.~\eqref{eq:partial_ranking} are intuitively understood in the \emph{Cauchy encoding} of permutations $c(\sigma)$, which explicitely stores the permuted positions of objects, our model requires the \emph{Lehmer} encoding $\ell (\sigma)$, which has different properties (see App. \ref{app:sec:conditioning:lehmer}). 

Indeed, naively enforcing a partial assignment for general elemets $i_1,\dots i_k$ in our setting involves bookkeeping operations on all entries $\ell (\sigma)_i$ and thus requires complex controlled operations to be implemented. 
Nonetheless, specific sets of indices $i_1,\dots i_k$ exist for which both constraints listed above can be still locally enforced. As detailed in Appendix \ref{app:sec:conditioning}, for partial assignments, this is holds when all indices are at the beginning of the string, namely $i_l \in \{1,\dots,k\} \forall l$, while for partial ranking when all indices are at the end of the string, namely $i_l \in \{n-k,\dots,n\} \forall l$. 

This leads to our general \emph{reorder-update} approach. First, we implement a unitary transformation $U_\pi$, where the permutation $\pi$ maps all relevant indices to the beginning (or end respectively) of the Lehmer code string. This is achieved by implementing $\ket{\ell(\sigma)} \to U_\pi\ket{\ell(\sigma)} = \ket{\ell(\sigma \pi)}$.
As a second step, we implement $C(\varphi)$ as detailed in the previous section, with the with the difference that now $\varphi$ can be computed involving only a small subset of qubits and without a temporary workspace $\ket{w}$. If we use a \emph{hard} likelihood, $C(\varphi)$ becomes a projector, which is realized upon measurement of $\ket{1}$ on the register containing $\varphi
(\sigma)$, and without the need of further auxiliary qubits or uncomputation steps. 
Finally, $U^\dagger_\pi = U_{\pi^{-1}}$ is applied to the system to restore the canonical ordering of objects.

The overall cost of this procedure is summarized below.

\begin{claim}[Cost of the reorder-update approach] Given an order $k$ constraint of type Eq.~\eqref{eq:partial_assignment} or Eq.~\eqref{eq:partial_ranking}, the reordering unitary $U_\pi$ can be implemented in $\mathcal{O}(kn^2\log n)$ depth, while the computation of $\varphi$ requires at most $\mathcal{O}(k^2\log^2 n)$ depth.
\end{claim}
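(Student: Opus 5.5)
The plan is to split the cost into its two parts: the reordering unitary $U_\pi$, and the coherent evaluation of $\varphi$ once the relevant indices sit at the front (partial assignments) or at the end (partial rankings) of the Lehmer string.

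\textbf{Reordering.} We realize $U_\pi:\ket{\ell(\sigma)}\mapsto\ket{\ell(\sigma\pi)}$ as a product of adjacent transpositions acting on the right. To move the $k$ relevant indices to positions $1,\dots,k$ (or $n-k,\dots,n$), we bubble each index into place separately, which costs at most $n$ adjacent swaps per index. Hence $\pi$ is a word of length $\mathcal{O}(kn)$ in the generators $s_m=(m,m+1)$. It then suffices to show that right multiplication by one $s_m$ can be implemented on the Lehmer register in depth $\mathcal{O}(n\log n)$.

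For this step we use the standard relation between Lehmer codes and inversions. Write $\ell(\sigma)_m=|\{m'>m:\sigma(m')<\sigma(m)\}|$. Right multiplication by $s_m$ swaps the values at positions $m$ and $m+1$. This exchanges the entries $\ell_m,\ell_{m+1}$ and adds or subtracts one at the position of the larger, according to whether $\ell_m\le\ell_{m+1}$; all other entries are unchanged. Each entry is an $\mathcal{O}(\log n)$-bit register, so a comparator, a controlled swap and a controlled increment/decrement cost $\mathcal{O}(\log n)$ depth, provided the mixed-radix (factorial) representation is laid out entrywise.

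The main obstacle is that the encoding actually used is the single integer $\sum_m \ell_m\,(n-m)!$ in binary on $\lceil\log_2 n!\rceil$ qubits. Digits must then be extracted and reinserted by division and multiplication by factorial radices. Converting between the integer and the digit registers is the source of the extra factor of $n$: it takes $\mathcal{O}(n)$ sequential divmod steps of $\mathcal{O}(\log n)$-depth arithmetic on small operands. We therefore expect $\mathcal{O}(n\log n)$ per adjacent swap, which gives
\[
\mathcal{O}(kn)\times\mathcal{O}(n\log n)=\mathcal{O}(kn^2\log n).
\]
If instead the digits are kept unpacked, as the QFT of Kawano et al.\ permits, the same count remains a valid upper bound.

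\textbf{Evaluating $\varphi$.} After reordering, we use App.~\ref{app:sec:conditioning:lehmer}. For a partial assignment on the first $k$ indices, $c(\sigma)_{i}$ for $i\le k$ is determined by $\ell_1,\dots,\ell_i$: $c(\sigma)_i$ is the $(\ell_i+1)$-th smallest value not among $c(\sigma)_1,\dots,c(\sigma)_{i-1}$. Equivalently, we can test the observed constraint directly by checking that each $\ell_i$ equals $j_i$ minus the number of earlier $j_{i'}<j_i$; this is a comparison of $k$ registers against classically precomputed constants, of depth $\mathcal{O}(k\log n)$. For a partial ranking on the last $k+1$ indices, the relative order of those objects is determined solely by their trailing Lehmer digits. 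Verifying the chain requires reconstructing ranks among the last $k+1$ entries, which takes $\mathcal{O}(k^2)$ pairwise comparisons/additions of $\mathcal{O}(\log n)$-bit numbers. Bounding each comparison crudely by $\mathcal{O}(\log^2 n)$, for schoolbook arithmetic plus carries, gives the claimed $\mathcal{O}(k^2\log^2 n)$.

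Finally, we combine a multi-controlled Toffoli over $k$ flag bits, of depth $\mathcal{O}(\log k)$, with the uncomputation, which at most doubles the cost. This establishes both bounds.
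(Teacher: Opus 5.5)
Your skeleton is the paper's. You build $\pi$ from $\mathcal{O}(kn)$ adjacent transpositions; the paper composes the cycles $\pi_t$, each a product of at most $n$ of them. Each transposition is the local two-digit update of Prop.~\ref{prop:update_rule}, and your rule agrees with it. Finally, $\varphi$ is a test against classically precomputed Lehmer digits.

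The difference is in where the costs come from. The paper stores each Lehmer digit in its own register of $\lceil\log_2(n-i+1)\rceil$ qubits (App.~\ref{app:sec:conditioning:lehmer}), so no packed integer ever has to be converted. Its $\mathcal{O}(n\log n)$ per transposition comes from synthesizing the two-digit reversible map \emph{without ancillas}, via a generic $\mathcal{O}(2^{n_{\text{bits}}}n_{\text{bits}})$ bound. Its $\mathcal{O}(k^2\log^2 n)$ for $\varphi$ is the cost of an ancilla-free multi-controlled NOT on fewer than $k\log n$ qubits.

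Your comparator/controlled-swap/increment circuit is the ancilla-assisted alternative the paper only mentions in passing. It costs $\mathcal{O}(\log n)$ per transposition and so yields the stronger bound $\mathcal{O}(kn\log n)$, at the price of workspace. You should add that the comparison bit can be uncomputed: after the update, $\ell'_m\le\ell'_{m+1}$ holds exactly when $\ell_m>\ell_{m+1}$ held before.

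So your ``unpacked'' fallback is the actual setting, and it carries the proof. Drop the packed-integer paragraph rather than rely on it, because its estimate does not hold. Each divmod acts on a dividend of $\Theta(n\log n)$ bits, not on small operands, so $\mathcal{O}(\log n)$ depth per step is unjustified.

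Likewise, for partial rankings no rank reconstruction is needed. Once the $k+1$ objects sit in the last positions, the observed total order fixes $\ell_{n-k},\dots,\ell_{n-1}$ uniquely. Hence $\varphi$ is the same equality test against constants as for partial assignments.
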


All details regarding the specific choice of $\pi$ and the computation of $\varphi$ are given in Appendix \ref{app:sec:conditioning}.

\subsubsection{Scalability: success probability}

In addition to the efficiency of the implementation of $C(\varphi)$ established in the previous section, here we study the success probability  $p_s$ of a general Bayes update step. The results are summarized in the following claim.

\begin{claim}[Success probability]
\label{claim:succ_prob_conditioning}
Let $p_s$ denote the probability of successfully applying the conditioning operator $C(\varphi)$ defined by Eq.~\eqref{eq:bayes_operator_block_encoding}. For a likelihood defined as in Eq.~\eqref{eq:def_likelihood} the success probability is:
    \begin{equation}
\label{eq:succ_prob_conditioning}
        p_s \coloneqq \left\| c_\varphi \ket{\psi^{(t)}} \right\|^2 = h(\varphi)^2\frac{N^{(t+1)}}{N^{(t)}},
    \end{equation}
where $h(\varphi)$ denotes the total proability of generating consistent permutations in the quantum model, and $N^{(t)}$ denote the model normalization at step $t$.  
\end{claim}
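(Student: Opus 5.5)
The plan is a direct computation in direct space, using the definitions of the quantum model in Eq.~\eqref{eq:def_quantum_model}, the block-encoded operator $c_\varphi$ in Eq.~\eqref{eq:bayes_operator_non_unitary}, and Bayes' rule in Eq.~\eqref{eq:bayes}.

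\textbf{Step 1: the raw amplitude.} Since $c_\varphi$ is diagonal in the computational basis, we have
\begin{equation}
    c_\varphi\ket{\psi^{(t)}}=\frac{1}{\sqrt{N^{(t)}}}\sum_\sigma h^{(t)}(\sigma)\,h(\varphi|\sigma)\ket{\sigma}.
\end{equation}
Hence
\begin{equation}
    p_s=\frac{1}{N^{(t)}}\sum_\sigma |h^{(t)}(\sigma)\,h(\varphi|\sigma)|^2 .
\end{equation}

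\textbf{Step 2: fix the Bayes normalization.} The posterior is normalized by the evidence $h(\varphi)\coloneqq\sum_\sigma h^{(t)}(\sigma)\,h(\varphi|\sigma)$, which is the total probability of observing $\varphi$ under the current model. With this, Bayes' rule reads exactly
\begin{equation}
    h^{(t+1)}(\sigma)=\frac{h^{(t)}(\sigma)\,h(\varphi|\sigma)}{h(\varphi)} .
\end{equation}
The main point to get right is this interpretation of $h(\varphi)$ as the Bayesian evidence. With a soft likelihood it equals $s\,\Pr[\text{consistent}]+(1-s)\,\Pr[\text{inconsistent}]$. With a hard likelihood ($s=1$) it reduces to the total probability mass on consistent permutations, matching the wording of the claim. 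The remaining steps are mechanical once the posterior is written with this proportionality constant made explicit.

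\textbf{Step 3: substitute.} Write $h^{(t)}(\sigma)\,h(\varphi|\sigma)=h(\varphi)\,h^{(t+1)}(\sigma)$ inside the sum from Step 1. This gives
\begin{equation}
    p_s=\frac{h(\varphi)^2\sum_\sigma|h^{(t+1)}(\sigma)|^2}{N^{(t)}}=h(\varphi)^2\frac{N^{(t+1)}}{N^{(t)}},
\end{equation}
using $N^{(t+1)}=\|h^{(t+1)}\|_2^2$. This is Eq.~\eqref{eq:succ_prob_conditioning}.

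As a consistency check, the post-selected state $c_\varphi\ket{\psi^{(t)}}/\sqrt{p_s}$ equals $\ket{\psi^{(t+1)}}$. The two $\ell^2$ normalizations $N^{(t)}$ and $N^{(t+1)}$ enter only through this ratio, because the quantum model uses $\ell^2$ normalization while Bayes' rule uses $\ell^1$ normalization.
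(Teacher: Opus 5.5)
Your proposal is correct and follows essentially the same route as the paper: expand $c_\varphi\ket{\psi^{(t)}}$ in the computational basis, then use Bayes' theorem $h^{(t)}(\sigma)\,h(\varphi|\sigma)=h(\varphi)\,h^{(t+1)}(\sigma)$, with $h(\varphi)=s\Pr(\varphi=1)+(1-s)(1-\Pr(\varphi=1))$ the evidence, and read off the $\ell^2$ norm. The only cosmetic difference is ordering: the paper first rewrites the unnormalized post-selected state as $h(\varphi)\sqrt{N^{(t+1)}/N^{(t)}}\,\ket{\psi^{(t+1)}}$ and then takes its norm, which is exactly your closing consistency check.
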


Since $p_s$ depends on the specific probability distribution $h^{(t)}(\sigma)$ encoded in the quantum model as well as the observed data encoded by $\varphi$, giving a general lower bound in terms of $n$ is unrealistic. However, thanks to Eq.~\eqref{eq:succ_prob_conditioning}, we note a general trend: when the diffusion step is well informed and able to capture the data distribution, $h(\varphi)$ is large, and the conditioning step succeds with high probability. Contrarily, when the diffusion step is poorly informed and unable to capture the data real distribution, the opposite happens, and the conditioning step will fail with high probability.

Hence, while Eq.~\eqref{eq:succ_prob_conditioning} does not give rigourous scalability guarantees, it suggests that useful and accurate models will perform well, and that a vanishing success probability in the conditioning step will be mainly suffered by less useful, worse-performing models.

\subsection{Amplitude Amplification}\label{sec:amplitude_amplification}
Both diffusion and conditioning are non-unitary operations on the model register $S$. They are implemented via unitary operators $D$ (Sec.~\ref{sec:diffusion}) and $C$ (Sec.~\ref{sec:conditioning}), which act on an expanded Hilbert space comprising $S$ and an ancillary $A$ register.
To maintain coherence during the model evolution, we postpone any ancillary measurements or amplitude amplification until the final time $T$, after all interleaved diffusion and conditioning steps have been applied.

Each application of $D$ or $C$ needs an additional ancilla, accumulating a $T$-qubit register where the target subspace is marked by the all-zero bitstring $\ket{0^T}_A$.
Let $\mathcal{P}$ be the lower bound on the total success probability at time $T$ $p_\textbf{tot}\coloneqq \norm{\Braket{0^T_A|\psi^{(T)}}}^2$: 
\begin{equation}
\mathcal{P} = \prod_{t=1}^T p_{\text{diff}}^{(t)} \cdot p_{\text{cond}}^{(t)} = \left((2p-1)^{2d}\cdot p_{\text{cond}} \right)^T,
\label{eq:total_prob_succ_lb}
\end{equation}
where $p_{\text{diff}}^{(t)}$ and $p_{\text{cond}}^{(t)}$ are the lower bounds of the individual diffusion and conditioning steps, from Claim~\ref{claim:ps_arbitrary} and Claim~\ref{claim:succ_prob_conditioning}, respectively.
To boost this probability to $1$, we can employ standard amplitude amplification~\cite{Brassard:2000xvp}. In the worst case, this requires $\mathcal{O}(1/\sqrt{\mathcal{P}})$ measurement trials and total algorithm reruns (Theorem 4 of~\cite{Brassard:2000xvp}). 
Alternatively, Fixed-Point Amplitude Amplification (FPAA)~\cite{Yoder:2014sls} enables coherent, deterministic state preparation without intermediate measurements. Achieving a target error tolerance $\delta$ (i.e., a success probability of $1-\delta$) requires a circuit depth scaling as
$\mathcal{O}\left( \log(1/\delta)/\sqrt{\mathcal{P}} \right)$.
The choice between these methods depends on the hardware budget and the requirements of the downstream ML task (Sec.~\ref{sec:model_use}): FPAA has larger constant prefactors requiring deeper circuits but ensures coherent state preparation, while standard AA requires repeated executions but permits shallower circuits.

For our algorithm to remain \textit{asymptotically} scalable with $n$, the total number of steps should scale as $T\in\mathcal{O}(\log n)$. 
It is important to emphasize that $T$ refers to the \textit{total} number of \textit{interleaved} diffusion and conditioning pairs, not the overall circuit depth, which is always dominated by the $\sn$-QFT. 
While the lower bound in Eq.~\eqref{eq:total_prob_succ_lb} may not be optimal, deriving a tighter bound remains an open question.

\section{Using the quantum model}\label{sec:model_use}
The last section described how we can prepare the probabilistic model $h^{(T)}(\sigma)$ at time $T$, suggested in~\cite{kondor2011non, JMLR:v10:huang09a}, via amplitude encoding in a quantum state due to the remarkable properties of the $\sn$ QFT. 
This encoding enables us to lift the bandlimiting restrictions required on classical computers, and allows us to efficiently move between Fourier and direct space, rather than performing all computations in Fourier space. But the encoding also dictates the scope of how the model can be used, which is very different from the classical case, mirroring a \textit{fundamental dichotomy} between classical and quantum probabilistic modeling. Classically, the (bandlimited) model can only be used for inference tasks performed by computing lower-order Fourier coefficients---evaluating low-order marginals of $h^{(t)}(\sigma)$---while sampling from $h^{(t)}(\sigma)$ is hard. When Fourier coefficients are encoded in amplitudes of a quantum state, they cannot be scalably estimated this way. Instead, the quantum state gives us a certain kind of \textit{sampling access} in direct and Fourier space via measurements, which is very attractive for \textit{generative modeling}. In this section, we characterise this sampling access in detail, and present a few promising strategies for utilizing the prepared quantum states in downstream machine learning tasks. 

More precisely, we will discuss how using the \textit{amplitude encoding} of $h^{(t)}(\sigma)$ discussed so far, where the amplitudes $\psi_\sigma^{(T)}$ of the prepared state $\ket{\psi^{(T)}}$ encode directly $\ell^2$-normalized posterior probabilities $h^{(t)}(\sigma)$, and measurements sample from the slightly attenuated distribution $p(\sigma) \propto |h^{(t)}(\sigma)|^2$. 
While this may be useful for some tasks, we will introduce an alternative encoding option, which we will call \textit{Born encoding}, where the amplitudes represent the square root of the probability distribution, $\psi_\sigma^{(T)}=\sqrt{h^{(t)}(\sigma)}$, so that measurements are sampled from the posterior $p(\sigma) \propto h^{(t)}(\sigma)$. As we discuss below, this turns the diffusion process into a non-Markovian evolution, which we can still implement using the same algorithm presented in Sec.~\ref{sec:model}. Both encodings allow us to also sample in Fourier space, which might prove useful for some applications.

Finally, we give some ideas on how one might be able to attenuate the probabilities by applying a higher-order polynomial onto the amplitudes. This would enable optimisation tasks such as \textit{maximum a posteriori} (MAP) inference; identifying most likely configuration (permutation) according to the model. MAP inference is a notoriously challenging optimization problem, particularly for multimodal or discrete model distributions. 
Even more, we get the ability of MAP estimates of certain marginal distributions, considered one of the hardest inference tasks, for free by measuring only part of the qubits.

\subsection{Generative Modeling: Amplitude Encoding}
The quantum algorithm described in Sec.~\ref{sec:model} prepares the state (omitting the ancillary register):
\begin{equation}
    \ket{\psi^{(T)}}=\sum_{\sigma\in\sn} \psi_\sigma^{(T)}\ket{\sigma},
\end{equation}
where the amplitudes are positive and encode 
the posterior probability distribution over the permuted objects at time $T$,
\begin{equation}
    \psi_\sigma^{(T)}=\frac{h^{(T)}(\sigma)}{\sqrt{N^{(T)}}},
\end{equation}
where $N^{(T)}=\sum_\sigma|h^{(T)}(\sigma)|^2$ is the squared $\ell_2$ norm of $h$, required for the normalization of the state.  

As described in Sec.~\ref{sec:model}, the evolution of the model is driven by the interplay between conditioning (Bayes updates based on data) and diffusion (entropy spreading between observations). 
The expressivity of the model is governed by the balance between these two operations: diffusion expands the support of $h$, while conditioning concentrates it in alignment with the observed data.
The number of diffusion steps $d$ applied between conditioning events is a hyperparameter, which needs to be tuned to the task at hand to ensure that the probability distribution $h^{(t)}(\sigma)$ approximates the observed data well.

The measurement of the prepared state on the computational basis yields samples that are distributed according to 
\begin{equation}\sigma\sim\left|\Braket{\sigma|\psi^{(T)}}\right|^2\propto |h^{(T)}(\sigma)|^2.
\label{eq:square_sampling}
\end{equation}
Therefore, the algorithm can be viewed as a \textit{quantum generative model} that can generate new data samples according to a distribution proportional to the squared posterior $|h^{(T)}(\sigma)|^2$. 
Practically, this \textit{sharpening} of the posterior distribution results in enhancing the chance of observing permutations $\sigma$ that have high probability under the posterior, while attenuating those with low---i.e., attenuating the tails of $h^{T}(\sigma)$.

This form of biased sampling from $h^{T}(\sigma)$ can be useful for tasks where the objective is to heuristically find the most probable configurations: most likely identity-to-track configuration or preference ranking of $n$ objects. We discuss this further in Sec.~\ref{sec:qsvt_sharpening}.

We highlight that this property of sharpened sampling (Eq.~\eqref{eq:square_sampling}) is a direct consequence---a \textit{feature}---of our quantum algorithm and is governed by its initialization, which subsequently induces an amplitude encoding of $h^{(t)}(\sigma)$.
Specifically, if at $t=0$ we start with a state encoding the canonical configuration, i.e. setting the identity element of $\sn$, of the objects $\ket{\psi^{(0)}} = \ket{e}$. 
In this setting, which can be viewed as ``online learning'', we evolve the belief state from an initial configuration and inject information from new data (observations) as they come.
The amplitudes of the state at each $t$ are proportional to $h^{(t)}(\sigma)$, emulating the classical Markov Chain dynamics from diffusion and conditioning on the quantum computer.
An alternative encoding of the belief state and the data is presented in the following.

\subsection{Generative Modeling: Born Encoding}\label{sec:born_encoding}
 An alternative setting is where at $t=0$ we already have a dataset $\mathcal{D}=\{\sigma_i\}_{i=1}^N$ of $N$ observed object configurations, such as a preferred ranking of $n$ objects by $N$ users. 
 We can load $\mathcal{D}$ on the quantum computer by preparing what we call an \textit{empirical state},
\begin{align}
   \ket{\psi^{(0)}} \equiv\ket{\mathcal{D}}&= \sum_{i=1}^N\psi_{\sigma_i}^{(0)}\ket{\sigma_i}\\
   &=\sum_{i=1}^N\sqrt{h^{(0)}(\sigma_i)}\ket{\sigma_i},
   \label{eq:square_root_encoding_init}
\end{align}
where the amplitudes encode the empirical probabilities $h^{(0)}(\sigma_i)=\frac{k_i}{N}$, $k_i$ being the number of occurrences of permutation $\sigma_i$. Note that this is a sparse state in the sense that $N\ll n! $. Crucially, using the empirical state chooses \textit{Born encoding} over \textit{amplitude encoding} of the probabilistic model.

With this alternative encoding, and starting with a different initial state, the developed diffusion and conditioning operations remain well defined. 
Conditioning involves the unitary manipulation of Lehmer's code, used to map permutations to computational basis states, and the projection onto the subspace consistent with the new observation (data) as detailed in Sec.~\ref{sec:conditioning}.
Therefore, the conditioning operation implements a consistent Bayes update on the amplitudes of the quantum model $\psi_\sigma^{(t)}$ both in the case where they directly encode the belief state $\psi_\sigma^{(t)}=h(\sigma)^{(t)}/\sqrt{N^{(t)}}$, and in the case of the Born encoding of Eq.~\eqref{eq:square_root_encoding_init}, up to rescaling the likelihood according to the square root: $\sqrt{h(\varphi|\sigma)}$ in Eq.~\eqref{eq:bayes}.

The case of diffusion is more subtle. While it still remains a well-defined operation as constructed in Sec.~\ref{sec:diffusion}, it induces a process that is not Markovian in the sense of linear dynamics governed by a transition matrix (see App.~\ref{app:sec:markov_matrix_form}).
Specifically, the operation becomes a diffusion of \textit{quantum amplitudes} $\psi^{(t)}_\sigma = \sqrt{h^{(t)}(\sigma)}$, where a step is defined by the update:
\begin{equation}
    \hat{\psi}_\lambda^{(t+1)} \coloneqq\frac{\qhat \cdot \hat{\psi}_\lambda^{(t)}}{\mathcal{N}^{(t)}} 
\end{equation}
where $\hat{\psi}_\lambda^{(t)} \coloneq \sqrt{\frac{d_\lambda}{n!}}\sum_{\sigma \in \sn} \psi_\sigma^{(t)}\irrep(\sigma)\ket{\lambda ij}$, and the renormalization, 
\begin{align}
    \mathcal{N}^{(t)}&=\sqrt{\sum_{\mu\vdash n}||\hat{q}_\mu\cdot \hat{\psi}_\mu^{(t)}||^2_F}\\
    &=\sqrt{\sum_{\mu\vdash n} \Tr\left[(\hat{q}_\mu\cdot \hat{\psi}_\mu^{(t)})^\dagger \cdot(\hat{q}_\mu\cdot \hat{\psi}_\mu^{(t)})\right]}
    \label{eq:square_root_encoding_diffusion_norm}
\end{align}
is required to preserve the unit $\ell_2$-norm of the amplitudes (conservation of total probability) since the kernel $q$ is stochastic but not unitary.
This process can equivalently be formulated in direct space. The update consists of a linear convolution followed by the same non-linear renormalization:
\begin{equation}
    \psi^{(t+1)}(\sigma) = \frac{1}{\mathcal{N}^{(t)}} \sum_{\tau \in G} q(\sigma \tau^{-1}) \psi^{(t)}(\tau).
\end{equation}
By Parseval's identity (Plancherel Theorem App.~\ref{app:sec:plancherel_qft}), the normalization factors in both domains are identical.

\paragraph*{Intuitive Meaning.}
The physical intuition remains analogous to the Markov case: the diffusion kernel $q$ acts as a low-pass filter, suppressing high-frequency irreps. In direct space, this corresponds to spreading the support of the probability distribution $h^{(t)}(\sigma) = |\psi^{(t)}_\sigma|^2$. However, because the low-pass filtering is applied to the spectrum of $\sqrt{h}$ rather than $h$, the exact dynamics differ. Since $\sqrt{h}$ represents a spatially smoother function than $h$, its spectrum is more heavily concentrated in the low-frequency irreps. Consequently, while both processes share the uniform distribution as a stationary state (given $d\in\mathcal{O}(n\log n)$ random walk steps~\cite{diaconis:1981_random_walk}), their trajectories differ, implying that the tunable number of diffusion steps to model a given dataset may deviate from the classical Markovian estimate. 
The Fourier spectrum of the amplitudes and that of the probabilistic model are related by generalized non-abelian \textit{auto-convolution}. For more details see App.~\ref{app:sec:diffusion_amplitudes}.

\subsection{Fourier Sampling}
Beyond generative modeling on the computational basis it is possible to also do (weak and strong) \textit{Fourier sampling} \cite{Childs:2008shx} on $\ket{\hat{\psi}^{(T)}}$.
Depending on the encoding of the probability distribution, this procedure yields information about different spectral properties.

\paragraph*{Amplitude Encoding.}
In the case where the state encodes the square root of the probability distribution, $|\psi^{(t)}\rangle = \sum_\sigma \sqrt{h^{(t)}(\sigma)} |\sigma\rangle$, Fourier sampling returns an irreducible representation label $\lambda$ with probability $P_{\text{amp}}(\lambda)$ proportional to the energy of the amplitude spectrum:

\begin{equation}
    P_{\text{amp}}(\lambda) = \left\| \hat{\psi}_\lambda^{(t)} \right\|_F^2,
\end{equation}
where $\| \cdot \|_F$ denotes the Frobenius norm (as in Eq.~\eqref{eq:square_root_encoding_diffusion_norm}).  Since the square-root operation compresses the dynamic range of the distribution, $\sqrt{h}$ is spatially smoother than $h$, resulting in a spectrum $P_{\text{amp}}$ that is more rapidly decaying and concentrated on low-frequency irreps.

\paragraph*{Born Encoding.}
Alternatively, if one prepares a state where amplitudes are directly proportional to the probabilities, $|\phi^{(t)}\rangle = \frac{1}{\|h^{(t)}\|_2} \sum_\sigma h^{(t)}(\sigma) |\sigma\rangle$, Fourier sampling yields the power spectrum of the density itself. The probability of observing $\lambda$ is given by:

\begin{equation}
    P_{\text{Born}}(\lambda) = \frac{1}{\sum_\sigma |h^{(t)}(\sigma)|^2} \left\| \hat{h}_\lambda^{(t)} \right\|_F^2.
\end{equation}
Here, $\hat{h}_\lambda^{(t)}$ is the Fourier transform of the probability distribution $h$. Because $h$ is generally ``spikier'' (sparser) than $\sqrt{h}$, this distribution is broader in Fourier space. 

\paragraph*{Relationship.}
The two sampling outcomes are rigorously linked by the generalized convolution theorem. Since $h(\sigma) = \psi_\sigma \cdot \psi_\sigma$, the spectrum sampled in the probability encoding ($P_{\text{prob}}$) corresponds to the auto-convolution of the spectrum sampled in the amplitude encoding ($P_{\text{amp}}$), governed by the Clebsch-Gordan series of $\sn$ (cf. App.~\ref{app:sec:diffusion_amplitudes}).

\subsection{Maximum-A-Posteriori Estimates}\label{sec:qsvt_sharpening}

In tracking and ranking tasks, a frequent objective is to identify the most probable configurations. Namely, after $T$ training steps of the model, the goal is to extract the set $\Omega_k =\{\sigma_1,\sigma_2,\dots,\sigma_k\}$ containing the $k$ most likely candidates---often referred to as the \textit{top-$k$} configuration, or as \textit{maximum-a-posteriori} estimates. These may represent, for instance, specific object-to-track assignments or partial rankings.

Using classical algorithms, attaining such MAP estimates involves computationally intractable optimization problems, particularly when the model captures correlations between permutations or more complex dependencies~\cite{pmlr-v2-kondor07a,JMLR:v10:huang09a}. 
In simple cases, where the model is factorizable, i.e., it involves statistically independent probabilities for each assignment---equivalently expressed in Fourier space as $\hat{h}_\irrep=0$ for $\lambda < (n-1,1)$---the optimization is a Linear Assignment Problem solvable via the Hungarian algorithm. In practice, these cases are exceptions, and classical heuristics can become computationally infeasible without approximations such as bandlimiting~\cite{pmlr-v2-kondor07a,JMLR:v10:huang09a}.

Instead of mapping this task to an optimization problem, the quantum modeling approach suggests turning it into a sampling task. This would require amplifying the probability of observing permutations $\sigma$ where $h^{(T)}(\sigma)$ is large. 
As this may be of independent interest in quantum generative modeling, we sketch this strategy using more general notation, for the remainder of this section.

The idea is to use the state preparation routine $U$ of the model to block-encode the diagonal matrix of state amplitudes $\psi_\sigma$, and subsequently apply a polynomial filter via Quantum Singular Value Transformation (QSVT) to prepare the sharpened posterior~\cite{Gilyen:2018khw}. 

Suppose we have a quantum state representing a generative model, prepared by a unitary U acting on a system register $S$ and an ancillary register $A_\text{model}$\footnote{So far, we have used $\ket{0}_A$ for the ancillary register needed for the preparation of the quantum model. In this section, we denote it $A_\text{model}$ to differentiate it from the additional ancillary register $A$ that QSVT requires.}
\begin{equation}
\ket{\Psi} \coloneqq U\ket{0}_S\ket{0}_{A_{\text{model}}} = \left(\sum_\sigma \psi_\sigma \ket{\sigma}_S\right)\ket{0}_{A_{\text{model}}},
\label{eq:unitary_state_prep_assumption}
\end{equation}
where we assume efficient uncomputation of the ancillary register back to $\ket{0}_{A_\text{model}}$, as is the case when using amplitude amplification (detailed in Sec.~\ref{sec:amplitude_amplification}).

To perform QSVT, we first construct a unitary $W$ that block-encodes the diagonal matrix of these posterior amplitudes (omitting the time step label for brevity):
\begin{equation}
( \mathbb{I}_S\otimes\bra{0}_A)W(\mathbb{I}_S\otimes\ket{0}_A) = \sum_\sigma \psi_\sigma\ketbra{\sigma}{\sigma}_S.
\label{eq:block_encoding_posterior_amplitudes}
\end{equation}
Note that $W$ acts on the system register $S$ and a new ancillary register $A$. Because this register must hold both the copied state and the workspace for the inverse state preparation, its size is $n_A = n_S + n_{A{\text{model}}}$.
In practice, the uncomputed ancillary register $A_\text{model}$ used previously to construct $U$ can be reused and extended to serve as $A$.

\begin{claim}[Block-Encoding of Posterior Amplitudes]
    Let $U_A$ be an instantiation of the state preparation unitary, acting on the ancillary register $A$ such that $U_A\ket{0}_A=\sum_\sigma \psi_\sigma \ket{\sigma}_{A_\textbf{copy}}\ket{0}_{A_\text{work}}$, where we denote the joint state simply as $\sum_\sigma\psi_\sigma \ket{\sigma}_A$ for brevity. 
    Furthermore, let $V_\text{copy}$ denote the transversal CNOT operation, acting bit-wise with $S$ as control and $A$ as target, that performs the mapping $\ket{\sigma}_S\ket{0}_A\mapsto \ket{\sigma}_S\ket{\sigma}_A$. 
    The unitary operator defined by 
    \begin{equation}
        W\coloneqq (\mathbb{I}_S\otimes U_A^\dagger)V_\text{copy},
    \end{equation}
    is a $(1,n_A,0)$-block-encoding, following the notation of~\cite{Gilyen:2018khw}, of the diagonal operator $\tilde{A}=\sum_\sigma \psi_\sigma \ketbra{\sigma}{\sigma}_S$ corresponding to the posterior amplitudes defined in Equation~\eqref{eq:block_encoding_posterior_amplitudes}.
\end{claim}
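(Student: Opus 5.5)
The plan is to compute the matrix elements $\bra{\sigma'}_S\bra{0}_A W \ket{\sigma}_S\ket{0}_A$ directly on computational basis states of $S$ and check that they equal $\psi_\sigma\,\delta_{\sigma\sigma'}$. By linearity this shows that the top-left block of $W$ is $\tilde{A}$. The subnormalization is $1$ and the error is $0$, giving the $(1,n_A,0)$ label in the convention of~\cite{Gilyen:2018khw}.

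\textbf{Step 1: copying.} By the definition of $V_\text{copy}$ as a transversal CNOT from $S$ into the copy part of $A$, we have $V_\text{copy}\ket{\sigma}_S\ket{0}_A=\ket{\sigma}_S\ket{\sigma}_{A_\text{copy}}\ket{0}_{A_\text{work}}$. In the shorthand of the claim this is $\ket{\sigma}_S\ket{\sigma}_A$.

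\textbf{Step 2: overlap with the state.} Since $\mathbb{I}_S\otimes U_A^\dagger$ acts trivially on $S$, we get
\begin{equation*}
\bra{\sigma'}_S\bra{0}_A W\ket{\sigma}_S\ket{0}_A=\delta_{\sigma\sigma'}\,\bra{0}_A U_A^\dagger\ket{\sigma}_A .
\end{equation*}
The remaining factor is the complex conjugate of $\bra{\sigma}_A U_A\ket{0}_A$. By hypothesis $U_A\ket{0}_A=\sum_\tau\psi_\tau\ket{\tau}_{A_\text{copy}}\ket{0}_{A_\text{work}}$, so this factor equals $\overline{\psi_\sigma}$. The posterior amplitudes produced in Sec.~\ref{sec:model} are real and nonnegative, $\psi_\sigma=h^{(T)}(\sigma)/\sqrt{N^{(T)}}$, so $\overline{\psi_\sigma}=\psi_\sigma$. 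Summing over $\sigma,\sigma'$ gives $(\mathbb{I}_S\otimes\bra{0}_A)W(\mathbb{I}_S\otimes\ket{0}_A)=\sum_\sigma\psi_\sigma\ketbra{\sigma}{\sigma}_S$.

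\textbf{Step 3: conclude.} $W$ is a product of unitaries and hence unitary. Its top-left block equals $\tilde{A}$ exactly, with no rescaling and no approximation. This is precisely a $(1,n_A,0)$-block-encoding.

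The main subtlety is bookkeeping of the work register. The copy must land only in $A_\text{copy}$ with $A_\text{work}$ left in $\ket{0}$, so that the state overlaps exactly with the output of $U_A$. We must also check that the model ancilla produced by the preparation routine is returned cleanly to $\ket{0}$, as assumed in Eq.~\eqref{eq:unitary_state_prep_assumption}. If it were not, the overlap would pick up unwanted garbage terms. The other point to flag is that the conjugation step needs real amplitudes; for complex $\psi_\sigma$ the same construction block-encodes $\sum_\sigma\overline{\psi_\sigma}\ketbra{\sigma}{\sigma}$ instead.
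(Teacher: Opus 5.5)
Your proof is correct and follows the paper's argument: copy $\ket{\sigma}_S$ into $A$ with $V_\text{copy}$, apply $U_A^\dagger$, project onto $\ket{0}_A$ to obtain $\overline{\psi_\sigma}\ket{\sigma}_S$, and use that the posterior amplitudes are real. Your explicit $\delta_{\sigma\sigma'}$ bookkeeping and your remark that complex amplitudes would instead give $\sum_\sigma\overline{\psi_\sigma}\ketbra{\sigma}{\sigma}$ are slightly more careful than the paper's write-up, which drops the conjugate in an intermediate line.
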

\begin{proof}
Consider the action of $W$ on the joint basis state $\ket{\sigma}_S\ket{0}_A$.
First, we apply the CNOT cascade operation 
\begin{equation}
V_\text{copy}\ket{\sigma}_S\ket{0}_A=\ket{\sigma}_S\ket{\sigma}_A.     
\end{equation}
Next, applying the inverse state preparation $(\mathbb{I}_
S\otimes U_A^\dagger)$ yields 
\begin{equation}
(\mathbb{I}_S \otimes U^\dagger_A) \ket{\sigma}_S \ket{\sigma}_A = \ket{\sigma}S \otimes (U^\dagger_A \ket{\sigma}_A).
\end{equation}
To find the effective operator on the system register, we project the ancilla register back to the zero state $\ket{0}_A$,
\begin{align}
\bra{0}_A W \ket{\sigma}_S \ket{0}_A &= \ket{\sigma}_S \bra{0}_A U^\dagger_A \ket{\sigma}_A\nonumber\\
&= \ket{\sigma}_S \left( \bra{\sigma}_A U_A \ket{0}_A \right) \nonumber\\
&= \psi_\sigma^* \ket{\sigma}_S \nonumber\\
&=\psi_\sigma \ket{\sigma}_S,
\label{eq:block_encoding_posterior_amplitudes_proof}
\end{align} 
where the final equality holds because the amplitudes $\psi_\sigma$ are real for quantum generative models, and $\Braket{0|U^\dagger|\sigma}_A=\psi_\sigma$. 
Since the equality in Equation~\eqref{eq:block_encoding_posterior_amplitudes_proof} holds for all basis states $\ket{\sigma}_S\ket{0}_A$, we conclude that the projected operator is the diagonal matrix $\tilde{A} = \sum_\sigma \psi_\sigma \ket{\sigma}\bra{\sigma}_S$, completing the proof.
\end{proof}

After applying $U$, we have in the system register $S$ the quatum model $\ket{\Psi}$ and prepare the joint state:
\begin{equation}
    \ket{\Phi} \coloneqq \ket{\Psi}_S \otimes \ket{0}_A=\sum_\sigma \psi_\sigma\ket{\sigma}_S\otimes\ket{0}_A.
\end{equation}
Subsequently, one can apply the QSVT~\cite{Gilyen:2018khw,Martyn_2021} using the block-encoding $W$ and a chosen polynomial $P(x)$ of degree $m$. By selecting a sufficiently high degree m, we effectively sharpen the posterior distribution. Explicitly, this transformation maps the amplitudes as $\psi_\sigma\mapsto P(\psi_\sigma)\approx\psi_\sigma^m$.
Consequently, the new sampling probabilities scale approximately as $\propto(h(\sigma))^{2m}$ for amplitude encoding, and $\propto(h(\sigma))^m$ for Born encoding.
This power-law scaling aggressively suppresses low-probability configurations while isolating and amplifying the peaks (the modes) of the distribution. Because the success probability of the QSVT operation---marked by $\ket{0}_A$---depends heavily on the specific landscape of the data distribution, we assume standard amplitude amplification techniques are employed to post-select the desired state. Providing rigorous analytical bounds on this success probability is generally impossible without strong assumptions regarding the underlying data-generating process, and thus remains beyond the scope of this work.

\section{Discussion}\label{sec:discussion}
This work introduced a quantum implementation of a specific probabilistic model over permutations, based on spectral methods on the symmetric group. While the approach outlines a potential path towards a classically hard, quantumly feasible \textit{and} practically useful quantum machine learning algorithm, it constitutes only a first step in this direction. 

Further work is needed to estimate and optimise the circuit's constant-prefactor resources to ensure they are feasible under realistic quantum hardware constraints. For example, the gate complexity of the $\sn$-QFT relies heavily on asymptotic estimates (see the Appendix in \cite{Larocca:2024waq,Bravyi:2023veg}), which offer only a crude measure of the actual physical overhead required. Additionally, we have not yet provided empirical evidence comparing the model's performance against classical bandlimited implementations or other non-spectral heuristics on real-world datasets. This remains a challenging task that requires a careful benchmark design, as super-exponential runtimes limit us to even smaller scales than the usual exponential cost of simulations. 

Benchmarks will be crucial to identify the right balance between feasibility on early-stage fault-tolerant hardware, classically intractable problem sizes, and practical usefulness. We expect that this balance might be achieved at relatively modest scales of $n=15$ to $n=30$, which is well below the regime where the asymptotic limitations on the number of Markov chain steps $T$ become predictive of the success of the algorithm.

The motivation of the current study, however, was aimed at a deeper level: using a concrete example, we aimed to understand if the remarkable ability of quantum computers to efficiently move between direct and Fourier space can open up avenues for probabilistic modeling over group-structured data. We will therefore conclude with some more general insights from this endeavour.

\paragraph*{Learning as an application for the $\sn$-QFT.} Machine learning seems to be a yet unexplored, but promising area of application for non-Abelian Quantum Fourier Transforms. In some sense, the QFT over the symmetric group was responsible for the first major disappointment of early quantum algorithms research. 
QFTs were originally invented to solve Abelian Hidden Subgroup Problems such as Shor's famous algorithm \cite{shor1994algorithms, kitaev1995quantum}. 
After the discovery of an efficient algorithm for the $\sn$ QFT, hopes were high to solve an important non-Abelian Hidden Subgroup Problem with practical relevance: \textit{graph isomorphism}, or to decide if two sets of nodes and edges relate to the same graph. However, evidence that this was not possible quickly mounted (see references in \cite{childs2005quantum}), and Hidden Subgroup Problems subsequently came out of fashion. Recently, some authors have rediscovered the topic \cite{Larocca:2024waq,Bravyi:2023veg}, but again for rather abstract use cases: to compute the multiplicities of irreducible representations under certain assumptions on their dimensionality. The work of Persi Diaconis \cite{diaconis1988group}, and later of Risi Kondor and Jonathan Huang, provides convincing arguments to believe that the $\sn$-QFT could find its real purpose in statistical analysis and machine learning. Our work here can be understood as encouraging evidence that quantum algorithms for such applications are, at least in general, possible.  

\paragraph*{Quantum computers can enable spectral methods.} Spectral methods combined with quantum generative models are a promising area where quantum computers could have a real impact on machine learning. Quantum states are an elegant framework to manipulate a generative model in direct and Fourier space. As known from the literature of the \textit{spectral bias} in deep learning \cite{rahaman2019spectral, xu2019frequency}, the Fourier spectrum of a model contains crucial features for regularisation and model design. From the work of Diaconis we know that this generalises to the group Fourier spectrum, which is exactly what enables the diffusion and conditioning step in the Markov chain model investigated here. However, care has to be taken as we manipulate the amplitudes of the quantum state, not the sampling distribution that constitutes the generative model via the Born rule. How information is encoded therefore plays an important role. In addition to this, manipulating the amplitudes of quantum states is no easy feat; while modern quantum computing research has added many important routines such as block-encoding and the QSVT, highly non-unitary manipulations remain costly---here it limits the number of steps of the Markov chain. We found it interesting that sensible conditions for learning, for instance, that the likelihood in the Bayesian update is consistent with the prior model, also make the quantum algorithm feasible. 

\paragraph*{Finding the most likely data.} Quantum states can be naturally viewed as \textit{implicit} generative models \cite{rudolph2024trainability,Recio-Armengol:2025dpz}, i.e., machine learning models that can generate samples $x \sim p(x)$ without being able to compute probabilities $p(x)$ directly. This is an attractive use of a quantum computer, as every measurement result is a valuable data point. But a possibly even more interesting use case for such quantum models is inference with a flavour of optimisation, which is a very hard task: find a data point that has maximal, or at least very high, probability according to the model. Such a task could be solved by polynomially amplifying large amplitudes and suppressing small ones, a strategy with a growing toolbox of tricks in quantum computing research (see for example \cite{jordan2025optimization}).

Overall, we conclude that building on Diaconis' insights regarding the power of non-commutative harmonic analysis for statistics over permutation-structured data offers a promising avenue towards finding classically difficult, quantumly feasible \textit{and} practically useful applications for quantum computers. It also serves as a compelling example of how quantum computers can unlock spectral methods for machine learning. The path towards validating this claim, however, is full of challenges, stemming both from the highly technical underlying theory and from the root of the classical hardness itself: the super-exponential growth of the size of the symmetric group. This study, we hope, represents a first step on this path.

\section*{Acknowledgements}
We thank Nathan Wiebe for important discussions that helped us complete this work.

\bibliography{main}

\newpage
\appendix
\onecolumngrid
\newpage

\section{A primer in group representation theory for finite groups}\label{app:representation_theory}

    This section serves as a primer on representation theory for finite groups, to be then specialized to $\sn$, as a finite, non-commutative group. Excellent sources for an in-depth discussion are~\cite{terras1999fourier, diaconis1988group, serre1977linear}.
    
    \noindent We first introduce the mathematical notion of group as:
    \begin{definition}[Group]
        A group $G$ is a non-empty set equipped with a binary operation $\cdot$, closed under its action: $G\cdot G \rightarrow G$. Thus $\forall g_1, g_2 \in G$, $g_1 \cdot g_2 \equiv g_1 g_2\in G$, and the following properties hold:
        \begin{itemize}
            \item \textbf{Identity element:} $\exists\ e \in G $ such that $e\cdot g = g$ $\forall g \in G$;
            \item \textbf{Associativity:} $(g_1\cdot g_2)\cdot g_3 = g_1\cdot (g_2 \cdot g_3)$ $\forall g_1, g_2, g_3 \in G$;
            \item \textbf{Inverse element:} $\forall g\in G,\ \exists g^{-1}\ |\ g\cdot g^{-1}=g^{-1}\cdot g = e$. The element $g^{-1}$ is called the inverse of $g$. 
        \end{itemize}
    \end{definition}
    \noindent If the underlying set is finite, the group itself is called \emph{finite}, as opposed to \emph{infinite groups}. The number of the group elements is called group \emph{cardinality} or \emph{order}, and denoted $|G|$. In this work the focus is on the symmetric --or permutation-- group $\sn$, which is finite with $|\sn|=n!$.
    \begin{definition}[Group representation]
        Let G be a group, and V a vector space over some field $\mathbb{F}$; a (linear) representation of G on V is a group homomorphism:
        \begin{equation}
            \rho:G\rightarrow GL(V)
        \end{equation}
        where $GL(V)$ is the group of invertible linear maps $V\rightarrow V$.
    \end{definition}
        This formal definition introduces $\rho$ as a map from the group to a space of linear operators such that:
        \begin{equation}
            \rho(e) = \mathbb{I}
        \end{equation}
        where $\mathbb{I}$ is the identity operator on $V$, and:
        \begin{equation}
            \rho(g_1)\rho(g_2)=\rho(g_1g_2)
        \end{equation}
        for every $g_1,g_2\in G$. In other words, representations preserve the algebraic structure of the group. The dimension of the underlying vector space $V$ is also the dimension of the representation, denoted $d_{\rho}$. It is often convenient to work with \emph{matrix representations}, i.e. to associate to each group element a matrix $\rho(g)\in \mathbb{F}^{d_{\rho}\times d_{\rho}}$. An important result is expressed by the following:
        \begin{theorem}
            Every complex representation of a finite group is equivalent to a unitary representation.
        \end{theorem}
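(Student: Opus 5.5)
The plan is the standard averaging (Weyl's unitary trick) argument. Let $\rho: G\to GL(V)$ be a complex representation of the finite group $G$ on a finite-dimensional space $V$, and fix any inner product $\langle\cdot,\cdot\rangle_0$ on $V$. I would define a new Hermitian form by averaging over the group,
\begin{equation}
    \langle v,w\rangle \coloneqq \frac{1}{|G|}\sum_{g\in G}\langle \rho(g)v,\rho(g)w\rangle_0 .
\end{equation}
The first step is to check that this is a genuine inner product. Sesquilinearity and conjugate symmetry are inherited termwise from $\langle\cdot,\cdot\rangle_0$. For positive definiteness, observe that for $v\neq 0$ each summand $\langle\rho(g)v,\rho(g)v\rangle_0$ is strictly positive, since $\rho(g)$ is invertible and hence $\rho(g)v\neq 0$. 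Finiteness of $G$ is used here to make the sum well defined; for infinite groups one would instead need an invariant measure.

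The second step is $G$-invariance of the new inner product. For $h\in G$,
\begin{equation}
    \langle\rho(h)v,\rho(h)w\rangle=\frac{1}{|G|}\sum_{g}\langle\rho(gh)v,\rho(gh)w\rangle_0 ,
\end{equation}
using the homomorphism property $\rho(g)\rho(h)=\rho(gh)$ stated above. Since right multiplication $g\mapsto gh$ is a bijection of $G$, the sum is just a reindexing of the original one, so the expression equals $\langle v,w\rangle$. Every $\rho(h)$ is therefore unitary with respect to $\langle\cdot,\cdot\rangle$.

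The third step converts this into the matrix statement, namely equivalence to a unitary representation in the standard sense. Choose an orthonormal basis of $V$ for $\langle\cdot,\cdot\rangle$, for instance by Gram--Schmidt. Let $S$ be the change-of-basis map sending the $\langle\cdot,\cdot\rangle_0$-orthonormal basis to this one, so that $\langle v,w\rangle=\langle S^{-1}v,S^{-1}w\rangle_0$. Then $\rho'(g)\coloneqq S^{-1}\rho(g)S$ is unitary with respect to $\langle\cdot,\cdot\rangle_0$, and $\rho'$ is a representation equivalent to $\rho$.

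None of these steps is a real obstacle. The only points needing care are positive definiteness, which relies on the invertibility of each $\rho(g)$, and the bookkeeping of the change of basis $S$ so that the equivalence is stated correctly.
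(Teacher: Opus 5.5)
The paper gives no proof of this statement. It is quoted in the group-theory primer appendix as a standard fact, and the reader is referred to Serre, Terras and Diaconis, so there is no in-paper argument to compare against.

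Your proposal is the textbook averaging argument (Weyl's unitary trick), and it is correct as written. The averaged form is positive definite; in fact the $g=e$ term alone already suffices. Invariance follows from the reindexing $g\mapsto gh$. Your change of basis $S$, with $\langle v,w\rangle=\langle S^{-1}v,S^{-1}w\rangle_0$, does make $\rho'(g)=S^{-1}\rho(g)S$ unitary for $\langle\cdot,\cdot\rangle_0$.

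Your assumption that $V$ is finite-dimensional is also implicit in the paper, which works with dimensions $d_\rho$ and matrix representations.
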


        The connection of this formalism with the finite groups spectral theory can be introduced through the distinction between \emph{reducible} and \emph{irreducible} representations.
        \begin{definition}[Subrepresentation] Let $\rho:G\rightarrow GL(V)$ and $\alpha:G\rightarrow GL(W)$ be two representations of G, with $V,W$ vector spaces such that W is a subspace of V. If $\alpha(g)W\subset W$, or identically if $\left.\alpha(g)\right|_W=\rho(g)$ ,$\ \forall g\in G$ then $\alpha(g)$ is called a subrepresentation of $\rho(g)$.
        \end{definition}
        A matrix interpretation of this is that there exists a basis of $V$ that produces the block matrix:
        \begin{equation}
            \rho(g)\equiv\left(\begin{array}{cc}
                \alpha(g) & \star \\
                0 & \star
            \end{array}\right)
        \end{equation}
        \begin{definition}[Irreducible representation]
            A representation $\rho$ is said to be irreducible is its only subrepresentations are the trivial ones: $\rho$ itself and the null one.
        \end{definition}
        The following important result \cite{serre1977linear} connects reducible and irreducible representations, acting as a bridge towards the spectral theory of finite groups, where the latter have a central role:
        \begin{theorem}
        \label{th_decomp}
            Every representation of a finite group G is a direct sum of irreducible representations.
        \end{theorem}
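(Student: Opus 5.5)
The statement to prove is Theorem~\ref{th_decomp}, Maschke's theorem in the complex finite-group setting. The proof will go by strong induction on $\dim V$, using the earlier theorem that every complex representation of a finite group is equivalent to a unitary one. For the base case, if $\dim V = 0$ the empty direct sum suffices. If $\dim V=1$, or more generally if $\rho$ is already irreducible, there is nothing to prove. For the inductive step, suppose $\rho:G\to GL(V)$ is reducible. By the definition of subrepresentation, there is a proper nonzero subspace $W\subset V$ with $\rho(g)W\subset W$ for all $g\in G$. The goal is to produce a complementary invariant subspace $W'$ with $V=W\oplus W'$. Then $\rho\cong \rho|_W\oplus\rho|_{W'}$, and both summands have strictly smaller dimension. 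The induction hypothesis decomposes each into irreducibles, and concatenating the two decompositions gives the result.

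The key step is constructing $W'$. We fix any inner product on $V$ and average it over the group:
\begin{equation}
\langle u,v\rangle_G \coloneqq \frac{1}{|G|}\sum_{g\in G}\langle \rho(g)u,\rho(g)v\rangle .
\end{equation}
This is again positive definite, and it is $G$-invariant by reindexing the sum, so every $\rho(g)$ is unitary with respect to it. (This is precisely the content of the cited unitarity theorem, which may simply be invoked.) We then take $W'=W^{\perp}$ with respect to $\langle\cdot,\cdot\rangle_G$. To see that $W'$ is invariant, let $v\in W^\perp$ and $w\in W$. Then
\[
\langle \rho(g)v,w\rangle_G=\langle v,\rho(g)^{-1}w\rangle_G=\langle v,\rho(g^{-1})w\rangle_G=0,
\]
because $\rho(g^{-1})w\in W$. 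Hence $\rho(g)W^\perp\subset W^\perp$. As an alternative that does not need an inner product, one could average an arbitrary linear projection $P$ onto $W$ to get $\bar P=\frac{1}{|G|}\sum_g\rho(g)P\rho(g)^{-1}$. This $\bar P$ is a $G$-equivariant projection onto $W$, and $\ker\bar P$ is an invariant complement. That version needs $|G|$ to be invertible in $\mathbb{F}$, which holds over $\mathbb{C}$.

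The only delicate point is the field and finiteness hypothesis, so the obstacle is conceptual rather than technical. Averaging requires $G$ finite, which the theorem assumes, and requires $|G|\neq 0$ in the field. For the complex representations used throughout the paper, both hold. The statement should be read in that setting, since the paper's definition allows a general field $\mathbb{F}$ and the claim can fail in positive characteristic dividing $|G|$. The remaining checks are routine: positive definiteness of the averaged form, and the fact that the restrictions of $\rho$ to $W$ and $W'$ are genuine representations.
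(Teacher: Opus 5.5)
Your proof is correct. The paper gives no proof of its own and only cites Serre, whose argument is essentially yours: an invariant complement obtained by averaging a projection, $\frac{1}{|G|}\sum_{g}\rho(g)P\rho(g)^{-1}$, with the $G$-invariant inner product as an equivalent variant, followed by induction on $\dim V$ starting from the empty direct sum. Your caveat about the characteristic is apt, and the one hypothesis you use silently, finite-dimensionality of $V$ for the induction, matches the paper's use of $d_\rho$.
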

        When a generic reducible representation $\rho$ is expressed as the direct sum of its irreducible representations it is said to be \emph{completely reducible}. In a suitable basis, the matrices of the general $\rho$ are block diagonal, where each block is a sub-matrix associated to an irreducible representation.

        For the purposes of this work in Section \ref{sec:qft_operator} we have introduced the \emph{regular representation} which we here label $\rho_{\text{reg}}$, that is completely reducible into the direct sum of every possible irreducible representation (\emph{irrep}) of G.
        The regular representation $\rho_{\text{reg}}$ is $|G|$-dimensional; for the symmetric group this means it is $n!$-dimensional. To introduce it, it is customary to consider the $n!$ group elements of $\sn$ as forming an orthonormal basis $\left\{e_g: g \in G\right\}$ for the vector space of the group algebra $\mathbb{C}[G]$, with $\left\langle e_x, e_y\right\rangle=\delta_{x, y}$.
        \begin{definition}
            The group algebra $\mathbb{F}[G]$ for a field $\mathbb{F}$ and a group G is the set of all the linear combinations of a finite number of elements:
            \begin{equation}
                \sum_{g \in G} a_g e_g,
            \end{equation}
            with coefficients $a_g\in\mathbb{F}$, and basis elements $e_g$\footnote{Sometimes notation can be confusing, and the basis elements of the algebra $e_g$ can be indicated simply as $g$.}, with a multiplication operation, canonically defined on the basis elements by extending the group law: $e_x e_y=e_{x y}$.
        \end{definition}
        Generally the choice for $\mathbb{F}$ is the complex field $\mathbb{C}$. Then
        \begin{equation}
            \left[\rho_{\text {reg }}(x)\right]_{z, y}=\left\{\begin{array}{ll}
            1 & \text { if } z=x y, \\
            0 & \text { otherwise },
            \end{array} \quad x, y, z \in G\right.
        \end{equation}
        so that, at the level of basis vectors: $\rho_{\text {reg }}(x) e_y=e_{x y}$.
        The regular representation matrices are therefore of dimension $n!\times n!$ for every group element of the symmetric group. This is classically generally prohibitive to handle even for small $n$.
        
        By theorem \ref{th_decomp}$, \rho_{\text{reg}}$ can be decomposed as:
        \begin{equation}
            \rho_{\mathrm{reg}} \cong \bigoplus_{\lambda} d_\lambda \rho_\lambda
        \end{equation}
        where each possible irrep $\rho_\lambda$ appears with a multiplicity $d_\lambda$ equal to its dimension.
        One of the most important results in group representation theory is the \emph{Peter-Weyl theorem}, which states that the matrix entries of the irreducible representations belonging to the regular representation (which includes all of them) form an orthogonal basis for the class $L(G)$ of all the complex functions on the group G. In the case of finite groups this space can be identified with the group algebra $\mathbb{C}[G]$, and the distinction between $L(G)$ and $\mathbb{C}[G]$ is not necessary.
        The Fourier transform can then be defined as the map that block diagonalizes the regular representation, and hence projects a function from the algebra $\mathbb{C}[G]$ onto the irreducible representations of the group.
        Formally:
        \begin{definition}[Group Fourier transform]
            The \emph{group Fourier transform} of the function $f\in\mathbb{C}[G]$ on the irrep $\rho$ is
            \begin{equation}
            \label{gft}
                \widehat{f}(\rho)=\sum_{x \in G} f(x) \rho(x) \quad \rho \in \mathcal{R}
            \end{equation}
            where $\mathcal{R}$ is a complete set of inequivalent irreducible representations of G.
        \end{definition}
        This definition is completely general for finite groups, both abelian and non-abelian. It can be shown that for abelian groups the GFT exactly reconnects with the results of the spectral theory commonly introduced for periodic functions or functions over $\mathbb{R}$ and $\mathbb{Z}$. Instead, for non-abelian finite groups the generalized relations are often much more convoluted, as the Fourier coefficients $\hat{f}(\rho)$ are matrices rather than scalars.

        The following, fundamental result can be proved for finite groups:
\begin{theorem}[Convolution theorem]
                Consider $f_1$, $f_2$ elements of the group algebra $\mathbb{C}[\sn]$. Then:
                \begin{equation}
                    \widehat{[f_1 \star f_2]}(\rho) = \widehat{f_1}(\rho)\,\cdot\widehat{f_2}(\rho)
                \end{equation}
            \end{theorem}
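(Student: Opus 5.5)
The plan is to compute both sides directly from the definitions and compare the resulting matrices. We use the convolution as defined in the main text, $(f_1\star f_2)(x)=\sum_{y\in G} f_1(xy^{-1})f_2(y)$, together with the group Fourier transform of Eq.~\eqref{gft}. By linearity of the Fourier transform in the function, the identity only has to be checked directly; no structural results such as Schur's lemma or Peter--Weyl are needed.

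\textbf{Step 1 (expand the left side).} Write out the transform of the convolution:
\begin{equation}
\widehat{[f_1\star f_2]}(\rho)=\sum_{x\in G}\sum_{y\in G} f_1(xy^{-1})\,f_2(y)\,\rho(x).
\end{equation}

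\textbf{Step 2 (change variables).} For each fixed $y$, substitute $z=xy^{-1}$, so that $x=zy$. Because right multiplication by $y$ is a bijection of $G$, $z$ runs over all of $G$ exactly once as $x$ does. The homomorphism property $\rho(zy)=\rho(z)\rho(y)$ then gives
\begin{equation}
\sum_{y}\sum_{z} f_1(z)f_2(y)\rho(z)\rho(y)=\Big(\sum_z f_1(z)\rho(z)\Big)\Big(\sum_y f_2(y)\rho(y)\Big)=\widehat{f_1}(\rho)\cdot\widehat{f_2}(\rho).
\end{equation}
The sums can be exchanged and factored because $G$ is finite and the scalars $f_i(\cdot)$ commute with matrices.

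\textbf{Main obstacle.} The only delicate point is the ordering convention. The factorization comes out in the order $\widehat{f_1}\,\widehat{f_2}$ only because the convolution puts $f_1$ on the left argument $xy^{-1}$ and the Fourier transform uses $\rho(x)$ rather than $\rho(x^{-1})$. I will state this convention explicitly and check that it matches the paper's definition $h^{(t+1)}(\sigma)=\sum_\tau q(\sigma\tau^{-1})h^{(t)}(\tau)$, which yields $\hat q\,\hat h$ as used in Eq.~\eqref{eq:diffusion_step_fourier_space}.

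For the unitary-normalized version in Eq.~\eqref{eq:conv_fourier_space_unitary}, I will track the prefactors $\sqrt{d_\lambda/n!}$. The left-hand side carries one such factor, while the product of the two normalized transforms carries two. Matching them therefore requires the compensating factor $\sqrt{n!/d_\lambda}$ in front of $\hat q_\lambda\hat h_\lambda$, as stated.
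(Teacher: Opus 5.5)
Your proof is correct and complete. For fixed $y$, the map $x\mapsto z=xy^{-1}$ is a bijection of $G$, and the homomorphism property $\rho(zy)=\rho(z)\rho(y)$ then factors the double sum into $\widehat{f_1}(\rho)\,\widehat{f_2}(\rho)$, in the order you explain. Your prefactor bookkeeping also correctly recovers the $\sqrt{n!/d_\lambda}$ in \eqref{eq:conv_fourier_space_unitary}.

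The paper gives no proof of its own here; it states the theorem as a standard fact and defers to the cited references. So there is no other route to compare against, and your reindexing computation is the standard argument.

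Two minor remarks. First, the sentence about linearity of the Fourier transform plays no role in what you actually do. It could, however, become an equally short proof via bilinearity: on basis elements of $\mathbb{C}[G]$ one has $e_a\star e_b=e_{ab}$, so the claim reduces to $\rho(ab)=\rho(a)\rho(b)$. This is the group-algebra viewpoint of the appendix, where convolution is the algebra product and $f\mapsto\widehat{f}(\rho)$ is the algebra homomorphism extending $\rho$. Second, nothing in your argument uses $G=\sn$ or the irreducibility of $\rho$, so the identity holds for every representation of every finite group.
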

            Hence the group Fourier transform for a finite group turns convolutions in input space, into an irrep-wise matrix product of the Fourier transforms of the starting functions.

            A further specialization of this result happens for \emph{class functions}, which can be introduced from the following definitions from group theory:
            \begin{definition}[Conjugation]
            Elements $g,h\in G$ are \emph{conjugate}, written $g\sim h$, if there exists $x\in G$ with
                \begin{equation}
                    h=xgx^{-1}.
                \end{equation}
            \end{definition}
            \begin{definition}[Conjugacy class]\label{app:def:conjugacy_class}
            For $g\in G$, the \emph{conjugacy class} of $g$ is
                \begin{equation}
                    \operatorname{Cl}(g)=\{\,xgx^{-1}\;:\;x\in G\,\}.
                \end{equation}
            \end{definition}
In general, it can be proved that the number of conjugacy classes of a group is equal to the number of its irreps.

            Finally, a class function can be defined as a function constant on a conjugacy class:
            \begin{definition}[Class function]
            $f: G \rightarrow \mathbb{C}$ is a class function if:
                \begin{equation}
                    f\left(x g x^{-1}\right)=f(g) \text { for all } x, g \in G .
                \end{equation}        
            \end{definition}

            Introducing class functions is crucial to elegantly implement convolutions in Fourier space, because they adhere to the following important lemmas proved in \cite{terras1999fourier}, around which much of this chapter revolves, and whose origin can be traced back to the Schur's lemma\cite{diaconis1988group}:
            \begin{lemma}
            \label{app:eq:Schur_lemma} Let $G$ be a finite group.
                \ 
                \begin{itemize}
                    \item[1:] A function f defined on G is a class function if and only if:
                    \begin{equation}
                        f\star h = h\star f\ \ \ \ \ \ ,\forall h:G\rightarrow \mathbb{C}
                    \end{equation}
                    \item[2:] If f is a class function, then the convolution operator built from it and acting on any other function $h:G\rightarrow \mathbb{C}$ is not only block diagonal in Fourier space, but diagonal, and acting on each irrep $\rho_{\lambda}$ as $c_{\lambda}\mathbb{I}$, with $c_{\lambda}$ some constant depending on the irrep.
                \end{itemize}
            \end{lemma}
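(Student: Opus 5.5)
The statement to prove is Lemma~\ref{app:eq:Schur_lemma}. The plan is to translate everything to Fourier space with the Convolution theorem and then use Schur's lemma together with the fact that the Fourier transform is a bijection (Peter--Weyl / Fourier inversion, Eq.~\eqref{eq:inverse_group_fourier_transform}).

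\emph{Part 1, direct space.} I will first compute $(f\star h)(x)=\sum_{y} f(xy^{-1})h(y)$ and $(h\star f)(x)=\sum_y h(y)f(y^{-1}x)$.

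For the direction ($\Rightarrow$), I substitute $y\mapsto$ the appropriate conjugate. Since $y^{-1}x = y^{-1}(xy^{-1})y$, the class property gives $f(y^{-1}x)=f(xy^{-1})$, and the two sums agree term by term.

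For the direction ($\Leftarrow$), I choose $h=\delta_g$. Then $(f\star\delta_g)(x)=f(xg^{-1})$ and $(\delta_g\star f)(x)=f(g^{-1}x)$. Setting $x=gz$ gives $f(gzg^{-1})=f(z)$ for all $g,z$, which is exactly the class-function condition.

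This argument is elementary and needs no representation theory.

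\emph{Part 2.} Let $f$ be a class function. For any irrep $\rho$ and any $g\in G$, I compute
\[
\rho(g)\hat f(\rho)\rho(g)^{-1}=\sum_x f(x)\rho(gxg^{-1})=\sum_{x'} f(g^{-1}x'g)\rho(x')=\hat f(\rho).
\]
Here the middle step reindexes by $x'=gxg^{-1}$ and the last step uses the class property. So $\hat f(\rho)$ commutes with every $\rho(g)$, and since $\rho$ is irreducible over $\mathbb{C}$, Schur's lemma gives $\hat f(\rho)=c_\rho\mathbb{I}$.

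Next I apply the Convolution theorem: $\widehat{f\star h}(\rho)=c_\rho\,\hat h(\rho)$ on every irrep block. In the Fourier basis $\ket{\lambda ij}$ the convolution operator therefore acts as $\bigoplus_\lambda c_\lambda\mathbb{I}_{d_\lambda}\otimes\mathbb{I}_{d_\lambda}$, which is diagonal rather than merely block diagonal. Taking traces then gives $c_\lambda=\frac{1}{d_\lambda}\sum_x f(x)\chi_\lambda(x)$, which is consistent with Eq.~\eqref{eq:c_lambdas}.

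\emph{Expected obstacle.} The only nontrivial ingredient is Schur's lemma itself: a matrix commuting with an irreducible complex representation is a scalar. I will assume it as standard, with the usual proof that any eigenspace of $\hat f(\rho)$ is $G$-invariant and hence all of $V$. Beyond that, the main care needed is keeping the convolution convention consistent with $\hat h_\rho=\sum_x h(x)\rho(x)$, so that the Convolution theorem holds in the stated order.

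As an alternative route for the ($\Leftarrow$) direction of Part 1, commutation for all $h$ forces $\hat f(\rho)$ to commute with all $\hat h(\rho)$, which range over all matrices by Peter--Weyl. Hence $\hat f(\rho)$ is scalar, and Fourier inversion shows $f$ is a linear combination of characters, i.e.\ a class function. The direct delta-function argument above is simpler, so I will use it.
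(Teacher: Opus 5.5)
Your proof is correct, and Part 2 is exactly the Schur's-lemma argument the paper relies on; the paper gives no proof of its own, deferring to Terras and to Schur's lemma (as also invoked in the main text to obtain Eq.~\eqref{eq:diffusion_kernel_fourier_transform}), and your trace computation reproduces Eq.~\eqref{eq:c_lambdas}. Your $\delta_g$ argument for Part 1 is a valid elementary proof of the equivalence, which the paper leaves entirely to the cited reference.
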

\section{Irreducible representations of $\sn$}\label{app:irrep_sn}

            In Appendix \ref{app:representation_theory} irreducible representations have been introduced as the elementary blocks in which reducible representations can be decomposed. This section formalizes the general structure of the irreducible representations specifically for $\sn$. The discussion starts from the concepts of partition, Young diagrams, and Young tableaux in the context of the permutation group.

            \subsubsection{Partitions and Young diagrams}

                A partition of a positive integer $n$ is a tuple of positive integers:
                \begin{equation}
                    \lambda=\left(\lambda_1, \lambda_2, \ldots, \lambda_k\right)
                \end{equation}
                such that:
                \begin{equation}
                \label{partition_rule}
                    \lambda_1 \geq \lambda_2 \geq \cdots \geq \lambda_k>0 \quad \text { and } \quad \sum_{i=1}^k \lambda_i=n .
                \end{equation}

                The standard notation to denote that $\lambda$ is a partition of $n$ is: $\lambda \vdash n$. A useful graphical tool to depict partitions are Young diagrams, which are box diagrams where each row $i$ is composed by $\lambda_i$ boxes. 
                \newline
                Consider for example $S_3$; for $n=3$ the partitions are $(3)$, $(2,1)$, and $(1,1,1)$, associated to the Young diagrams:
                
                \begin{center}
                    \Yboxdim{1.2em}
                        (3):\quad \yng(3) \qquad (2,1):\quad \yng(2,1) \qquad (1,1,1):\quad \yng(1,1,1)
                    \end{center}

                Similarly, for $S_4$ the partitions can be obtained from the ones of $S_3$, and adding a box in every possible arrangement that doesn't violate Eq.\eqref{partition_rule}: $(4)$, $(3,1)$, $(2,2)$, $(2,1,1)$, $(1,1,1,1)$.

            \subsubsection*{Dimension of the irreducible representations}
            
                The irreducible representations of $\sn$ are in one-to-one correspondence with partitions $\lambda \vdash n$. The irrep associated with the partition $\lambda$ is in fact labeled $\rho_\lambda$, and its dimension $d_\lambda = \dim \rho_\lambda$ can be described combinatorially in terms of \emph{standard Young tableaux} of shape $\lambda$.

                \begin{definition}[Standard Young tableau]
                    Let $\lambda \vdash n$ and let $Y(\lambda)$ be its Young diagram. A \emph{standard Young tableau} of shape $\lambda$ is a filling of the boxes of $Y(\lambda)$ with the integers $1,2,\ldots,n$, such that the entries strictly increase along each row and each column.
                \end{definition}
                
                A fundamental result is that the dimension $d_\lambda$ of the irrep $\rho_\lambda$ is equal to the number of standard Young tableaux of shape $\lambda$.
                The usual simple example of $S_3$ can be used to illustrate this:
                \begin{itemize}
                    \item for $\lambda=(3)$ there is only one possible way to fill the Young diagram:
                        \young(123)
                        and for this reason $d_{(3)}=1$.
                    \item $\lambda=(2,1)$ has two possible tableaux:
                        \young(12,3) \qquad \young(13,2)
                        and in fact $d_{(2,1)}=2$.
                    \item $\lambda=(1,1,1)$ has one single possible filling: 
                        \young(1,2,3)
                        and $d_{(1,1,1)}=1$.
                \end{itemize}

                These dimensions satisfy the identity:
                \begin{equation}
                    \sum_{\lambda \vdash 3} d_\lambda^2=1^2+2^2+1^2=6=\left|S_3\right|
                \end{equation}
                which holds true for any $n$.

                It's important to understand that this simple case is very limited, and that the number of partitions, and tableaux explode rapidly with increasing $n$, making once again the problem of working with the full regular representation of $\sn$ complex to handle by a classical algorithm.
                There is however a closed form expression to compute the dimension of an irrep instead of building manually the tableaux; the equation is usually introduced as \emph{Hook-length formula}.

                For each box $c$ in the Young diagram $Y(\lambda)$, the \emph{hook length} $h(c)$ is defined as:

                \begin{align*}
                h(c)
                  &= \#\{\text{boxes to the right of } c \text{ in the same row}\} \\
                  \quad&+ \#\{\text{boxes below } c \text{ in the same column}\}
                    + 1.
                \end{align*}

                \begin{theorem}[Hook-length formula]
                \label{Hook}
                Let $\lambda \vdash n$ be a partition of $n$, and $Y(\lambda)$ its Young diagram. The dimension $d_\lambda$ of the irreducible representation $\rho_\lambda$ of $\sn$ is given by:
                
                \begin{equation}
                d_\lambda=\frac{n!}{\prod_{c \in Y(\lambda)} h(c)}.
                \label{app:eq:hook_length}
                \end{equation}
                \end{theorem}
                
                Therefore, each partition $\lambda \vdash n$ labels a unique irreducible representation $\rho_\lambda$ of $\sn$ with dimension $d_\lambda$, given either by the number of standard Young tableaux of shape $\lambda$ or equivalently by the Hook-length formula.
                
                Every finite-dimensional representation of $\sn$ can be decomposed into a direct sum of these irreps. In particular, the regular representation decomposes as
                \begin{equation}
                    \rho_{\mathrm{reg}}
                    \cong
                    \bigoplus_{\lambda \vdash n} d_\lambda\, \rho_\lambda,
                \end{equation}
                where each irrep $\rho_\lambda$ appears with multiplicity equal to its dimension $d_\lambda$.
\newpage
\section{Statistical interpretations of the Fourier spectrum} \label{app:spectrum_stat_interp}

The Fourier coefficients of a function $h:\mathbb{S}_n\rightarrow\mathbb{C}$,
\begin{equation}
    \label{fcoef}
    \left[\hat{h}_\lambda\right]_{i j}=\sum_\sigma h(\sigma) \rho_\lambda(\sigma)_{i j}\,,
\end{equation}
arise from expanding $h$ onto the matrix-element basis functions $\rho_\lambda(\cdot)_{ij}$ of each irreducible representation $\rho_\lambda:\mathbb{S}_n\rightarrow GL(V_\lambda)$.

When $h$ is a probability distribution over permutations, these coefficients acquire a statistical meaning: they encode the marginal probabilities of $h$ at various orders. To make this precise, we introduce the \emph{marginal subspaces} $M_\lambda$ (Young permutation modules), which collect all observable statistics of a given order---e.g., $M^{(n-1,1)}$ contains all first-order marginals. A basis for $M_\lambda$ is provided by the \emph{indicator functions} $m_\lambda(\sigma)_{i,j}$, each of which equals $1$ when a specific assignment of objects to positions holds under~$\sigma$ and $0$ otherwise. For example, a first-order indicator function tests whether \textit{object $i$ is in position $j$}, while a second-order one tests whether \textit{objects $i_1, i_2$ are in positions $j_1, j_2$} simultaneously.

To formally describe how these marginal subspaces decompose into irreducible representations (Specht modules), we rely on a foundational result known as Young's Rule. A specific irreducible representation $V_\mu$ may appear multiple times within the decomposition of a marginal subspace $M_\lambda$. This multiplicity is quantified by the Kostka number, denoted $K_{\mu\lambda}$. Combinatorially, $K_{\mu\lambda}$ counts the number of semistandard Young tableaux of shape $\mu$ and weight $\lambda$. Algebraically, it tells us exactly how many copies of the ``pure'' frequency component $V_\mu$ exist inside the observable marginal space $M_\lambda$.

With this definition in hand, Young's Rule expresses the decomposition of $M_\lambda$ as a direct sum over all integer partitions $\mu$ that dominate $\lambda$:
\begin{equation}
    M_\lambda \cong \bigoplus_{\mu \trianglerighteq \lambda} K_{\mu \lambda}\, V_\mu\,.
\end{equation}
If $h$ is a probability distribution, evaluating its Fourier transform at the indicator functions yields the exact marginal probabilities:
\begin{equation}
    \hat{m}(\lambda)_{i,j}=\sum_\sigma h(\sigma)\, m_\lambda(\sigma)_{i,j}\,.
\end{equation}
A full derivation of this correspondence can be found in \cite{JMLR:v10:huang09a}. Below, we illustrate the key ideas through the first non-trivial case.

\paragraph{First-order marginals and the $(n-1,1)$ irrep.}
The first-order marginal probability $p_i(j)$ is the probability that object $i$ is mapped to position $j$, independently of the remaining $n-1$ objects:
\begin{equation}
    p_i(j) = \sum_{\sigma : \sigma(i)=j} h(\sigma)\,.
\end{equation}
The $n \times n$ matrix $P$ with entries $P_{i,j} = p_i(j)$ fully characterizes the first-order statistics of $h$.

By Young's Rule, the first-order marginal subspace decomposes as $M^{(n-1,1)} \cong V^{(n)} \oplus V^{(n-1,1)}$, where $V^{(n)}$ is the trivial (constant, uniform) component and $V^{(n-1,1)}$ is the standard irrep. Since a uniform distribution assigns probability $1/n$ to every object--position pair, the $V^{(n-1,1)}$ component of the marginals is obtained by subtracting this baseline:
\begin{equation}
    \left[\hat{h}_{(n-1,1)}\right]_{i,j} = p_i(j) - \frac{1}{n}\,.
\end{equation}
These centered marginals correspond directly to the Fourier coefficient $\hat{h}_{(n-1,1)}$.

\paragraph{Dimensional consistency.}
This correspondence can be verified by counting the degrees of freedom of the marginal matrix $P$. Normalization of $h$ requires each row to sum to $1$:
\begin{equation}
    \sum_{j=1}^n p_i(j) = \sum_{j=1}^n \sum_{\sigma: \sigma(i)=j} h(\sigma) = \sum_{\sigma \in \mathbb{S}_n} h(\sigma) = 1\,,
\end{equation}
imposing $n$ row constraints. Similarly, every position must be filled by exactly one object, giving $n$ column constraints:
\begin{equation}
    \sum_{i=1}^n p_i(j) = 1\,.
\end{equation}
Since the total sum $\sum_{i,j} p_i(j) = n$ is implied by either set, one constraint is redundant, leaving $2n-1$ independent constraints. The number of free parameters in $P$ is therefore:
\begin{equation}
    \# \text{d.o.f.} = n^2 - (2n - 1) = (n-1)^2\,,
\end{equation}
which matches the size of the Fourier coefficient matrix $\hat{h}_{(n-1,1)}$, since $\dim V^{(n-1,1)} = n-1$ yields an $(n-1) \times (n-1)$ matrix.

\section{Details on the Diffusion operation}\label{app:diffusion}
\subsection{Non-Abelian convolution}
For the main text we have dropped the double time indices, and have kept only the time index for the distribution $h$. Here, we explicitly write them for clarity, and express the general case where the diffusion kernel $q$ is time dependent. Using the Markov chain assumption, we show that a diffusion operation on $h$ is equivalent to a non-abelian convolution: 
\begin{align}
    h(\sigma^{(t+1)}) &= \sum_{\sigma^{(t)}}h(\sigma^{(t+1)}|\sigma^{(t)})\cdot h(\sigma^{(t)}) \\
    &=\sum_{\{(\sigma^{(t)},\pi^{(t)}):\;\sigma^{(t+1)}=\pi^{(t)}\cdot\sigma^{(t)}\}} q^{(t)}(\pi^{(t)})\cdot h^{(t)}(\sigma^{(t)})\\
    &=\sum_{\sigma^{(t)}}q^{(t)}(\sigma^{(t+1)}\cdot (\sigma^{(t)})^{-1})\cdot h(\sigma^{(t)}) \label{app:eq:non_abelian_conv_diffusion_intermediate} \\
    &\equiv\left(q^{(t)}\star\, h \right)(\sigma^{(t+1)}).
    \label{app:eq:non_abelian_conv_diffusion}
\end{align}

\subsection{Markovian Matrix Formulation}\label{app:sec:markov_matrix_form}
Here we formulate the convolution in Eq.~\eqref{app:eq:non_abelian_conv_diffusion} as a Markov process, adopting the usual linear algebraic perspective. Let the state space be the finite group $G$ (here the symmetric group $\sn$) with cardinality $N = |G| = n!$. We fix an arbitrary indexing of the group elements $G = \{g_1, g_2, \dots, g_N\}$.
The probability distribution $h^{(t)}: G \to \mathbb{R}$ is represented as a column vector $\mathbf{h}^{(t)} \in \mathbb{R}^N$, where the $i$-th component corresponds to the probability mass on the $i$-th group element:
\begin{equation}
    (\mathbf{h}^{(t)})_i = h^{(t)}(g_i).
\end{equation}
We seek to express the time evolution as a linear transformation via a transition matrix $\mathbf{Q}^{(t)} \in \mathbb{R}^{N \times N}$, such that:
\begin{equation}
    \mathbf{h}^{(t+1)} = \mathbf{Q}^{(t)} \mathbf{h}^{(t)}.
\end{equation}
Writing this matrix-vector multiplication in index notation yields:
\begin{equation}
    (\mathbf{h}^{(t+1)})_i = \sum_{j=1}^{N} \mathbf{Q}^{(t)}_{ij} (\mathbf{h}^{(t)})_j.
    \label{eq:matrix_def}
\end{equation}
We compare Eq.~\eqref{eq:matrix_def} directly to the non-Abelian convolution derived in Eq.~\eqref{app:eq:non_abelian_conv_diffusion_intermediate}. Identifying the target state $\sigma^{(t+1)}$ with the group element $g_i$ and the source state $\sigma^{(t)}$ with $g_j$, Eq.~\eqref{app:eq:non_abelian_conv_diffusion_intermediate} becomes:
\begin{equation}
    h^{(t+1)}(g_i) = \sum_{j=1}^N q^{(t)}\left(g_i \cdot g_j^{-1}\right) \cdot h^{(t)}(g_j).
\end{equation}
By matching terms with Eq.~\eqref{eq:matrix_def}, we identify the entries of the transition matrix as:
\begin{equation}
    \mathbf{Q}^{(t)}_{ij} = q^{(t)}\left(g_i \cdot g_j^{-1}\right).
\end{equation}

\paragraph*{$G$-Circulant Structure.}
The matrix $\mathbf{Q}^{(t)}$ exhibits a specific symmetry: its entries depend solely on the ``difference" (via group inverse) between the row and column indices. Such a matrix is defined as a \textit{$G$-circulant matrix}. This is a direct generalization of standard circulant matrices used in classical signal processing, which arise when the underlying group is the cyclic group $C_n$ (Abelian). 
While standard circulant matrices are diagonalized by the Discrete Fourier Transform (DFT), $G$-circulant matrices are block-diagonalized by the Generalized Fourier Transform over the group $G$ (cf.~ Equation~\eqref{eq:group_fourier_transform}).
\paragraph*{Connection to Representation Theory.}
This matrix $\mathbf{Q}^{(t)}$ is formally the image of the diffusion kernel under the left regular representation of the group algebra. If we consider the kernel as an element $\mathcal{K} = \sum_{g \in G} q^{(t)}(g) g$ in the group algebra $\mathbb{R}[G]$, then $\mathbf{Q}^{(t)}$ is the matrix representing left multiplication by $\mathcal{K}$.
\paragraph*{Dimensions.}
The dimensionality of this system is determined by the size of the group. Since $G = \sn$, the state vector $\mathbf{h}^{(t)}$ has dimension $N = n!$. Consequently, the transition matrix $\mathbf{Q}^{(t)}$ has dimensions $n! \times n!$. For example, even for a small permutation group like $S_5$, the transition matrix is $120 \times 120$; for $S_{52}$, the dimension $52!$ renders explicit matrix construction computationally intractable, necessitating the algebraic convolution form derived in (E4).

\subsection{Scalability of the block-encoding}
In general, to block-encode a matrix $A$ one needs to rescale by $A/\alpha$ to ensure that the resulting matrix,
\begin{equation}
    D = \begin{pmatrix}
    A/\alpha & * \\
    * & * 
    \end{pmatrix},
    \label{eq:diffusion_operator_uniform_fourier_space}
\end{equation}
is unitary. One can choose $\alpha=\max |\sigma(A)|$, where $\sigma(A)$ are the singular values of $A$---$\sigma$ elsewhere in the text is a permutation---which in the case of block-encoding the diffusion operator $D$: $\alpha = \max |c_\lambda|$. 
For computing the character of transpositions we can assume that $\rho_\lambda$ is unitary without loss of generality due to the cyclicity of the trace. Furthermore, we use that unitary matrices have eigenvalues of modulus one.
\begin{equation}
    \chi_\lambda = \Tr\rho_\lambda = \sum_{\ell=1}^{d_\lambda} e_\ell \leq \sum_{\ell=1}^{d_\lambda} |e_\lambda| = d_\lambda.
\end{equation}
We can show in the same way $\chi_\lambda\geq -d_\lambda$, where  equality is achieved only for the alternating representation $\lambda=(1,1,\dots,1)$ when evaluating the character on transpositions. Therefore, $\abs{\chi_\lambda/d_\lambda}\leq 1$. Using $p\in [0,1]$:
\begin{equation}
    1\leq2p-1\leq c_\lambda \leq p+(1-p) = 1 \Rightarrow \abs{c_\lambda}\leq 1.
    \label{eq:c_lambda_bounds}
\end{equation}
Therefore, we can set the scaling factor $\alpha=1$. \\

\subsubsection{Efficient implementation of the block-encoding oracle}\label{app:block_encoding_oracle_implementation}

\paragraph*{Analytical formulas for $c_\lambda$.}
The eigenvalue $c_\lambda = p + (1-p)\,r_\lambda$ (Eq.~\eqref{eq:c_lambdas}) is determined by the normalised character ratio $r_\lambda\coloneqq\chi_\lambda((12))/d_\lambda$. For a partition $\lambda = (\lambda_1,\lambda_2,\dots,\lambda_\ell)$ with conjugate partition $\lambda'=(\lambda'_1,\lambda'_2,\dots)$ (obtained by transposing the Young diagram of $\lambda$, i.e., $\lambda'_i$ counts the boxes in the $i$-th column), the closed-form expression from \cite{diaconis1988group} and \cite{re1950some} reads:
\begin{equation}
    r(\lambda)=\frac{1}{n(n-1)} \sum_{i=1}^{\ell}\left[\lambda_i^2-(2 i-1) \lambda_i\right]=\frac{1}{\binom{n}{2}} \left(\sum_{i}\binom{\lambda_i}{2}-\sum_{i}\binom{\lambda_i^{\prime}}{2}\right),
\end{equation}
a special case for transpositions of a more general result from Frobenius.
For a single $\lambda$, computing $r_\lambda$ and consequently $c_\lambda$ is classically efficient. However, the diffusion must be applied to all frequencies, requiring these quantities for all partitions $\lambda \vdash n$. The number of partitions $p(n)$ grows superpolynomially, following the asymptotic expression\footnote{The formula was obtained by G. H. Hardy and Ramanujan in 1918 and independently by J. V. Uspensky in 1920.}
\begin{equation}
    p(n) \sim \frac{1}{4n\sqrt{3}}e^{\pi\sqrt{\frac{2n}{3}}}.
    \label{app:eq:ramanujan}
\end{equation}
While asymptotically $p(n)$ scales super-polynomially, for moderate $n$ where applications of interest might reside, it is much smaller than $n!$ (see Figure~\ref{fig:partitions_of_n_vs_n}).

\paragraph*{Quantum Read-Only Memory (QROM).}
When classical pre-computation of all $c_\lambda$ is feasible, one can load them via QROM. This approach is efficient for moderate $n\approx 30$--$50$, since $d_\lambda$ and $\chi_\lambda((2,1^{n-2}))$ are classically manageable for $n\lesssim 100$. However, the gate cost scales as $\mathcal{O}(p(n))$, which becomes infeasible for large $n$.

\paragraph*{Quantum Arithmetic.}
The combinatorial formulas above admit reversible arithmetic implementations using $\mathcal{O}(n\log n)$ qubits and $\mathrm{poly}(n)$ gates, to coherently compute $c_\lambda$ up to precision $\epsilon=2^{-b}$ (e.g., for $b=30$, $\epsilon\approx 10^{-9}$).
Hence, the diagonal operator $\hat{q}$ admits a block-encoding that is efficient in the standard asymptotic sense, requiring $\mathrm{poly}(n)$ qubits, $\mathrm{poly}(n)$ gates, and $\mathrm{polylog}(1/\varepsilon)$ precision overhead. There is a crossover point beyond which quantum arithmetic becomes preferable to QROM.

\paragraph*{Amplitude Encoding.}
The value $c_\lambda$ is encoded into the amplitude via a controlled rotation $R_y(\theta_\lambda)$, where $\theta_\lambda = 2\arccos(c_\lambda)$. 
In practice, the arithmetic or QROM step loads the binary approximation of the \textit{angle} $\theta_\lambda$ into the $k$-qubit register $\ket{\theta_\lambda}_c = \ket{b_1 b_2 \dots b_k}$. 
The rotation is then implemented as a sequence of fixed rotations controlled by the individual bits of the register:
\begin{equation}
    CR_y(\theta_\lambda) = \prod_{j=1}^k CR_y\left( \frac{\pi}{2^{j-1}} \right)^{(j)},
\end{equation}
where the $j$-th gate rotates the target ancilla by $\pi/2^{j-1}$ only if the $j$-th qubit of the register is $\ket{1}$.
The resulting state is:
\begin{equation}
    \ket{\lambda}\ket{\theta_\lambda}_c \left( \cos(\frac{\theta_\lambda}{2}) \ket{0}_A + \sin(\frac{\theta_\lambda}{2})\ket{1}_A \right).
\end{equation}
Finally, we uncompute the angle register. Conditioned on the ancilla being $\ket{0}_A$, we have successfully applied the amplitude $c_\lambda$.

\subsection{Success probability at $t=0$}\label{app:ps_proof_t0}
We first investigate the initial diffusion step at $t=0$. 
In the following, we show that we can exactly compute $p_s^{(0)}$, allowing us to used standard (Grover) amplitude amplification requiring $\mathcal{O}(1/\sqrt{p})$ iterations of amplitude amplification to amplify the success probability $p_s^{(0)}$ to nearly 1. For this step to be efficient we require $p\geq 1/\text{poly}(n)$.
\begin{align}  p_s^{(0)}&=\norm{\hat{q}\ket{\hat{\psi}^{(0)}}}^2\\
    &=\sum_{\lambda\vdash n}\sum_{i=1}^{d_\lambda}\abs{c_\lambda}^2\frac{d_\lambda}{n!}\\
    &=\sum_{\lambda\vdash n}\abs{c_\lambda}^2\frac{d_\lambda^2}{n!}
\end{align}
This sum is a weighted average of $|c_\lambda|^2$ under the \textit{Plancherel measure} $\mu_\lambda(n)=\frac{d_\lambda^2}{n!}$. 
The Plancherel measure concentrates on partitions far from the trivial representation.
Simultaneously, $|c_\lambda|^2$ decreases with dominance, i.e., increasing ``frequency''.
Hence, we see that the sum will decay with $n$, but the question is how fast, such that we can efficiently amplify the success probability. First, we compute for the case of uniform diffusion. 
Expanding $|c_\lambda|^2$
\begin{align}
    p_s^{(0)} &=\sum_{\lambda\vdash n}\left[p^2\frac{d_\lambda^2}{n!}+2p(1-p)\frac{d_\lambda\chi_\lambda((2,1^{n-2}))}{n!}+(1-p)^2\frac{\chi_\lambda((2,1^{n-2}))^2}{n!}\right].
    \label{eq:p_s_class_function_noise}
\end{align}
To evaluate each term, we use fundamental results in representation theory. The first term is equal to $p^2$, since 
\begin{equation}
\sum\limits_{\lambda\vdash n}d_\lambda^2=|G|=|\sn|=n!.
\end{equation}
For the second term, we observe that the sum over $d_\lambda\chi_\lambda((2,1^{n-2}))$ is actually the character of the regular representation evaluated on the element $(2,1^{n-2})$. This holds the regular representation decomposes to all irreps of $\sn$ each appearing in the block decomposition with multiplicity equal to their dimension. 
\begin{equation}
    \rho_{\text{reg}}=\bigoplus_{\lambda\vdash n}d_\lambda \rho_\lambda \Rightarrow \chi_{\text{reg}}(g) = \sum_{\lambda\vdash n}d_\lambda \chi_\lambda(g)
\end{equation}
Furthermore, the character of the regular representation satisfies
\begin{equation}
    \chi_{\text{reg}}(g) =\begin{cases}
        n!\quad \text{if }g=e\\
        0\quad\text{otherwise}.
    \end{cases}
\end{equation}
Hence, for $g=(2,1^{n-2})$ the second term in eq.~\eqref{eq:p_s_class_function_noise} is zero.

For the third term, we use the \textit{character orthogonality relations} and the \textit{Orbit-Stabilizer Theorem}, which we explain in the following.
The space of complex-valued \textit{class functions} of a finite group $G$ is equipped with an inner product:
\begin{equation}
    \Braket{\alpha,\beta}\coloneqq \frac{1}{|G|}\sum_{g\in G}\alpha(g) \overline{\beta(g)}
\end{equation}
The irreducible characters form an orthonormal basis in the space of class functions, themselves being class functions by definition due to the cyclicity of the trace.
This yields the so-called orthogonality relation between the rows of the character table:
\begin{equation}
\Braket{\chi_\mu,\chi_\nu} = \delta_{\mu\nu},
\end{equation}
where $\mu,\nu\in\Lambda_G$.
Furthermore, for $g,h\in G$, we have the orthogonality relation of the columns of the character table
\begin{equation}
\sum_{\chi_\lambda} \chi_\lambda(g)\overline{\chi_\lambda(h)}= \begin{cases}
    |C_G(g)|, \;\text{if $g$, $h$ are conjugate}\\
    0\qquad\text{otherwise}.
\end{cases}
\label{eq:character_ortho_relation_columns}
\end{equation}
where the sum is over all irreducible characters of $G$ and $C_G(g)$ the \textit{centralizer} of $g$.
We provide the following definitions for completeness. 

Let $G$ be a group and $S$ a subset, that is not necessarily a subgroup.

\begin{definition}[Centralizer]
The centralizer of $S$, denoted $C_G(S)$, is the set of elements of $G$ that commute with all elements of $S$:
\begin{equation}
    C_G(S)\coloneqq \{g\in G\;|\;gsg^{-1}=s\quad\forall s\in S\}.
\end{equation}
\end{definition}
It can be shown that $C_G(S)\leq G$.
\begin{definition}[Center of a group]
The subgroup $C_G(G)$ is the set of elements that commute with all elements of $G$, is denoted $Z(G)$, and is called the center of $G$. 
\end{definition}
The center of a group is by construction abelian.
\begin{definition}[Normalizer]
The normalizer of $S$ in $G$ is the set
\begin{equation}
    N_G(S)\coloneqq \{g\in G\;|\; gSg^{-1}=S\},
\end{equation}
where $gSg^{-1} =\{gsg^{-1} \;|\; s\in S\}$.
\end{definition}
In plain words the Normalizer is the set of all elements in $G$ that leave $S$ invariant. The normalizer is also a subgroup of $G$. It is weaker to be in the normalizer of a set than to be in its centralizer $C_G(S)\leq N_G(S)$. Both the centralizer and normalizer subgroups are determined by $G$ acting on subsets of itself by conjugation. We could also define another set, the \textit{stabilizer}, with respect to the group action rather than conjugation.
\begin{definition}[Stabilizer]
Let $s\in S$. The stabilizer of $s$ in $G$ is the set
\begin{equation}
    G_s = \{g\in G\;|\; g\cdot s = s\}.
\end{equation}
\end{definition}
Again, $G_s\leq G$. Note that there are many notations used in the literature for denoting the stabilizer, such as $\text{Stab}_G(s)$.

It can be shown that conjugacy is an equivalence relation and therefore partitions the group $G$ into equivalence classes, called conjugacy classes (cf.~Definition~\ref{app:def:conjugacy_class}).
Conjugacy classes have properties that are analogous to cosets; we informally state some of them here. Since conjugacy classes form equivalence classes they ``tile'' the group. Each element $g$ belongs to precisely one conjugacy class, and two conjugacy classes $\text{cl}(a)$ and $\text{cl}(b)$ are equivalent if and only if $a$ and $b$ are conjugate, and disjoint otherwise. The \textit{class number} of $G$ is the number of distinct (nonequivalent) conjugacy classes. All elements belonging to the same conjugacy class have the same \textit{order}. For the symmetric group, conjugacy classes can be categorized by the \textit{cycle type}, e.g., transpositions are 2-cycles and are order 2. 
\begin{proposition}
    Let a permutation $\sigma\in\sn$ be of cycle type $(\alpha_1,\alpha_2,\dots,\alpha_d)$, where $1\leq\alpha_1\leq\dots\alpha_i\leq\alpha_{i+1}\leq\dots\alpha_d\leq n$ is the length of each cycle and $\sum_{i=1}^d \alpha_i=n$. For example, $\tau=(132)(45)(67)$ is of cycle type $(3,2,2)$ and $\xi=(123)(456)(7)$ is of cycle type $(3,3,1)$. Fixed points, i.e., objects that are not permuted, such as $(7)$ in the previous example, can be omitted from the notation. Each cycle can be decomposed to $\alpha_i-1$ transpositions. Subsequently, the number of total transpositions needed to express $\sigma$ is
    \begin{equation}
        \sum_{i=1}^{d}(\alpha_i-1)=n-d
    \end{equation}
\end{proposition}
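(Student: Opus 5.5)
The statement to prove is the Proposition: a permutation $\sigma\in\sn$ of cycle type $(\alpha_1,\dots,\alpha_d)$, counting fixed points as cycles of length one, can be written as a product of $\sum_{i=1}^d(\alpha_i-1)=n-d$ transpositions. The plan is to decompose $\sigma$ into disjoint cycles, write each cycle as a product of transpositions, and then count.

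First, recall that every permutation factors uniquely, up to order, into disjoint cycles. These cycles are the orbits of $\langle\sigma\rangle$ acting on $\{1,\dots,n\}$. Since the orbits partition the $n$ points, their lengths satisfy $\sum_i\alpha_i=n$. This identity is exactly why fixed points must be included as $1$-cycles when counting $d$.

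Second, I will show that a single $k$-cycle is a product of $k-1$ transpositions. Concretely, take
\begin{equation}
(a_1\,a_2\,\dots\,a_k)=(a_1\,a_k)(a_1\,a_{k-1})\cdots(a_1\,a_2),
\end{equation}
with composition read right to left. I will verify this by tracking each $a_j$ through the factors, which is a short induction on $k$. For $k=1$ the product is empty and contributes zero transpositions, consistent with $\alpha_i-1=0$.

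Third, I multiply the decompositions of all the cycles. Disjoint cycles commute, so the order of the factors does not matter. The total count is
\begin{equation}
\sum_{i=1}^d(\alpha_i-1)=\Big(\sum_{i=1}^d\alpha_i\Big)-d=n-d .
\end{equation}

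There is no real obstacle here. The only care needed is the bookkeeping convention that $d$ includes the omitted fixed points, since otherwise the identity $\sum_i\alpha_i=n$ fails.

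If the statement is meant to say this is the minimal number of transpositions, I would add one more step. I would show that multiplying by any transposition changes the number of cycles by exactly $\pm1$: it merges two cycles if the two swapped points lie in different cycles, and splits one cycle if they lie in the same cycle. The identity has $n$ cycles, so reaching a permutation with $d$ cycles requires at least $n-d$ transpositions. This matches the construction above, so $n-d$ is optimal. This minimality step is the only part that needs a real argument, namely the merge/split case check.
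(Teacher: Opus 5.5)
Your proposal is correct and follows essentially the same route as the paper, which states without further argument that each $\alpha_i$-cycle decomposes into $\alpha_i-1$ transpositions and then sums using $\sum_i\alpha_i=n$, with fixed points counted in $d$; your explicit factorization $(a_1\,a_k)\cdots(a_1\,a_2)$ is correct. Your optional merge/split argument goes further than the paper, which only exhibits a decomposition into $n-d$ transpositions and never proves that this number is minimal.
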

\begin{definition}[Orbit]
    When $G$ acts on a set $S$ we call an orbit of $s\in S$, denoted $\text{Orb}_G(s)$, the set of all elements in $S$ to which $s$ can be moved to by the action of elements in $G$:
    \begin{equation}
        \text{Orb}_G(s) = \{g\cdot s\;|\; g\in G\}
    \end{equation}
\end{definition}
\begin{theorem}[Orbit-Stabilizer Theorem]
    For any element $s\in S$ the size of its orbit multiplied by the size of its stabilizer is equal to the order of the group:
    \begin{equation}
        |\text{Orb}_G(s)| \cdot |\text{Stab}_G(s)|=|G|
    \end{equation}
\end{theorem}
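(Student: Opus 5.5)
The identity to prove is $|\mathrm{Orb}_G(s)|\cdot|\mathrm{Stab}_G(s)| = |G|$. The plan is to exhibit an explicit bijection between the orbit $\mathrm{Orb}_G(s)$ and the set of left cosets $G/\mathrm{Stab}_G(s)$, and then apply Lagrange's theorem. We take the action of $G$ on $S$ to be a genuine group action: $e\cdot s=s$ and $(g_1g_2)\cdot s = g_1\cdot(g_2\cdot s)$. The first step is to confirm that $H\coloneqq\mathrm{Stab}_G(s)$ is a subgroup of $G$, which the paper asserts just after the definition. We check it directly:
\begin{itemize}
\item $e\in H$ because $e\cdot s=s$.
\item If $g,h\in H$, then $(gh)\cdot s = g\cdot(h\cdot s) = g\cdot s = s$, so $gh\in H$.
\item If $g\in H$, then $g^{-1}\cdot s = g^{-1}\cdot(g\cdot s) = e\cdot s = s$, so $g^{-1}\in H$.
\end{itemize}

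Next we define the map $\Phi: G/H \to \mathrm{Orb}_G(s)$ by $\Phi(gH) = g\cdot s$. The key computation is well-definedness together with injectivity, and both follow from one chain of equivalences:
\begin{equation}
g\cdot s = g'\cdot s \iff g^{-1}g'\cdot s = s \iff g^{-1}g'\in H \iff gH = g'H .
\end{equation}
Reading the chain from right to left shows $\Phi$ is well defined; reading it from left to right shows $\Phi$ is injective. Surjectivity is immediate, since every element of the orbit has the form $g\cdot s=\Phi(gH)$ by the definition of the orbit. Hence $\Phi$ is a bijection and $|\mathrm{Orb}_G(s)| = [G:H]$.

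Finally, since $G$ is finite, Lagrange's theorem gives $[G:H]\,|H| = |G|$, because the left cosets of $H$ partition $G$ into blocks of size $|H|$. Substituting $[G:H]=|\mathrm{Orb}_G(s)|$ yields the claim.

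The only step with any substance is the well-definedness and injectivity of $\Phi$, which reduces to the displayed equivalence; everything else is bookkeeping. In the application the paper has in mind, namely the conjugation action used with the column orthogonality relation, the orbit is the conjugacy class of $g$ and the stabilizer is the centralizer $C_G(g)$. This gives $|\mathrm{Cl}(g)|\,|C_G(g)|=|G|$, the form needed to evaluate $\sum_\lambda \chi_\lambda((2,1^{n-2}))^2 = n!/\binom{n}{2}$ for the $p_s^{(0)}$ computation.
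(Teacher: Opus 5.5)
Your proof is correct: the equivalence $g\cdot s = g'\cdot s \iff g^{-1}g'\in \mathrm{Stab}_G(s) \iff g\,\mathrm{Stab}_G(s) = g'\,\mathrm{Stab}_G(s)$ gives a well-defined bijection from the cosets onto $\mathrm{Orb}_G(s)$, and Lagrange's theorem finishes the argument. The paper gives no proof of this theorem; it quotes it as standard background. It uses it only as you describe, for the conjugation action, to obtain $|C_{\sn}((2,1^{n-2}))| = n!/\binom{n}{2} = 2(n-2)!$ in the column-orthogonality step. Your argument is therefore the standard proof the paper implicitly relies on.
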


The properties of the objects defined above have many more interesting interplays that are outside the scope of this work.
A particular interplay that is useful for computing the desired sum emerges in the context where one considers the group action as a \textit{conjugation} where a group $G$ acts on itself. One can show that for $g,h\in G$ $ghg^{-1}$ indeed fulfills the axioms of group action. 
In this context, we can apply the Orbit-Stabilizer theorem to the conjugacy class and centralizer of an element $s$, where we identify the former as the orbit of $s$ and the latter as the stabilizer of $s$:
\begin{equation}
    |\text{cl}(s)| =\frac{|G|}{|C_G(s)|}.
\end{equation}
For the symmetric group and for the conjugacy class of transpositions which is of size $\binom{n}{2}$:
\begin{equation}
    |C_{\sn}((2,1^{n-2}))| = \frac{n!}{\binom{n}{2}}= 2\cdot(n-2)!
\end{equation}
Applying the above result together with the orthogonality relation in eq.~\eqref{eq:character_ortho_relation_columns}, evaluated at $g=h=(2,1^{n-2})$, we compute the sum appearing in the third term of eq.~\eqref{eq:p_s_class_function_noise} to be $\frac{2(1-p)^2}{n(n-1)}$.

Putting everything together for the probability of success $p_s$ for a diffusion that is a class function we have
\begin{equation}
    p_s = p^2+\frac{2(1-p)^2}{n(n-1)}.
\end{equation}
This means that we can efficiently implement amplitude amplification in $\mathcal{O}(1/\sqrt{p_s})=\mathcal{O}(n)$ steps.

\subsection{Success Probability at an arbitrary time step}\label{app:ps_proof_arbitrary}
The (unnormalized) target state after $d$ consecutive diffusion steps is:
\begin{equation}
    \hat{q}^d\ket{\hat{\psi}^{(t)}} = \frac{1}{\sqrt{N^{(t)}}}\sum_{\lambda\vdash n}\sum_{i,j=1}^{d_\lambda} c_\lambda ^d \left[\hat{h}^{(t)}_\lambda\right]_{ij} \ket{\lambda ij}.
\end{equation}
The success probability is the squared norm of this unnormalized state:
\begin{align}
    p_s^{(t)} &= \norm{\hat{q}^d\ket{\hat{\psi}^{(t)}}}^2 \nonumber \\
    &= \frac{1}{N^{(t)}} \sum_{\lambda\vdash n} |c_\lambda|^{2d} \norm{\hat{h}^{(t)}_\lambda}_{\text{HS}}^2 \label{eq:ps_fourier_sum} \\
    &= \frac{1}{\norm{h_\lambda^{(t)}}_2^2} \hat{q}^d \cdot \norm{\hat{h}^{(t)}_\lambda }_2^2 \quad \text{(Parseval's Identity)} \\
    &= \frac{\| q^{\star\, d} * h^{(t)} \|^2_2}{\norm{ h^{(t)}}^2_2} \\
    &= \sum_{\lambda\vdash n}|c_\lambda|^{2d}\frac{d_\lambda}{n!}\sum_{\sigma,\xi \in \sn} \frac{1}{N^{(t)}}\chi_\lambda(\xi^{-1}\sigma) h^{(t)}(\sigma) h^{(t)}(\xi)
\end{align}
where $\|\cdot\|_{\text{HS}}$ denotes the Hilbert-Schmidt norm $\Tr(A^\dagger A)$, and we used the property that the Fourier transform of a convolution is the point-wise product of Fourier coefficients.

\subsubsection{Lower Bound for Lazy Random Walks}
We first consider the case where the random walk is ``lazy,'' meaning the probability of remaining at the identity $p>1/2$.

\begin{theorem}[Lazy Walk Bound]
    Let $h: \sn \to \mathbb{C}$ be a function on the symmetric group representing the distribution at step $t$. Let $q$ be the transition kernel of a random walk defined by self-loop probability $p \in (0.5, 1]$. Then, for any $n \ge 2$, and $d\geq 1$ consequtive diffusion steps, the diffusion operation success probability satisfies:
    \begin{equation}
        p_s^{(t)} \ge (2p - 1)^{2d}.
    \end{equation}
    This bound is independent of $n$ and the support size of $h$.
\end{theorem}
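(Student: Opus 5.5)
The bound follows from the Fourier-space expression for the success probability derived just above in Eq.~\eqref{eq:ps_fourier_sum}, combined with a uniform lower bound on the diffusion eigenvalues $|c_\lambda|$. Concretely, we start from
\begin{equation}
    p_s^{(t)} = \frac{1}{N^{(t)}}\sum_{\lambda\vdash n}|c_\lambda|^{2d}\,\norm{\hat{h}^{(t)}_\lambda}_{\text{HS}}^2 .
\end{equation}
This is a weighted average of the numbers $|c_\lambda|^{2d}$. The weights are $w_\lambda = \norm{\hat{h}^{(t)}_\lambda}_{\text{HS}}^2/N^{(t)}$, which are nonnegative. By the Plancherel/Parseval identity for the unitarily normalized transform, they sum to one: $\sum_\lambda \norm{\hat{h}^{(t)}_\lambda}_{\text{HS}}^2 = \norm{h^{(t)}}_2^2 = N^{(t)}$. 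The unitarity of the QFT $\mathcal{F}$ gives this directly, since $\ket{\hat{\psi}^{(t)}}=\mathcal{F}\ket{\psi^{(t)}}$ has unit norm. Hence
\begin{equation}
    p_s^{(t)} \ge \min_{\lambda\vdash n}|c_\lambda|^{2d},
\end{equation}
regardless of the distribution $h^{(t)}$. This is why the bound is independent of the support, and of whatever diffusion and conditioning history produced $h^{(t)}$.

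Next we bound $c_\lambda$ from below using Eq.~\eqref{eq:c_lambdas}, $c_\lambda = p + (1-p)\,\chi_\lambda((2,1^{n-2}))/d_\lambda$. Since $\rho_\lambda$ is unitary and a transposition is an involution, $\rho_\lambda(\tau)$ has eigenvalues in $\{\pm1\}$. Therefore the character ratio $r_\lambda=\chi_\lambda/d_\lambda$ lies in $[-1,1]$, as already argued in the block-encoding scalability discussion of App.~\ref{app:diffusion}. The lower endpoint is attained only at the sign representation $\lambda=(1^n)$. It follows that
\begin{equation}
    c_\lambda \ge p-(1-p) = 2p-1 ,
\end{equation}
and for $p>1/2$ this quantity is strictly positive. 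Combined with $c_\lambda\le 1$, we get $0< 2p-1\le c_\lambda\le 1$, so $|c_\lambda|^{2d}\ge (2p-1)^{2d}$ for every $\lambda$. Substituting this into the averaged expression yields $p_s^{(t)}\ge (2p-1)^{2d}$ for all $n\ge 2$ and $d\ge1$.

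The only delicate point is the positivity step. The map $x\mapsto x^{2d}$ is monotone on $[0,1]$ but not on $[-1,1]$, so the lazy-walk assumption $p>1/2$ is exactly what guarantees $c_\lambda\ge 2p-1>0$ for all $\lambda$ and makes the pointwise bound valid. Note also that the displayed chain in Eq.~\eqref{eq:c_lambda_bounds} has a typo: its leftmost term should read $-1$ rather than $1$. The correct inequality used here is $2p-1\le c_\lambda$. The bound is tight when $h^{(t)}$ is concentrated on the sign irrep, for instance $h(\sigma)\propto \mathrm{sgn}(\sigma)$ viewed as a complex function. This shows the argument cannot be improved without distributional assumptions.
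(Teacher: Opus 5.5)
Your proposal is correct and follows essentially the same route as the paper. You write $p_s^{(t)}$ as a Parseval-normalized weighted average of $|c_\lambda|^{2d}$, bound it below by $\min_\lambda |c_\lambda|^{2d}$, and use that the transposition character ratio lies in $[-1,1]$ with its minimum attained at the sign representation; your explicit remark that $p>1/2$ forces $c_\lambda \ge 2p-1>0$ (which the paper uses implicitly when writing $\min_\lambda|c_\lambda| = |2p-1|$) and your correction of the leftmost term in Eq.~\eqref{eq:c_lambda_bounds} to $-1$ are both accurate.
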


\begin{proof}
    Starting from Eq.~\eqref{eq:ps_fourier_sum}, we can view $p_s^{(t)}$ as a weighted average of $|c_\lambda|^{2d}$:
    \begin{equation}
        p_s^{(t)} = \frac{\sum_{\lambda \vdash n} |c_\lambda|^{2d}  \|\hat{h}_\lambda\|_{\text{HS}}^2}{\sum_{\lambda \vdash n} \|\hat{h}_\lambda\|_{\text{HS}}^2}.
    \end{equation}
    Since norms are non-negative, the sum is lower-bounded by the smallest eigenvalue:
    \begin{equation}
        p_s^{(t)} \ge \min_{\lambda \vdash n} |c_\lambda|^{2d} .
    \end{equation}
    The eigenvalues of the diffusion kernel are given by $c_\lambda = p + (1-p) \frac{\chi_\lambda((2,1^{n-2}))}{d_\lambda}$. The character ratio $\frac{\chi_\lambda((2,1^{n-2}))}{d_\lambda}$ for a transposition lies in the interval $[-1, 1]$. The minimum value $-1$ is attained by the alternating (sign) representation $\lambda=(1^n)$.
    Therefore, for $p > 1/2$:
    \begin{equation}
        \min_\lambda |c_\lambda| = \left| p + (1-p)(-1) \right| = |2p - 1|.
    \end{equation}
    Taking the power $2d$ yields the result.
\end{proof}

\subsubsection{Lower Bound for General Rational Walks}
We now extend this result to general random walks where $p$ is any rational number, covering the regime where $p \le 1/2$.

\begin{theorem}[General Rational Walk Bound]
    Let $p = \frac{a}{b} \in (0, 1)$ be a rational number ($a,b\in\mathbb{Z}^+$). Assume $p$ is generic such that $c_\lambda \neq 0$ for all $\lambda \vdash n$. Then the success probability satisfies the polynomial lower bound:
    \begin{equation}
        p_s^{(t)} \ge \left(\frac{4}{b^2 n^4}\right)^d.
    \end{equation}
\end{theorem}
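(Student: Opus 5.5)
The plan is to reuse the weighted-average argument from the Lazy Walk Bound. Starting from Eq.~\eqref{eq:ps_fourier_sum}, $p_s^{(t)}$ is a convex combination of the values $|c_\lambda|^{2d}$, with weights $\|\hat{h}_\lambda\|_{\text{HS}}^2/\sum_\mu \|\hat{h}_\mu\|_{\text{HS}}^2$. Hence $p_s^{(t)} \ge \min_{\lambda\vdash n}|c_\lambda|^{2d}$, and it suffices to prove the uniform bound $|c_\lambda| \ge 2/(b n^2)$ for every $\lambda \vdash n$. Unlike the lazy case, $c_\lambda$ can now be negative or close to zero, so the spectral range alone gives nothing. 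Instead I would exploit the arithmetic structure of $c_\lambda$.

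The key step is the closed-form character ratio from App.~\ref{app:block_encoding_oracle_implementation}:
\begin{equation}
r_\lambda=\frac{2S_\lambda}{n(n-1)}, \qquad S_\lambda\coloneqq \sum_i\binom{\lambda_i}{2}-\sum_i\binom{\lambda'_i}{2}\in\mathbb{Z}.
\end{equation}
Substituting $p=a/b$ into Eq.~\eqref{eq:c_lambdas} gives
\begin{equation}
c_\lambda=\frac{a}{b}+\frac{b-a}{b}\cdot\frac{2S_\lambda}{n(n-1)}=\frac{a\,n(n-1)+2(b-a)S_\lambda}{b\,n(n-1)}.
\end{equation}
The numerator $N_\lambda$ is an integer. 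Moreover, $n(n-1)$ is always even, and $2(b-a)S_\lambda$ is even, so $N_\lambda$ is even. By the genericity assumption $c_\lambda\neq 0$, we have $N_\lambda\neq 0$, and therefore $|N_\lambda|\ge 2$.

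It then follows that $|c_\lambda|\ge 2/(b\,n(n-1))\ge 2/(b n^2)$. Raising this to the power $2d$ gives $p_s^{(t)}\ge \bigl(4/(b^2n^4)\bigr)^d$, as claimed.

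The only delicate point is the parity observation, which is what produces the factor $4$ rather than $1$. A secondary check is that the integer formula for $r_\lambda$ is valid for all $\lambda$. This is the classical Frobenius formula already quoted in the paper, so I would simply invoke it. I would also remark that the bound is independent of $h^{(t)}$, and hence holds at any time $t$ irrespective of the preceding conditioning steps.
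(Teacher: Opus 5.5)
Your proof is correct and follows the paper's argument. You reduce to $p_s^{(t)}\ge\min_\lambda|c_\lambda|^{2d}$, write the transposition character ratio as an integer content sum over $\binom{n}{2}$, and use integrality of the numerator together with $c_\lambda\neq 0$. Your parity step is equivalent to the paper's choice of denominator $b\binom{n}{2}$ (your $N_\lambda$ is exactly twice the paper's $K$), so it gives the same bound $|c_\lambda|\ge 2/(b\,n(n-1))$.
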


\begin{proof}
    As before, $p_s^{(t)} \ge \min_{\lambda} |c_\lambda|^{2d} $. Substituting $p = a/b$ into the eigenvalue equation:
    \begin{equation}
        c_\lambda = \frac{a}{b} + \left( \frac{b-a}{b} \right) \frac{\chi_\lambda((2,1^{n-2}))}{d_\lambda}.
    \end{equation}
    Using the Frobenius formula, the normalized character ratio for the transposition class is determined by the contents of the Young diagram $\lambda$:
    \begin{equation}
        \frac{\chi_\lambda((2,1^{n-2}))}{d_\lambda} = \frac{1}{\binom{n}{2}} \sum_{(i,j) \in \lambda} (j-i) = \frac{s_\lambda}{r},
    \end{equation}
    where $s_\lambda \in \mathbb{Z}$ is the sum of contents and $r = n(n-1)/2$. Substituting this back:
    \begin{equation}
        c_\lambda = \frac{a r + (b-a)s_\lambda}{b r}.
    \end{equation}
    The numerator $K = a r + (b-a)s_\lambda$ is an integer linear combination of integers, so $K \in \mathbb{Z}$. By the assumption $c_\lambda \neq 0$, we must have $K \neq 0$, implying $|K| \ge 1$. Thus:
    \begin{equation}
        |c_\lambda| = \frac{|K|}{b r} \ge \frac{1}{b \frac{n(n-1)}{2}} > \frac{2}{b n^2}.
    \end{equation}
    Taking the power $2d$ yields the result.
\end{proof}

\paragraph*{Remark on pathological parameters.}
    Establishing a polynomial lower bound for \textit{all} real parameters $p \in (0,1)$ faces a subtle theoretical obstruction due to the existence of Liouville numbers. These ``pathological'' irrationals can be approximated by rationals with super-polynomial accuracy, potentially causing the integer linear combination in our derivation (and thus the spectral gap) to vanish arbitrarily fast. 
    However, Liouville numbers constitute a set of measure zero. Furthermore, in any computational realization, $p$ is represented by finite-precision floating-point numbers, which are inherently rational. Thus, for all practical purposes, the polynomial lower bound $p_s^{(t)} \in \Omega(n^{-4})$ holds, guaranteeing algorithmic feasibility.

\subsubsection{Proofs for the case of Born encoding}
In Sec.~\ref{sec:born_encoding}, we introduced the Born encoding quantum model: 
\begin{align}
   \ket{\psi^{(t)}} &= \sum_{\sigma\in\sn}\psi_{\sigma}^{(t)}\ket{\sigma}\\
   &=\sum_{\sigma\in\sn}\sqrt{h^{(t)}(\sigma)}\ket{\sigma}.
\end{align}
The above proofs for the lower bounds of the success probability remain exactly the same, as they depend solely on the spectrum of the diffusion operator, not on the specific encoding of the probability distribution.
Starting from the definition of success probability as the squared norm of the unnormalized state after diffusion:
\begin{equation}
    p_s^{(t)} = \| \hat{q}^{\star \,d} |\hat{\psi}^{(t)}\rangle \|^2 = \sum_{\lambda \vdash n} |c_\lambda|^{2d} \| \hat{\psi}_\lambda^{(t)} \|_F^2.
\end{equation}
Note that the state $|\hat{\psi}^{(t)}\rangle$ is normalized, so $\sum_\lambda \| \hat{\psi}_\lambda^{(t)} \|_F^2 = 1$. Consequently, $p_s^{(t)}$ represents a convex combination (weighted average) of the squared eigenvalues $|c_\lambda|^{2d} $. This sum is strictly lower-bounded by the minimum eigenvalue:
\begin{equation}
    p_s^{(t)} \geq \min_{\lambda \vdash n} |c_\lambda|^{2d} .
\end{equation}
The remainder of the proof follows exactly as in the standard, amplitude encoding case. Thus, the lower bounds of $p_s^{(t)}$ are the same, regardless of whether the amplitudes encode probabilities $h$ or square-root probabilities $\sqrt{h}$.

\subsection{Diffusion on quantum amplitudes}
\label{app:sec:diffusion_amplitudes}
Here we provide some details for the generative model that uses Born encoding.
\subsubsection{Spectral Relationship: Generalized Auto-Convolution}
The relationship between the spectra of $h$ and $\psi = \sqrt{h}$ is determined by the point-wise product $h(\sigma) = \psi(\sigma) \cdot \psi(\sigma)$. In Fourier space, this product corresponds to a generalized convolution over the irreducible representations. Following the derivation for the product of two functions provided in App.~\ref{app:sec:conditioning_in_Fourier}, the Fourier component of $h$ at a target irrep $\nu$ is expressed as:

\begin{equation}
    [\widehat{h}]_\nu = \frac{1}{\sqrt{d_\nu} \sqrt{|\sn|}} \sum_{\lambda, \mu \in \hat{G}} \sqrt{d_\lambda d_\mu} \sum_{\ell=1}^{z_{\lambda \mu \nu}} \left[ C_{\lambda \mu}^\dagger \left( \hat{\psi}_\lambda \otimes \hat{\psi}_\mu \right) C_{\lambda \mu} \right]_{(\nu, \ell)}
    \label{eq:gen_auto_conv}
\end{equation}
up to a choise of the normalization (unitary or not),
where:
\begin{itemize}
    \item $\hat{\psi}_\lambda$ and $\hat{\psi}_\mu$ are the Fourier matrices of the amplitude $\psi$ at irreps $\lambda$ and $\mu$.
    \item $C_{\lambda \mu}$ is the unitary Clebsch-Gordan matrix that decomposes the tensor product representation into a direct sum of irreps:
    \begin{equation}
        C_{\lambda \mu}^\dagger \left( \rho_\lambda(\sigma) \otimes \rho_\mu(\sigma) \right) C_{\lambda \mu} = \bigoplus_{\nu} \bigoplus_{\ell=1}^{z_{\lambda \mu \nu}} \rho_\nu(\sigma).
    \end{equation}
    \item $z_{\lambda \mu \nu}$ is the multiplicity (or Clebsch-Gordan series coefficients also called Kronecker coefficients) indicating how many copies $\ell$ of irrep $\nu$ appear in the decomposition of $\lambda \otimes \mu$.
    \item The notation $[\cdot]_{(\nu, \ell)}$ indicates the extraction of the $d_\nu \times d_\nu$ sub-block corresponding to the $\ell$-th copy of irrep $\nu$ from the block-diagonalized matrix.
\end{itemize}
This formulation explicitly shows that the spectrum of the probability distribution $h$ is formed by mixing the spectral components of the amplitude $\psi$ through the Clebsch-Gordan coefficients of the group $G$.
In the specific case where $G$ is Abelian, the irreps are 1-dimensional ($d_\rho=1$) and the tensor product reduces to scalar multiplication with index addition ($\rho_i \otimes \rho_j \to \rho_{i+j}$), recovering the standard discrete convolution theorem: $\widehat{h}_k = \sum_{i} \widehat{\psi}_i \widehat{\psi}_{k-i}$.

\subsubsection{Connection to imaginary time evolution}
In Sec.~\ref{sec:model_use}, we proposed applying the diffusion kernel $q$ directly to the $\ket{\psi}$ (Born encoding setting) rather than the probability distribution $h$ (amplitude encoding setting). Here, we identify this operation as a step of \textit{Imaginary Time Evolution} (ITE) on $\sn$.
In quantum mechanics, the time evolution of a state $\ket{\psi}$ is governed by the Schrödinger equation:
\begin{equation}
    i\hbar \frac{\partial}{\partial t} \ket{\psi(t)} = \hat{H} \ket{\psi(t)},
\end{equation}
where $\hat{H}$ is the system Hamiltonian. 
Performing a Wick rotation by substituting real time with imaginary time $\tau = it$ transforms the oscillatory evolution into a relaxation process:
\begin{equation}
    -\hbar \frac{\partial}{\partial \tau} \ket{\psi(\tau)} = \hat{H} \ket{\psi(\tau)}.
    \label{eq:imaginary_time_se}
\end{equation}
The solution to Eq.~\eqref{eq:imaginary_time_se} is given by the decay operator $\ket{\psi(\tau)} = e^{-\frac{\tau}{\hbar}\hat{H}}\ket{\psi(0)}$. Unlike unitary real-time evolution, this non-unitary operator suppresses high-energy eigenstates, driving the system toward the ground state of $\hat{H}$ as $\tau \to \infty$.
In the context of our random walk on $\sn$ the natural Hamiltonian is the Laplacian operator $\Delta$ associated with the Cayley graph of the generated random walk. The Laplacian measures the ``roughness'' or kinetic energy of the distribution. For the lazy random transposition walk defined in Eq.~\eqref{eq:diffusion_kernel}, the Hamiltonian is given by $\hat{H} = \mathbb{I} - \hat{T}$, where $\hat{T}$ is the adjacency operator weighted by transition probabilities.
The diffusion equation on the group generated by this Hamiltonian is the heat equation:
\begin{equation}
    \frac{\partial \psi}{\partial \tau} = \Delta \psi.
\end{equation}
Fundamental analysis on groups establishes that the fundamental solution (Green's function) to the heat equation is the heat kernel $q_\tau$. Therefore, the time-evolved state after a discrete time step $\delta \tau$ is exactly the convolution of the initial state with the kernel:
\begin{equation}
    \psi(\sigma, \tau + \delta \tau) = (q_{\delta \tau} \star \psi(\cdot, \tau))(\sigma).
\end{equation}
Comparing this to our operational definition in Eq.~\eqref{eq:diffusion_step_fourier_space}, we see that applying the convolution $q \star \psi$ is mathematically equivalent to applying the propagator $e^{-\delta \tau \hat{H}}$. 
Consequently, our alternating sequence of diffusion ($q \star \psi$) and conditioning ($\sqrt{h(\varphi|\sigma)} \cdot \psi$) steps constitutes a Trotter-Suzuki decomposition of the evolution under a composite Hamiltonian $\mathcal{H}_{total} = \hat{H}_{kinetic} + \hat{H}_{potential}$, where the potential is defined by the negative log-likelihood of the data, $\hat{H}_{potential} \sim -\log L$. This provides a natural interpretation of our model as a variational procedure that drives the state toward the ground state of a data-defined Hamiltonian.

\section{Details on the Conditioning operation}\label{app:sec:conditioning}

\subsection{Encodings of permutations}
\label{app:sec:conditioning:lehmer}

In this work, we the permutations $\sigma\in\sn$ are encoded in two main ways. First, we have the \emph{Cauchy} or \emph{explicit} encoding $c(\sigma)$. Here we store explicitely the permutation vector associated to $\sigma$. More precisely $c(\sigma)_i$ (often referred to as $\sigma(i)$) stores the the new index in position $i$ after the permutation $\sigma$ is applied to the vector of indices $(1\;2\;\dots \;n)$, giving
\begin{equation}
\label{Cauchydef}
    c(\sigma)\equiv\left(c(\sigma)_1, c(\sigma)_2, \ldots, c(\sigma)_n\right)
\end{equation}
We are interested in this encoding as it is the natural language for observed data. Another important encoding is the so called \emph{Lehmer code} $\ell(\sigma)$, which can be similarly described as
\begin{equation}
\label{Lehmerdef}
    \mathcal{\ell}(\sigma)\equiv\left(\ell(\sigma)_1, \ell(\sigma)_2, \ldots, \ell(\sigma)_n\right)\,.
\end{equation}
Here, instead of explicitely storing the full permutation vector, we define $\ell(\sigma)$ entrywise as
\begin{equation}
\label{Lehmerdef_counting}
    \ell(\sigma)_i=\#\left\{j>i: c(\sigma)_j<c(\sigma)_i\right\}.
\end{equation}
where $\#$ denotes the cardinality of a set. This definition, can be more informally read as \textit{consider position i in the original explicit permutation vector and then count the number of occurrences for which $c(\sigma)_i>c(\sigma)_j$ when $j>i$}.
The following example illustrates how a Lehmer string is decoded:
\begin{example}[Lehmer decoding for $\mathbb{S}_4$]
\label{example:lehmer}
            Let us consider as an example $n=4$ and the Lehmer string {$\ell(\sigma)=(\ell(\sigma)_1,\ell(\sigma)_2,\ell(\sigma)_3,\ell(\sigma)_4)=(2,2,1,0)$} with $\ell(\sigma)_i\in\{0,\ldots,4-i\}$ and $\ell(\sigma)_4=0$.
            To decode $\ell(\sigma)$ into the one-line Cauchy encoding $c(\sigma)=(c(\sigma)_1,c(\sigma)_2,c(\sigma)_3,c(\sigma)_4)$ we start from a ordered list
            $A=[1,2,3,4]$ and, for $i=1,\dots,4$, select the $\ell(\sigma)_i$-th available element of $A$, set it as $c(\sigma)_i$, and
            remove it from $A$:
    
            \begin{center}
            \begin{tabular}{@{} c c c l @{}}
            \toprule
            $i$ & $\ell(\sigma)_i$ & $A$ & $c(\sigma)$ \\
            \midrule
            1 & 2 & $[1,2,3,4]$ & $c(\sigma)_1=3$, \quad $A\gets[1,2,4]$ \\
            2 & 2 & $[1,2,4]$   & $c(\sigma)_2=4$, \quad $A\gets[1,2]$   \\
            3 & 1 & $[1,2]$     & $c(\sigma)_3=2$, \quad $A\gets[1]$     \\
            4 & 0 & $[1]$       & $c(\sigma)_4=1$, \quad $A\gets[\ ]$    \\
            \bottomrule
            \end{tabular}
            \end{center}
            hence the decoded permutation is $\sigma=[3,4,2,1]$.
        \end{example}
There is a bijection between the Lehmer codes of Eq.\eqref{Lehmerdef} and the set of integers $\{ 0, 1,..., n!-1\}$, hence this encoding is denser than the natural, explicit encoding of permutations, which by definition admits no repetition in the digits.

We use this encoding as it is the natural choice for the QFT implementation. When encoding Lehmer codes as quantum states, the qubit registers reflect the digits in Eq.\eqref{Lehmerdef}. Notice that each $\ell(\sigma)_i$ has a decreasing range as $i$ grows, in fact:
\begin{equation}
    \ell_1 \in \{0,\ldots,n-1\},\quad
    \ell_2 \in \{0,\ldots,n-2\},\ \ldots,\ 
    \ell_{n-1} \in \{0,1\},\quad
    \ell_n = 0.
\end{equation}
Thus, the number of qubits required for the $i$-th digit is:
\begin{equation}
    \left\lceil \log_2 (n-i+1) \right\rceil,
\end{equation}
and the total number of qubits is:
\begin{equation}
\begin{aligned}
    n_{\text{qubits}}^{L}
  &= \sum_{i=1}^{n-1} \left\lceil \log_2 (n-i+1) \right\rceil
   = \sum_{k=2}^{n} \left\lceil \log_2 k \right\rceil\,,\\
\end{aligned}
\end{equation}
thus requiring at most $n_{\text{qubits}}^{L} \in \mathcal{O}(n \log n)$ to be implemented

\subsection{The reorder-update approach}
\label{app:sec:conditioning:ru}

While observations are generally provided in the explicit Cauchy encoding of permutations $c(\sigma)$, by construction, the Bayes update step needs to be performed in the Lehmer encoding $\ell(\sigma)$. Since as noted in Eq.~\eqref{Lehmerdef_counting}, given a permutation $\sigma$ each element $\ell(\sigma)_i$ is related to the value and relative position of \emph{all} entries $c(\sigma)_i$ of the explicit Cauchy encoding, selecting permutations consistent with observations generally requires a decoding procedure analogous to Example~\ref{example:lehmer}. While such a decoding could be implemented efficiently by quantum arithmetic using ancillary qubits, for specific kinds of observations this step can be greatly simplified by taking advantage of the structure of $\ell(\sigma)$.

As an example, note that the \emph{first}
element $\ell(\sigma)_1$ is uniquely determined by the value of $c(\sigma)_1$, namely $\ell(\sigma)_1 = c(\sigma)_1 -1$. Continuing, $\ell(\sigma)_2$ is uniquely determined by $c(\sigma)_1$, $c(\sigma)_2$, and in general
$\ell(\sigma)_i$ can be computed knowing just the first $i$ elements of the explicit permutation mapping. Since all permutations $\sigma$ sharing the first $i$ elements $c(\sigma)_i$ will result in the same first $i$ Lehmer code elements $\ell(\sigma)_i$, consistency with an order $k$ partial assignment can be easily checked in both encodings if the assigned indices are $1,2,\dots i$.

Similarly, the \emph{last} non-trivial element $\ell(\sigma)_{n-1}$ is uniquely determined by the ordering of $c(\sigma)_{n-1}$ and $c(\sigma)_n$, namely

\begin{equation}
    \ell(\sigma)_{n-1} = \begin{cases}
        1 &  c(\sigma)_{n-1} >c(\sigma)_n\\
        0 & \text{otherwise}
    \end{cases}
\end{equation}
Continuing the pattern, also $\ell(\sigma)_{n-2}$ can be uniquely determined by the ordering of $c(\sigma)_{n-2}$, $c(\sigma)_{n-1}$ and $c(\sigma)_n$, and in general $\ell(\sigma)_{n-1-i}$ can be computed knowing just the ordering of the last $i+1$ elements of the explicit permutation mapping.
Since all permutations $\sigma$ sharing the same ordering of the last $i+1$ elements $c(\sigma)_{n-i}$ will result in the same last $i$ Lehmer code elements, consistency with a partial ranking can also be easily checked in the case where the ranked entries are $n, n-1,\dots n-i$.

This leads to the general reorder-update strategy, which allows to check wether partial assignments or partial rankings are consistent with a given permutation $\sigma$. Given a set of indices $\boldsymbol{i} = (i_1,\dots i_k)$, we first compose $\sigma$ with a fixed permutation $\pi$, with the property of rearranging the indices to the beginning (or end respectively) of the permutation vector. This can be interpreted as a change in the canonical ordering of objects, which facilitate the necessary checks. Quantumly, this is implemented as a unitary operation $U_\pi$. The explicit procedure to select $\pi$ and implement $U_\pi$ to apply the permutation in the Lehmer code is provided in App.~\ref{app:sec:conditioning:ru:reorder}. As a second step we check for consistency, by analytically computing the first (or last respectively) digits of the Lehmer code of a conistent permutation, and comparing it to $\sigma$. This step is equivalent to the computation of $\varphi(\sigma)$ in the general treatment of Section \ref{sec:conditioning}. Details on such a computation and the quantum implementation of can be found in App.~\ref{app:sec:conditioning:ru:update}. Finally, we apply the the inverse permutation $\pi^{-1}$ (i.e. $U_\pi^\dagger$ on the quantum model), effectively restoring the canonical ordering, and resetting the system for subsequent calculations.

\subsubsection{Reordering step: implementation of $U_\pi$}
\label{app:sec:conditioning:ru:reorder}

Implementing permutation compositions in the Lehmer code, i.e. the transformation $\ell(\sigma)\to\ell(\sigma\pi)$, can be achieved by decomposing $\pi$ into elementary transpositions $\tau_k$. Indeed, we have the following Proposition.

\begin{proposition}[Update rule for elementary transpositions] \label{prop:update_rule}
Given an elementary transposition $\tau_k = (k\,k+1)$ and an arbitrary $\sigma\in\sn$, then $$\ell(\sigma\tau_k)_i = \ell(\sigma)_i\;\;\forall i \notin \{k,k+1\}.$$
Furthermore, when $i\in \{k,k+1\}$, we have the following update rules
\begin{equation*}
    \ell(\sigma\tau_k)_k = \begin{cases}
        \ell(\sigma)_{k+1} &\text{if}\;\; \ell(\sigma)_{k} > \ell(\sigma)_{k+1}\\
         \ell(\sigma)_{k+1}+1 &\text{otherwise}
    \end{cases}
\end{equation*}
and
\begin{equation*}
    \ell(\sigma\tau_k)_{k+1} = \begin{cases}
        \ell(\sigma)_{k} -1 &\text{if}\;\; \ell(\sigma)_{k} > \ell(\sigma)_{k+1}\\
         \ell(\sigma)_{k} &\text{otherwise}
    \end{cases} \\
\end{equation*}
\end{proposition}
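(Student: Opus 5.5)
The plan is to argue directly in the Cauchy (one-line) encoding, using the counting definition of the Lehmer code in Eq.~\eqref{Lehmerdef_counting}. The first step is to fix the composition convention $(\sigma\tau_k)(i)=\sigma(\tau_k(i))$. With it, $c(\sigma\tau_k)$ is just $c(\sigma)$ with the entries in positions $k$ and $k+1$ swapped. This is the convention under which $U_\pi$ acts as a reordering of objects; if the opposite convention were intended, the same argument would go through with values swapped instead of positions. I do not expect that case to be needed. Write $a\coloneqq c(\sigma)_k$ and $b\coloneqq c(\sigma)_{k+1}$, with $a\neq b$, and $c'\coloneqq c(\sigma\tau_k)$.

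\emph{Unchanged digits.} For $i>k+1$, the entry $c'_i$ and every later entry coincide with those of $c(\sigma)$, so $\ell(\sigma\tau_k)_i=\ell(\sigma)_i$ trivially. For $i<k$, the value $c'_i=c(\sigma)_i$ is unchanged. The set of later entries $\{c'_j: j>i\}$ is the same set as $\{c(\sigma)_j: j>i\}$, only with two elements interchanged. Hence the count $\#\{j>i: c_j<c_i\}$ is unchanged.

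\emph{The two affected digits.} Introduce the auxiliary counts $A\coloneqq\#\{j>k+1: c(\sigma)_j<a\}$ and $B\coloneqq\#\{j>k+1: c(\sigma)_j<b\}$. Directly from the definition,
\begin{equation*}
\ell(\sigma)_k=A+[b<a],\qquad \ell(\sigma)_{k+1}=B,
\end{equation*}
\begin{equation*}
\ell(\sigma\tau_k)_k=B+[a<b],\qquad \ell(\sigma\tau_k)_{k+1}=A,
\end{equation*}
where $[\cdot]$ is the Iverson bracket. Substituting $B=\ell(\sigma)_{k+1}$ and $A=\ell(\sigma)_k-[b<a]$ gives both update formulas, provided the case condition ``$a>b$'' can be expressed through the Lehmer digits alone.

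\emph{Key step: the case condition.} This is the only nontrivial point. I will show that $a>b \iff \ell(\sigma)_k>\ell(\sigma)_{k+1}$. The argument uses the monotonicity of the counts in the threshold: the positions $j>k+1$ are the same for both counts, so $a<b$ implies $A\le B$, and $a>b$ implies $A\ge B$.
\begin{itemize}
\item If $a>b$, then $\ell(\sigma)_k=A+1\ge B+1>\ell(\sigma)_{k+1}$.
\item If $a<b$, then $\ell(\sigma)_k=A\le B=\ell(\sigma)_{k+1}$.
\end{itemize}
Hence $\ell(\sigma)_k>\ell(\sigma)_{k+1}$ exactly when $a>b$. In that case the new digits are $(\ell(\sigma)_{k+1},\,\ell(\sigma)_k-1)$. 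Otherwise they are $(\ell(\sigma)_{k+1}+1,\,\ell(\sigma)_k)$. This matches Proposition~\ref{prop:update_rule}.
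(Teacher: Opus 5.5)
Your proof is correct and follows essentially the paper's route: swap positions $k,k+1$ in the one-line encoding, note that every other digit sees the same set of later entries, and track the single comparison each moved value gains or loses. It is also more complete than the paper's, since you actually prove the descent criterion $c(\sigma)_k>c(\sigma)_{k+1}\Leftrightarrow\ell(\sigma)_k>\ell(\sigma)_{k+1}$ (which the paper only asserts) and derive the $(k+1)$-st digit explicitly. One correction to your aside: under the value-swap convention the stated rule would be false, not merely proved differently (e.g.\ $c(\sigma)=(2,3,1)$ with values $1,2$ exchanged sends $\ell=(1,1,0)$ to $(0,1,0)$ rather than $(2,1,0)$), so the position-swap convention that you and the paper adopt is genuinely required.
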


\begin{proof}
By Eq.~\ref{Lehmerdef_counting}, it is clear that the relative position of $c(\sigma)_k$ and $c(\sigma)_{k+1}$ do not matter in the evaluation of $\ell(\sigma)_i\;\forall i \notin \{k,k+1\}$, as in those cases, either both or neither of them are included in the definition, and hence their contribution remain unchanged after applying $\tau_k$. 

Concerning the opposite case, consider as an example $\ell(\sigma\tau_k)_k$. After the elementary transposition is applied, $c(\sigma)_k$ and $c(\sigma)_{k+1}$ get swapped. For this reason, $\ell(\sigma\tau_k)_k$ is closely related to $\ell(\sigma)_{k+1}$, as they both refer to the same number $c(\sigma)_{k+1}$. However, now $c(\sigma)_{k+1}$ is one position behind, and as a result it 'sees' one more digit in Eq.~\ref{Lehmerdef_counting}, namely $c(\sigma)_{k}$. If $c(\sigma)_{k+1} > c(\sigma)_{k}$, its Lehmer code entry must be increased by one, and vice versa if $c(\sigma)_{k} > c(\sigma)_{k+1}$ it should remain the same. Since $c(\sigma)_{k}>c(\sigma)_{k+1} \Leftrightarrow \ell(\sigma)_{k} > \ell(\sigma)_{k+1}$, we have the first condition. The second one can be derived analogously.
\end{proof}

\begin{claim}
    Applying the update rule of Prop.~\ref{prop:update_rule} to the Lehmer encoding is  a unitary and local operation, i.e. it only involves qubits encoding $\ell(\sigma)_k$ and $\ell(\sigma)_{k+1}$.
\end{claim}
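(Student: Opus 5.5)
Locality follows directly from Prop.~\ref{prop:update_rule}, so most of the work is showing that the update map, restricted to the two digits, is a bijection of basis states. Write $a=\ell(\sigma)_k\in\{0,\dots,n-k\}$ and $b=\ell(\sigma)_{k+1}\in\{0,\dots,n-k-1\}$. By the first part of Prop.~\ref{prop:update_rule}, every other digit $\ell(\sigma)_i$ with $i\notin\{k,k+1\}$ is unchanged. The Lehmer register is a tensor product of digit registers of $\lceil\log_2(n-i+1)\rceil$ qubits each. Hence the map $\ket{\ell(\sigma)}\mapsto\ket{\ell(\sigma\tau_k)}$ factorizes as the identity on all digit registers except those of digits $k$ and $k+1$, tensored with a map $f_k$ acting only on the register pair $\ket{a}\ket{b}$. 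Explicitly,
\begin{equation*}
f_k(a,b)=\begin{cases}(b,\,a-1) & a>b,\\ (b+1,\,a) & a\le b.\end{cases}
\end{equation*}

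Next I would check that $f_k$ maps the valid digit range $\{0,\dots,n-k\}\times\{0,\dots,n-k-1\}$ bijectively to itself.
\begin{itemize}
\item \emph{Range, case $a>b$:} the image $(b,a-1)$ satisfies $b\le n-k-1$ and $a-1\le n-k-1$, so it is valid.
\item \emph{Range, case $a\le b$:} the image $(b+1,a)$ satisfies $b+1\le n-k$ and $a\le b\le n-k-1$, so it is valid.
\item \emph{Involution:} $f_k$ maps the region $\{a>b\}$ into $\{a'\le b'\}$ and vice versa. One checks $f_k\circ f_k=\mathrm{id}$ in each case. For example, with $a>b$, $(a,b)\mapsto(b,a-1)$, which has $b\le a-1$, and then $\mapsto(a-1+1,b)=(a,b)$.
\end{itemize}
Alternatively, bijectivity follows at once from the fact that $\sigma\mapsto\sigma\tau_k$ is a bijection of $\sn$ and the Lehmer code is a bijection. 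On basis states, $f_k$ is therefore a permutation of the valid computational basis states of the two registers.

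To obtain a unitary on the full register pair, I would extend $f_k$ by the identity on basis states encoding invalid digit values, which lie outside the Lehmer code space. The resulting permutation matrix is unitary, and in fact self-inverse, consistent with $\tau_k^2=e$. It acts only on the $\mathcal{O}(\log n)$ qubits of digits $k$ and $k+1$, which proves locality.

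The only step that needs care is this handling of invalid bit patterns. Because $n-i+1$ is generally not a power of two, the registers contain unused basis states, and $f_k$ must be defined on them so that the overall map stays bijective. Extending by the identity works because the valid set is mapped onto itself. For an explicit implementation, one can realize $f_k$ reversibly: a comparator computes the flag $[a>b]$ into an ancilla, controlled increment/decrement and swap operations act on the two registers, and the flag is then uncomputed. Uncomputation is possible because the output again determines the flag, via $a'\le b'$ exactly when the input had $a>b$. This construction is a circuit of depth $\mathrm{poly}(\log n)$.
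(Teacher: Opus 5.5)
Your proof is correct and follows the same route as the paper. Locality is read off from Prop.~\ref{prop:update_rule}, and unitarity holds because $\sigma\mapsto\sigma\tau_k$ composed with the Lehmer bijection permutes basis states; your explicit involution check of $f_k$ on $\{0,\dots,n-k\}\times\{0,\dots,n-k-1\}$ and your identity extension on unused bit patterns fill in details the paper leaves implicit.
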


Indeed, the unitarity immediately follows from the fact that $\tau_k$ is a permutation and $\ell$ is a bijection, and therefore invertible. In this sense, such an update can be deterministically applied to the state $\ket{\ell(\pi)}$ acting locally and without needing ancillary systems. Let $U_k$ denote such a transformation. In general, implementing a reversible transformation without ancillary qubits requires no more than $\mathcal{O}(2^{n_{\text{bits}}}n_{\text{bits}})$ (Theorem 1, \cite{Wu24}). Since each entry of Lehmer code requires $n_{\text{bits}} <\log n$, we can construct $U_{k}$ with $\mathcal{O}(n\log n)$ depth. Deviating from the ancilla-free case, tighter bounds can be derived, e.g. linear depth in $n_\text{bits}$, by introducing ancillary systems and using quantum arithmetic operations.

Note that by composing multiple instances of $U_k$ for varying $k$, all permutations $\pi\in\sn$ can be applied to the Lehmer code of $\sigma$ by Prop. \ref{prop:update_rule}. In particular, we can build $U_\pi = \prod_kU_k$ such that
\begin{equation}
    \ket{\ell(\sigma)} \to U_\pi\ket{\ell(\sigma)} = \ket{\ell(\sigma\pi)}.
\end{equation}
This procedure has a cost proportional to the number of elementary transpositions $\pi$ can be decomposed into.

A particularly relevant class of permutations for what follows are $t$-cycles, i.e. those of the form
\begin{equation}
    \pi_t = (t\,1\,2\;\cdots t-1) = \prod_{k=t-1}^1 \tau_k\,,
\end{equation}
each of which requires $t\leq n$ elementary operations $U_k$ to be implemented. This is thanks to the following observation.

\begin{claim}\label{obs:reoder_beginning}
    Given a list of $k$ indices $\bm{i} = (i_1, \cdots i_k)$, we can map them to the list $(k,k-1,\cdots,1)$ using a combination of $\pi_t$, with a total cost scaling at most as $\sim kn$ elementary operations $U_k$.
\end{claim}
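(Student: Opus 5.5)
The plan is a greedy construction that applies one $t$-cycle per observed index: at step $m$ we bring the $m$-th observed object to the front of the permutation vector. Each such cycle pushes the objects placed earlier back by one slot without changing their relative order. First I will fix what $\pi_t$ does to positions, using Proposition~\ref{prop:update_rule} and the decomposition $\pi_t=\prod_{k=t-1}^{1}\tau_k$. Composing $\sigma$ on the right with $\tau_k$ swaps the entries in positions $k$ and $k+1$ of $c(\sigma)$. Composing with the product $\tau_{t-1}\cdots\tau_1$, read in the order that realizes $\ell(\sigma)\to\ell(\sigma\pi_t)$, therefore bubbles the entry at position $t$ down to position $1$. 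The effect on positions is: position $t\mapsto 1$, positions $p<t\mapsto p+1$, and positions $p>t$ are unchanged.

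If the paper's composition convention turns out to give the inverse cyclic shift (front entry moved to position $t$), I will instead use the reversed product $\tau_1\cdots\tau_{t-1}$. This is again a $t$-cycle and again costs $t-1$ operators $U_k$, so nothing below changes. Pinning down this convention is the only delicate point, and I expect it to be the main obstacle. I will settle it by checking the action of a single $U_k$ on $c(\sigma)$ against Example~\ref{example:lehmer} for small $n$.

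The main step is an induction on $m=1,\dots,k$, with the following invariant. After $m$ cycles, the objects originally at positions $i_1,\dots,i_m$ sit at positions $m,m-1,\dots,1$ respectively, that is, $i_m$ is at position $1$ and $i_1$ is at position $m$. At step $m+1$, let $t_{m+1}$ be the current position of object $i_{m+1}$. This position is classically computable from $i_1,\dots,i_{m+1}$ by tracking the shift rule above, since $\pi$ is data-independent. Because the front block $\{1,\dots,m\}$ is occupied by the already placed objects, we have $t_{m+1}\ge m+1$. Applying $\pi_{t_{m+1}}$ sends $i_{m+1}$ to position $1$ and shifts the block $1,\dots,m$ rigidly to $2,\dots,m+1$, which preserves the invariant. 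After $k$ steps the list $(i_1,\dots,i_k)$ occupies positions $(k,k-1,\dots,1)$, as claimed.

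For the cost, each $\pi_t$ requires $t-1\le n-1$ elementary $U_k$, so the total is at most $k(n-1)\sim kn$. With the $\mathcal{O}(n\log n)$ depth per $U_k$ stated above, this gives the $\mathcal{O}(kn^2\log n)$ depth for $U_\pi$ quoted in the cost claim for the reorder-update approach. The final-segment variant needed for partial rankings follows symmetrically, using cycles that move an entry to position $n$.
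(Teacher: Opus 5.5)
Your argument is correct and follows the paper's proof: apply one $t$-cycle per observed index, each bringing that index to position $1$ while rigidly shifting the already-placed ones right, for a total of at most $k(n-1)$ operators $U_k$. The only difference is that the paper first sorts the indices so that $i_1<\cdots<i_k$, which keeps each index at its original position until it is moved, so the paper simply applies $\pi_{i_1},\dots,\pi_{i_k}$; you skip the sort and instead track the shifted position $t_{m+1}$ classically, which works equally well and gives the order $(k,\dots,1)$ for the list as given rather than for its sorted version.
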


In other words, we can bring the indices contained in $\bm{i}$ to the beginning of the Lehmer code efficiently. Indeed this can be done following the procedure outlined in here. 

First, we sort the list $\bm{i}$ to put it in ascending order. In this way, we can assume that $i_1 < i_2<\cdots < i_k$. Then, we sequentially apply $\pi_{i_1}, \pi_{i_2},\cdots\pi_{i_k}$ to the Lehmer code using the respective $U_{\pi}$. Each operation brings the corresponding $i_l$ to the beginning of the line, and moves the subsequent right by one spot. Since the permutations become increasingly large as $l$ increases, we are assured that after each step, the position $i_{l'}$ of of subsequent indices remains unchagend, thus showing the correctness of the procedure. In particular, the full transformation can be summarized compactly as, 
\begin{equation}
    U_b = \prod_{l=1}^k U_{\pi_{i_l}}\,,
\end{equation}
which by definition of $U_\pi$ has the implementation cost stated above. We denote this transformation as $U_b$, as it reorders the elements of the set $\{1,\cdots,n\}$ in order to bring the indices $\bm{i}$ to the \emph{beginning} of the Lehmer code. The same strategy can also be applied to bring all indices to the \emph{end}.

\begin{claim}\label{obs:reorder_end}
 Given a list of $k$ indices $\bm{i} = (i_1, \cdots i_k)$, we can map them to the list $(n-k,n-k+1,\cdots,n)$ using a combination of $\tilde{\pi}_t$, with a total cost scaling at most as $\sim kn$ elementary operations $U_k$. 
\end{claim}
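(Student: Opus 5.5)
This mirrors the proof of Claim~\ref{obs:reoder_beginning}. We replace the cycles $\pi_t$ by their ``right-moving'' analogues: for $t\in\{1,\dots,n\}$ let
\[
\tilde{\pi}_t \coloneqq \prod_{k=t}^{n-1}\tau_k ,
\]
the cycle that moves the element at position $t$ to position $n$ and shifts each element at positions $t+1,\dots,n$ one step to the left. The ordering of the $\tau_k$ is chosen so that, under right composition $\sigma\mapsto\sigma\tilde{\pi}_t$, the entry $c(\sigma)_t$ is carried step by step from position $t$ to position $n$. By Prop.~\ref{prop:update_rule} each $\tau_k$ is realised by the local unitary $U_k$, so $U_{\tilde{\pi}_t}=\prod_{k} U_k$ costs $n-t\le n$ elementary operations.

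The procedure is then as follows.
\begin{enumerate}
\item Sort $\bm{i}$ in \emph{ascending} order, $i_1<i_2<\dots<i_k$.
\item Apply $\tilde{\pi}_{i_1}$, which sends $i_1$ to position $n$.
\item For $l=2,\dots,k$, apply $\tilde{\pi}_{i_l-(l-1)}$.
\end{enumerate}

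The correctness argument is the invariant step. After the first $l-1$ moves, the previously moved indices occupy the last $l-1$ slots in their original relative order. Every index $i_{l'}$ with $l'\ge l$ lies to the right of the $l-1$ removed positions, so it has shifted left by exactly $l-1$ and now sits at position $i_{l'}-(l-1)$. Applying $\tilde{\pi}_{i_l-(l-1)}$ sends $i_l$ to position $n$ and shifts the $l-1$ already-placed indices one step left, which preserves their relative order. By induction, after $k$ steps the indices occupy positions $n-k+1,\dots,n$ in order $i_1,\dots,i_k$, and the first $n-k$ positions hold the untouched elements in their original relative order. The claim's $(n-k,\dots,n)$ is this block up to the authors' off-by-one indexing convention. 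If a specific internal order of the block is needed for the partial ranking check, I would sort $\bm{i}$ in the opposite direction, or note that the check in App.~\ref{app:sec:conditioning:ru:update} only depends on the set of indices and the known ordering.

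For the cost, the composite is $\tilde U_e=\prod_{l=1}^k U_{\tilde{\pi}_{i_l-(l-1)}}$, i.e.\ $k$ cycles of at most $n$ transpositions each. That is at most $kn$ operations $U_k$, each of depth $\mathcal{O}(n\log n)$, consistent with the claimed $\mathcal{O}(kn^2\log n)$.

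The main obstacle is the index bookkeeping. The hard part is verifying that the left shifts caused by earlier moves are exactly accounted for by the offset $-(l-1)$. It is also delicate to check that composition on the right with $\tau_k$ in this product order really moves the entry rightward rather than its inverse. I would check both on a small example such as $n=4$ with $\bm{i}=(1,3)$ before writing the induction.
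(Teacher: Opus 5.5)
Your argument is correct, but it processes the indices in the opposite order from the paper.

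\textbf{The paper's route.} The paper also sorts $\bm{i}$ ascending, but then applies $\tilde{\pi}_{i_k},\tilde{\pi}_{i_{k-1}},\dots,\tilde{\pi}_{i_1}$, i.e.\ it moves the \emph{largest} index first. Since $\tilde{\pi}_t$ leaves positions $<t$ untouched, the smaller indices still waiting to be moved never change position. So no offset bookkeeping is needed, exactly as in the argument for $U_b$. The price is that the block comes out reversed: $i_1$ ends at $n$ and $i_k$ at $n-k+1$.

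\textbf{Your route.} Processing in ascending order with the offset $\tilde{\pi}_{i_l-(l-1)}$ needs the induction you sketched, and it checks out. For $l'>l$ you have $i_{l'}-(l-1)>i_l-(l-1)$. Also $i_l-(l-1)\le n-l+1$, so all previously placed indices lie strictly to the right and shift left together. Your route preserves the original relative order within the final block, which matches the increasing list in the statement more literally than the paper's own procedure does.

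\textbf{Cost comparison.} Your cost is $\sum_l (n-i_l+l-1)$ versus the paper's $\sum_l (n-i_l)$. The difference is exactly $\binom{k}{2}$, the number of adjacent transpositions needed to reverse a block of length $k$. Both are at most $k(n-1)$, since $i_l\ge l$.

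\textbf{One caution on your side remark.} To obtain the reversed order, process the ascending list from the largest index down and \emph{drop} the offsets. That is precisely the paper's proof. Keeping the offsets $-(l-1)$ with a descending order would be wrong, because moving a larger index does not shift the smaller ones.
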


Indeed the procedure is very similar. First we sort the indices in ascending order, and then we sequentially apply $\tilde{\pi}_{i_k}, \tilde{\pi}_{i_{k-1}},\cdots\tilde{\pi}_{i_1}$, where each $\tilde{\sigma}_t$ is defined by
\begin{equation}
    \tilde{\pi}_t = (t+1\cdots\, n \, t ) = \prod_{k=t}^{n-1}\tau_k\,,
\end{equation}
using the respective $U_{\tilde{\pi}}$. The full transformation $U_e$ obtained in this manner has indeed the property of reordering the elements of the set $\{1,\cdots,n\}$ in order to bring the indices $\bm{i}$ to the \emph{end} of the Lehmer code.

\subsubsection{Update step: implementation of $\varphi$}
\label{app:sec:conditioning:ru:update}

Exploiting the properties of Eq.~\ref{Lehmerdef_counting}, it is always possible to analytically compute the first (or last) digits of Lehmer codes consistent with a partial assignment (or ranking). By comparing such result with the Lehmer code of a permutation $\sigma$, we effectively compute $\varphi(\sigma)$ of Section \ref{sec:conditioning}. Here, we provide the analytical solutions in both cases.

Concerning partial assignements of the form $c(\sigma)_{i_1}=j_1 \;\wedge \; c(\sigma)_{i_2}=j_2 \;\wedge \; \dots\; \wedge c(\sigma)_{i_k}=j_k$, and assuming without loss of generality that the indices $\boldsymbol{i}$ are stored in ascending order, this is immediately given by
\begin{equation}
    \ell(\sigma)_k = {c(\sigma)_{i_k}-\sum_{l<k}(c(\sigma)_{i_k}>c(\sigma)_{i_l})-1} = {j_k-\sum_{l<k}(j_k>j_l)-1}
\end{equation}
where the notation $(j_k>j_l)$ evaluates to 1 if the condition is satisfied and 0 otherwise. Since this does not depend on the specific permutation $\sigma$ but only on observed data, such calculation can be performed classically. In this way, the coherent calculation of $\varphi(\sigma)$ can be easily implemented by a multi-controlled NOT (MCX) gate, selecting only the correct, classically computed, bit representation of consistent $\ell(\sigma)_k$. Since we only act on the first $k$ entries of $\ell(\sigma)$, we effectively control the operation on $n_{\text{bits}}<k\log n$, which in the absence of ancillary qubits requires $\mathcal{O}(n_{\text{bits}}^2) = \mathcal{O}(k^2 \log^2 n)$ depth \cite{Barenco95}. 

Concerning partial rankings, after rearranging the indices $\boldsymbol{i}\to\bar{\boldsymbol{i}}$ to be in ascending order, it is easy to map the original condition $c(\sigma)_{i_1} > c(\sigma)_{i_2} \; \dots\; > c(\sigma)_{i_k} > c(\sigma)_{i_{k+1}}$, to an equivalent set of inequalities expressed in terms of the new $\bar{i}_k$. With this in mind, the last entries of consistent Lehmer codes are given by
\begin{equation}
   \ell(\sigma)_{n-l} =  \sum_{j=k}^{k-l+1}(c(\sigma)_{\bar{i}_j}>c(\sigma)_{\bar{i}_{k+1}})
\end{equation}
with the same notation as above. Again this can be efficiently computed classically based on the data, and a $\mathcal{O}(k^2 \log^2 n)$ deep circuit will suffice to implement $\varphi(\sigma)$ coherently.

\subsection{Success probability}

In this section, we compute the success probability of the conditioning step. To this end, we first introduce the quantity $h(\varphi)$ as the total proability of generating a permutation consisten with the data in the quantum model. Given the soft likelihood of Eq.~\ref{eq:def_likelihood}, we can express $h(\varphi)$ explicitely as
\begin{equation}
    h(\varphi) = s \Pr (\varphi(\sigma) = 1) + (1-s)(1-\Pr (\varphi(\sigma) = 1))
\end{equation}
with $\Pr (\varphi(\sigma) = 1) = \sum_{\sigma\, : \, \varphi(\sigma)=1} h(\sigma)$. With this in mind, the result is summarized by the following Proposition.

\begin{proposition}[Success probability]
Let $p_s$ denote the probability of successfully applying the conditioning operator $C(\varphi)$ defined in Section \ref{sec:conditioning}. Then we have
    \begin{equation}
        p_s \coloneqq \left\| c_\varphi \ket{\psi^{(t)}} \right\|^2 = h(\varphi)^2\frac{N^{(t+1)}}{N^{(t)}},
    \end{equation}
where $N^{(t)}$ denote the model normalization at step $t$.  
\end{proposition}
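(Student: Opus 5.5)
The proof is a direct computation that uses only the definitions of $c_\varphi$ in Eq.~\eqref{eq:bayes_operator_non_unitary}, of the quantum model in Eq.~\eqref{eq:def_quantum_model}, and of Bayes' rule in Eq.~\eqref{eq:bayes} with the correct normalizing constant. Since $c_\varphi$ is diagonal in the computational basis, applying it to $\ket{\psi^{(t)}}=\frac{1}{\sqrt{N^{(t)}}}\sum_\sigma h^{(t)}(\sigma)\ket{\sigma}$ gives
\begin{equation}
c_\varphi\ket{\psi^{(t)}}=\frac{1}{\sqrt{N^{(t)}}}\sum_{\sigma}h(\varphi|\sigma)\,h^{(t)}(\sigma)\ket{\sigma},
\end{equation}
and so
\begin{equation}
p_s=\frac{1}{N^{(t)}}\sum_\sigma \bigl|h(\varphi|\sigma)\,h^{(t)}(\sigma)\bigr|^2 .
\end{equation}

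Next I make the proportionality in Eq.~\eqref{eq:bayes} exact. The posterior is $h^{(t+1)}(\sigma)=h(\varphi|\sigma)h^{(t)}(\sigma)/h(\varphi)$, where the evidence is $h(\varphi)=\sum_\sigma h(\varphi|\sigma)h^{(t)}(\sigma)$. Splitting this sum over $\varphi(\sigma)=1$ and $\varphi(\sigma)=0$ and using Eq.~\eqref{eq:def_likelihood} gives $s\Pr(\varphi(\sigma)=1)+(1-s)(1-\Pr(\varphi(\sigma)=1))$. This is exactly the $h(\varphi)$ defined just before the proposition, provided $h^{(t)}$ is a normalized probability distribution, i.e.\ $\sum_\sigma h^{(t)}(\sigma)=1$.

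Substituting $h(\varphi|\sigma)h^{(t)}(\sigma)=h(\varphi)\,h^{(t+1)}(\sigma)$ into the expression for $p_s$ yields
\begin{equation}
p_s=\frac{h(\varphi)^2}{N^{(t)}}\sum_\sigma |h^{(t+1)}(\sigma)|^2=h(\varphi)^2\frac{N^{(t+1)}}{N^{(t)}},
\end{equation}
because $N^{(t+1)}=\|h^{(t+1)}\|_2^2$ by the definition following Eq.~\eqref{eq:def_quantum_model}.

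The only delicate point is bookkeeping about normalization. The claim requires that $h^{(t)}$ denote the $\ell^1$-normalized belief state, not the $\ell^2$-normalized amplitudes; otherwise $h(\varphi)$ would not be the evidence. I would state this explicitly. I would also note that if $h(\varphi)=0$ the posterior is undefined, but then $p_s=0$ and the identity holds trivially.
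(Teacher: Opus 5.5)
Your proof is correct and follows the paper's argument. Both apply the diagonal $c_\varphi$, substitute Bayes' rule $h(\varphi|\sigma)\,h^{(t)}(\sigma)=h(\varphi)\,h^{(t+1)}(\sigma)$, and take the squared norm; the only cosmetic difference is that the paper first rewrites $c_\varphi\ket{\psi^{(t)}}$ as $h(\varphi)\sqrt{N^{(t+1)}/N^{(t)}}\,\ket{\psi^{(t+1)}}$, and your explicit remark that $h^{(t)}$ must be $\ell^1$-normalized for $h(\varphi)$ to equal the evidence is an assumption the paper uses only implicitly.
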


\begin{proof}

The success probability of the block-encoding of Section~\ref{sec:conditioning}, can be directly computed using the definition of $c_\varphi$ and of the quantum model $\ket{\psi^{(t)}}$. First we expand both definitions:
\begin{equation}
\begin{aligned}
    c_\varphi \ket{\psi^{(t)}} &= \frac{1}{\sqrt{N^{(t)}}}\sum_{\sigma \in \sn} h^{(t)}(\sigma)\, c_\varphi\ket{\sigma} = \frac{1}{\sqrt{N^{(t)}}}\sum_{\sigma \in \sn} h^{(t)}(\sigma) h(\varphi |\sigma)\,\ket{\sigma}\\
\end{aligned}
\end{equation}

We continue by using Bayes theorem $h^{(t)}(\sigma) h(\varphi |\sigma) = h(\varphi) h^{(t)}(\sigma|\varphi)$, where $h(\varphi)$ was already introduced. Since by definition, the step $t+1$ is obtained by incorporating data into the model, we have $h^{(t)}(\sigma |\varphi) = h^{(t+1)}(\sigma)$, yielding

\begin{equation}
\begin{aligned}
        c_\varphi \ket{\psi^{(t)}} &= \frac{1}{\sqrt{N^{(t)}}}\sum_{\sigma \in \sn}  h^{(t)}(\varphi) h^{(t)}(\sigma |\varphi)\,\ket{\sigma}
        = \frac{h^{(t)}(\varphi)}{\sqrt{N^{(t)}}}\sum_{\sigma \in \sn}   h^{(t+1)}(\sigma)\,\ket{\sigma}\\
        &= h^{(t)}(\varphi)\sqrt{\frac{{N^{(t+1)}}}{{N^{(t)}}}} \frac{1}{\sqrt{N^{(t)}}}\sum_{\sigma \in \sn}   h^{(t+1)}(\sigma)\,\ket{\sigma} = h^{(t)}(\varphi)\sqrt{\frac{{N^{(t+1)}}}{{N^{(t)}}}} \ket{\psi^{(t+1)}}\,.
\end{aligned}
\end{equation}

Computing the norm as in the definition of $p_s$, we get

\begin{equation}
    p_s = \left\|c_\varphi \ket{\psi^{(t)}}\right\|^2 =  \left(h^{(t)} (\varphi)\right)^2\frac{{N^{(t+1)}}}{{N^{(t)}}} \braket{\psi^{(t+1)}} = \left(h^{(t)} (\varphi)\right)^2\frac{{N^{(t+1)}}}{{N^{(t)}}}\,,
\end{equation}
which concludes the proof.
\end{proof}

\subsection{Conditioning in Fourier space}\label{app:sec:conditioning_in_Fourier}

In this section, we derive an expression for implementing conditioning (Bayes update) in Fourier space. While such an approach would reduce the cost in terms of the number of QFT calls involved, i.e., one would only need to implement QFT (t=0) and QFT$^\dagger$ ($t=T$), we show that implementing Bayes rule in Fourier space is equivalent to a hard problem; namely, computing Kronecker (Clebsch-Gordan) coefficients of $\sn$.

The element-wise Hadamard multiplication of the likelihood and prior distributions in input space is a composition that, by convolution theorem, becomes a clear convolution in Fourier domain only for abelian groups. For non-commutative groups, like $\sn$, it is possible to write a generalized expression in Fourier space, but it is much more complex, because different irreps are combined with each other. The following sub-sections present a compact review of the classical works by Risi Kondor \cite{kondor2011non, pmlr-v2-kondor07a} and Jonathan Huang \cite{JMLR:v10:huang09a, huang2007efficient}.

\subsubsection{Kronecker conditioning}
This derivation follows \cite{JMLR:v10:huang09a}, and presents the general framework to perform conditioning (Bayesian update) in Fourier domain.
The inverse group Fourier transform when working with real, orthogonal representations for which $\rho_\lambda\left(\sigma^{-1}\right)=\rho_\lambda(\sigma)^T$, can be defined as
\begin{equation}
\label{iqft}
    f(\sigma)=\frac{1}{|G|}\sum_\lambda d_{\rho_\lambda}Tr\left[\hat{f}_{\rho_\lambda}^T\cdot\rho_\lambda(\sigma)\right]
\end{equation}
Starting from:
    \begin{equation}
    \label{bayes}
        P(\sigma \mid z)=\eta \cdot P(z \mid \sigma) \cdot P(\sigma),
    \end{equation}
the goal is to write the product in Eq.\eqref{bayes} as:
\begin{equation}
\label{pro}
    f(\sigma)\cdot g(\sigma)=\frac{1}{|G|}\sum_\nu d_{\rho_\nu}\operatorname{Tr}\left(R_\nu^T\cdot\rho_\nu(\sigma)\right)
\end{equation}
where 
\begin{equation}
\label{product}
    \left[\widehat{fg}\right]_{\rho_\nu}=R_\nu
\end{equation}
is the Fourier transform of the pointwise product between the functions f and g on the irrep labeled by $\nu$.

The following are the main steps to derive a closed-form expression for (\ref{product}), the full derivation is in \cite{JMLR:v10:huang09a}.

First Eq.\eqref{iqft}) is applied to the LHS of Eq.\eqref{pro}:

\begin{equation}
\begin{aligned}
    f(\sigma)\cdot g(\sigma)&=\quad\left[\frac{1}{|G|}\sum_{\lambda}d_{\rho_{\lambda}}\operatorname{Tr}\left(\hat{f}_{\rho_{\lambda}}^{T}\cdot\rho_{\lambda}(\sigma)\right)\right]\cdot\left[\frac{1}{|G|}\sum_{\mu}d_{\rho_{\mu}}\operatorname{Tr}\left(\hat{g}_{\rho_{\mu}}^{T}\cdot\rho_{\mu}(\sigma)\right)\right]\\&=\quad\left(\frac{1}{|G|}\right)^2\sum_{\lambda,\mu}d_{\rho_\lambda}d_{\rho_\mu}\left[\mathrm{Tr}\left(\hat{f}_{\rho_\lambda}^T\cdot\rho_\lambda(\sigma)\right)\cdot\mathrm{Tr}\left(\hat{g}_{\rho_\mu}^T\cdot\rho_\mu(\sigma)\right)\right]
\end{aligned}
\end{equation}

It can be proved by properties of the tensor product that:
\begin{equation}
    \begin{aligned}
      \mathrm{Tr}\bigl(\hat f_\lambda^T\,\rho_\lambda(\sigma)\bigr)\,
      \mathrm{Tr}\bigl(\hat g_\mu^T\,\rho_\mu(\sigma)\bigr)
      &=
      \mathrm{Tr}\Bigl(\bigl(\hat f_\lambda^T\,\rho_\lambda(\sigma)\bigr)\otimes\bigl(\hat g_\mu^T\,\rho_\mu(\sigma)\bigr)\Bigr)\\[1ex]
      &=
      \mathrm{Tr}\Bigl(\bigl(\hat f_\lambda\otimes\hat g_\mu\bigr)^T\,\bigl(\rho_\lambda(\sigma)\otimes\rho_\mu(\sigma)\bigr)\Bigr)\,.
    \end{aligned}
\end{equation}   

Since $\rho_\mu$ and $\rho_\lambda$ are two irreps, their composition can be re-expressed in terms of irreps $\rho_\nu$ through a similarity transformation mediated by the C matrices, which are commonly called Clebsch-Gordan:
\begin{equation}
    C_{\lambda\mu}^{-1}\cdot\left[\rho_\lambda\otimes\rho_\mu\right](\sigma)\cdot C_{\lambda\mu}=\bigoplus_{\nu}\bigoplus_{\ell=1}^{z_{\lambda\mu\nu}}\rho_\nu(\sigma)
\end{equation}
here $z_{\lambda\mu\nu}$ is a coefficient that controls the number of copies of the irrep $\rho_\nu$.

Defining for convenience $A_{\lambda \mu} \triangleq C_{\lambda \mu}^{-1} \cdot\left(\hat{f}_{\rho_\lambda} \otimes \hat{g}_{\rho_\mu}\right) \cdot C_{\lambda \mu}$, it can be proved --the reader is referred to the original study-- that the Fourier transform of the element-wise product of two functions $f(\sigma)$, $g(\sigma)$ is:
\begin{equation}
\label{KC}
    [\widehat{f g}]_{\rho_\nu}=\frac{1}{d_{\rho_\nu}|G|} \sum_{\lambda \mu} d_{\rho_\lambda} d_{\rho_\mu} \sum_{\ell=1}^{z_{\lambda \mu \nu}} A_{\lambda \mu}^{(\nu, \ell)}
\end{equation}
Notice how, compared to a convolution in input space which becomes an irrep-wise multiplication in Fourier space, this expression involves different irreps, and combines them through Clebsch-Gordan coefficients.

\subsubsection{Conditioning through twisted Fast Fourier transform}

    This section is dedicated to a second possibility for the Bayes update step, as in \cite{pmlr-v2-kondor07a}. This method, called twisted fast Fourier transform (FFT) is deeply rooted in the hierarchical structure of the Fourier transform, and avoids the evaluation of the complex Eq.\ref{KC}.

    Kronecker conditioning is the underlying structure needed to perform in Fourier space, what in input space is a point-wise product. A simplification can be engineered for likelihoods which are built on indicator functions like $\mathds{1}_{i,j} \equiv$ \{\emph{object i belongs to track j}\}:
    \begin{equation}
    \label{likelihood}
    P\left(O_{i\to j}|\sigma\right)=
        \begin{cases}
            \pi&\mathrm{if}\ \sigma(i)=j,\\
            (1-\pi)/(n-1)&\mathrm{if}\ \sigma(i)\neq j.
        \end{cases}
    \end{equation}
    the difference with a true indicator function is that the image is not $\{1,0\}$, but softer. Another way to see this, closer to the twisted FFT routine is through the introduction of stabilizers.

    Let $H$ be a subset of $S_n$, then it is possible to define:
    \begin{equation}
    \label{stabilizer}
        H_i=\left\{h \in S_n: h(i)=i\right\},
    \end{equation}
    hence $H_i$ acts as a stabilizer for the $i$-th element of the set on which a generic group element $\sigma\in S_n$ acts. $H_i$ is isomorphic to $S_{n-1}$ because considers the permutations that affect every set element apart from the i-th.

    The link back to $S_n$ can be found defining a \emph{coset} of H as:
    \begin{equation}
        g_j H_i=\left\{g_j \circ h: h \in H_i\right\} \subset S_n
    \end{equation}
    where $g_j$ are the group elements in $S_n$ (there are many of them) bringing the i-th element in the j-th position. 

    In \cite{pmlr-v2-kondor07a} the following formalism is introduced for the cycle $[[j,n]]$:
    \begin{equation}
        J_{j,n}\equiv[[j,n]](i)= \begin{cases}i+1 & \text { if } j \leq i \leq n-1 \\ j & \text { if } i=n \\ i & \text { otherwise }\end{cases}
    \end{equation}
    which is fixing the elements from 1 to $j-1$, and performing a shift to the right by 1 for every element from index j to index n, in a cyclic fashion. This means that the n-th element is brought to the j-th position, much like $g_j$ is doing, but with a specialization of eq.(\ref{stabilizer}) with $i$ set to the element $n$.

    It is now possible to write:
    \begin{equation}
    \label{1cycle}
        \sigma = J_{j,n} \sigma^\prime \ \ \ \ \ \ \ \ \ \ \ \ \ \ \ \ \ \ \text{with}\ \sigma\in S_n\ , \sigma^\prime \in S_{n-1}
    \end{equation}
    where $\sigma^\prime$ is leaving the n-th element of the set untouched.

    Finally, from the definition of group Fourier transform:
    \begin{equation}
    \widehat{f}(\rho_\lambda)=\sum_{\sigma \in S_n} f(\sigma) \rho_\lambda(\sigma) \quad \rho_\lambda \in \mathcal{R}
    \end{equation}
    and by eq.(\ref{1cycle}) it is possible to write:
    \begin{equation}
    \label{expansion1}
        \widehat{f}(\rho_\lambda)=\sum_{j=1}^n \rho_\lambda([[j,n]])\sum_{\sigma^\prime\in S_{n-1}}\rho_\lambda(\sigma^\prime)f_j(\sigma^\prime)
    \end{equation}
    where, with a compact notation:
    \begin{equation}
        f_j(\sigma^\prime) = f([[j,n]]\sigma^\prime)
    \end{equation}
    From Eq.\eqref{expansion1} it is possible to recognize the group Fourier transform over the $S_{n-1}$, labeled by the partitions $\lambda^-$ of $n-1$ obtained with the constraint that only one box has been removed from the original partition $\lambda$ of $S_n$:
    \begin{equation}
    \label{generalGZdec}
        \widehat{f}(\rho_\lambda)=\sum_{j=1}^n \rho_\lambda([[j,n]])\bigoplus_{\lambda^-}\widehat{f_j}(\rho_{\lambda^-})
    \end{equation}
    This decomposition is exact working in the Gel’fand-Tsetlin (GT) basis, otherwise some change of basis matrices would enter the above equation .

    Specializing this formalism to first order marginals it is possible to introduce two-sided cosets to describe a process where the action on the set elements can be qualitatively described as: "bring the set element i to n, then permute all the elements according to some group element belonging to a stabilizer for element n, then move the untouched set element n to j". This is a possible way of associating element i and position j accessing the subgroup structure of $S_n$. In a more formal language:
    \begin{equation}
    \label{dcoset}
        f_{i\rightarrow j} = f([[j,n]]\sigma^\prime[[i,n]]^{-1})
    \end{equation}
    Plugging this into eq.(\ref{generalGZdec}) it is possible to write the final expression for the twisted FFT update: 
    \begin{equation}
    \label{kondorTFFT}
    \widehat{f}(\rho_\lambda)=\sum_{j=1}^n \rho_\lambda([[j, n]])\left[\bigoplus_{\lambda^{-}} \widehat{f}_{i \rightarrow j}\left(\rho_{\lambda^{-}}\right)\right] \rho_\lambda\left([[i, n]] ^{-1}\right)
    \end{equation}
    
    In particular, since $\rho_{(n-1)}$ is the trivial representation of $S_{n-1}$,
    the corresponding Fourier coefficient is exactly the first-order marginal:
    \begin{equation}
        \widehat{f}_{i\rightarrow j}\bigl(\rho_{(n-1)}\bigr)
        = P\bigl(\sigma(i)=j\bigr).
    \end{equation}
    which means that it is possible to access the marginal probability $P(\sigma(i)=j)$ exploiting the double coset structure and the hierarchy of Fourier transforms for subsets of $S_n$ (in the GT basis).
    \newline
    This approach is very useful when the likelihood function is written over cosets, as for the application to object tracking.
    From \cite{pmlr-v2-kondor07a} the scaling of this approximate approach to Bayes update is bounded to $\mathcal{O}(D^2n)$, where $D^2$ is the dimension of the largest irrep block within the band limit.

\section{From Plancherel's theorem to the QFT operator}
\label{app:sec:plancherel_qft}
The formalism of Appendix \ref{app:representation_theory} can be further enriched and specialized through the following result, which then leads to an important implication on the normalization of the group Fourier transform:

\begin{theorem}(Plancherel's theorem)
    Let $L(G)$ be the vector space of complex functions $f:G\to \mathbb{C}$ that is equipped with the inner product between $f_1,f_2\in L(G)$:
    \begin{equation}
        \Braket{f_1|f_2} \coloneqq \frac{1}{|G|}\sum_{g\in G}\overline{f_1(g)} f_2(g).
    \end{equation}
    Let $\hat{G}\coloneqq (\rho_1,\rho_2,\dots, \rho_m)$ be the set of non-equivalent irreducible representations of $G$. 
    For any $f\in L(G)$, its Fourier transform $\hat{f}(\rho)\in \hat{L}(G)$ at a irreducible representation $\rho$ of $G$ is a $d_\rho \times d_\rho$ matrix, where $d_\rho = \dim\rho$. The Hilbert space $\hat{L}(G)$ is the direct sum of matrix algebras corresponding to the irreps of $G$. If $V_1,V_2,\dots, V_m$ are the vector spaces for the irreps of $G$, correspondingly, then  
    \begin{equation}
    \hat{L}(G)=\bigoplus_{i=1}^{m} \text{End}(V_i),    
    \end{equation}
    where $\text{End}(V_i)$ is the space of linear maps from $V_i$ to itself, i.e., the space of endomorphisms, which can be represented as the space of $d_i\times d_i$ matrices. $\hat{L}(G)$ is equipped with the inner product:
        \begin{equation}
        \langle\hat{f}_1|\hat{f}_2\rangle \coloneqq \frac{1}{|G|} \sum_{\rho\in \hat{G}} d_\rho \Tr \left(\hat{f}_1(\rho)^\dagger \cdot \hat{f_2}(\rho)\right).
    \end{equation}
    The Fourier Transform is an isometry, in the sense that it maps $L(G)\mapsto \hat{L}(G)$ while preserving the inner products of the corresponding Hilbert spaces:
    \begin{equation}
        \Braket{f_1|f_2} = \langle\hat{f}_1|\hat{f}_2\rangle.
    \end{equation}
\end{theorem}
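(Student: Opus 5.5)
The plan is to reduce both sides to a double sum over pairs of group elements and evaluate the Fourier side with the character orthogonality relations already used in App.~\ref{app:ps_proof_t0}. Throughout I take the irreps $\rho$ to be unitary, as the theorem in App.~\ref{app:representation_theory} allows.

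\textbf{Step 1: expand the Fourier side.} I substitute $\hat f_k(\rho)=\sum_{x}f_k(x)\rho(x)$ into $\sum_{\rho}d_\rho\Tr\bigl(\hat f_1(\rho)^\dagger\hat f_2(\rho)\bigr)$. Using $\rho(x)^\dagger=\rho(x^{-1})$ and the homomorphism property $\rho(x^{-1})\rho(y)=\rho(x^{-1}y)$, this becomes
\begin{equation*}
\sum_{x,y\in G}\overline{f_1(x)}\,f_2(y)\sum_{\rho\in\hat G}d_\rho\,\chi_\rho(x^{-1}y).
\end{equation*}

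\textbf{Step 2: collapse the inner sum.} The inner sum $\sum_\rho d_\rho\chi_\rho(g)$ is the regular character $\chi_{\text{reg}}(g)=|G|\,\delta_{g,e}$. This was already established in App.~\ref{app:ps_proof_t0} from the decomposition $\rho_{\text{reg}}\cong\bigoplus_\rho d_\rho\rho$. Equivalently, it is the column orthogonality relation \eqref{eq:character_ortho_relation_columns} with $h=e$. It therefore forces $x=y$ and gives the core identity
\begin{equation*}
\sum_{\rho\in\hat G}d_\rho\Tr\bigl(\hat f_1(\rho)^\dagger\hat f_2(\rho)\bigr)=|G|\sum_{g\in G}\overline{f_1(g)}f_2(g).
\end{equation*}
This is the analytic content of the theorem. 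It is also the same computation that makes the QFT $\mathcal F$ of Sec.~\ref{sec:qft_operator} unitary.

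\textbf{Step 3: match the normalizations (main obstacle).} The delicate part is making the prefactors agree with the inner products exactly as the statement defines them. I expect this to be the main obstacle. Dividing the core identity by $|G|^2$ gives
\begin{equation*}
\frac{1}{|G|}\sum_\rho d_\rho\Tr\bigl(\hat f_1^\dagger\hat f_2\bigr)=|G|\,\langle f_1|f_2\rangle .
\end{equation*}
So the stated equality holds precisely when each Fourier coefficient carries an extra factor $|G|^{-1/2}$. That factor is the normalization absorbed into the quantum-model coefficients (the $\sqrt{1/n!}$ part of the unitary prefactor in Eq.~\eqref{eq:quantum_model_fourier_time_t}). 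I will therefore state explicitly that $\hat f$ in $\hat L(G)$ is taken with this $1/\sqrt{|G|}$ scaling. With that convention the core identity gives $\langle f_1|f_2\rangle=\langle\hat f_1|\hat f_2\rangle$ directly.

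\textbf{Step 4: isometry and the structure of $\hat L(G)$.} Setting $f_1=f_2$ shows that the map is injective, hence an isometry. Surjectivity onto $\bigoplus_i\mathrm{End}(V_i)$ then follows by dimension count: $\sum_i d_i^2=|G|$, which the paper has already shown holds for all $n$ in App.~\ref{app:irrep_sn}.
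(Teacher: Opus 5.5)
The paper gives no proof of this theorem (it is quoted as a standard result), so there is no route of theirs to compare against. Your argument, expanding and then collapsing with $\sum_\rho d_\rho\chi_\rho(g)=|G|\,\delta_{g,e}$, is the standard proof and is correct.

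The valuable part is Step~3, and you should state it as a correction rather than a choice of convention. Take the paper's own transform $\hat f(\rho)=\sum_x f(x)\rho(x)$ from \eqref{eq:group_fourier_transform}; this is also the normalization under which the inverse formula \eqref{eq:inverse_group_fourier_transform} holds. With it, your core identity gives $\langle\hat f_1|\hat f_2\rangle=|G|\,\langle f_1|f_2\rangle$, so the theorem as printed is false. Concretely, take $G=\mathbb{S}_2$ and $f=\delta_e$. Then $\langle f|f\rangle=\tfrac12$, but $\hat f(\rho)=1$ for both irreps, so $\langle\hat f|\hat f\rangle=\tfrac12(1+1)=1$. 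The Parseval identity displayed right after the theorem is off by the same factor.

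There are two ways to repair the statement. Your rescaling $\hat f\mapsto|G|^{-1/2}\hat f$ works. So does the textbook form $\sum_g\overline{f_1(g)}f_2(g)=\frac{1}{|G|}\sum_\rho d_\rho\,\mathrm{Tr}\bigl(\hat f_1(\rho)^\dagger\hat f_2(\rho)\bigr)$, which leaves \eqref{eq:group_fourier_transform} unchanged.

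Your unitarity hypothesis is also genuinely needed, not just a harmless normalization. Step~1 uses $\rho(x)^\dagger=\rho(x^{-1})$. Moreover, the Fourier-side inner product is not invariant under a non-unitary change of basis of $\rho$, and the statement omits this hypothesis.

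Minor points:
\begin{itemize}
\item The display in Step~3 is the core identity divided by $|G|$, not by $|G|^2$.
\item In Step~4 the logic is reversed. The inner-product preservation from Step~3 \emph{is} the isometry, and injectivity follows from it ($\hat f=0$ forces $\|f\|=0$), not the other way round.
\item For surjectivity, which the statement does not actually claim, note that the theorem concerns an arbitrary finite $G$. The $\mathbb{S}_n$ appendix only asserts $\sum_\lambda d_\lambda^2=n!$. Derive $\sum_\rho d_\rho^2=\chi_{\mathrm{reg}}(e)=|G|$ instead, from the same regular-character fact you already used in Step~2.
\end{itemize}
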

We call $L(G)$ direct space and $\hat{L}(G)$ Fourier space. Here, Plancherel's theorem is stated for finite groups but it also holds for locally compact (continuous) groups. The space of all functions $L(G)$ can be identified as the underlying vector space that corresponds to the regular representation of $G$ which the Fourier transform block-diagonalizes.  The above theorem is essentially the group-theoretic analogue of the Plancherel theorem in Fourier analysis, relating the ``energy" of a function to the ``energy" of its spectral components. Instead of ``energy'' conservation and distribution over frequencies, we can analogously think of ``probability''.

A special case of Plancherel's theorem for $f_1=f_2=f$ is the so-called \textit{Parseval's identity}, which relates the norms of the two spaces:
\begin{align}
    &\norm{f}_G^2 = \norm{\hat{f}}^2_{\hat{G}} \nonumber\Leftrightarrow \frac{1}{|G|}\sum_{g\in G}|f(g)|^2=\frac{1}{|G|}\sum_{\rho\in \hat{G}} d_\rho \Tr\left(\hat{f}(\rho)^\dagger\cdot\hat{f}(\rho)\right).
\end{align}
While an isometric map, the Fourier transform as defined in the theorem is not represented by a unitary operator, unless we use the unitary normalization discussed above. Then Parseval's identity becomes 
\begin{equation}
    \sum_{g\in G}|f(g)|^2=\sum_{\rho\in \hat{G}} \Tr\left(\hat{f}(\rho)^\dagger\cdot\hat{f}(\rho)\right)\equiv \sum_{\rho \in \hat{G}}||\hat{f}(\rho)||^2_\text{HS}
\end{equation}

In quantum computation the QFT operator is unitary:
\begin{equation}
    \mathcal{F} = \sum_\sigma\sum_{\lambda\vdash n}\sum_{i,j=0}^{d_\lambda-1}\sqrt{\frac{d_\lambda}{n!}}\left[\rho_\lambda(\sigma)\right]_{ij}\ket{\lambda ij}\bra{\sigma}.
\end{equation}
Therefore, to stay consistent with the unitarity of the QFT we normalize accordingly the FTs of $h$ and $q$ absorbing the necessary prefactors and the convolution in Fourier space takes the form of eq.~\eqref{eq:conv_fourier_space_unitary}. 
\paragraph*{Fourier spectrum implications.} When we choose unitary normalization, the Fourier coefficients are scaled by the dimension of a given irrep. This means that the Fourier coefficients $\hat{h}_{\rho_\lambda}$ that correspond to irreps of large dimensions---those with Young diagrams of roughly the same number of rows and columns---the dimensions of which grow very rapidly with $n$, are scaled by a large number $\sqrt{d_\lambda}$. \textit{Does that mean that statements about the spectrum such as low-frequency concentration are normalization dependent?} On the one hand, the absolute values are indeed scaled, inflated, and hence the spectrum is skewed towards irreps of high dimensionality, i.e. for $\lambda = (n-r, r)$ for $r\approx n/2$, which are far from the ``tails'' of the $\lambda$ partitions. On the other hand, when one uses the unitary normalization it is more natural to think of units of ``energy densities'' rather than the absolute values of the Fourier coefficients. In the unitary normalization (eq.~\eqref{eq:conv_fourier_space_unitary}), the convolution theorem ensures that the ``inflation'' factors $\sqrt{d_\lambda}$ cancel. As a result, the diffusion step simplifies to depend only on the intrinsic (normalized) Fourier spectrum of $q$.

\end{document}